\numberwithin{equation}{section}
\newtheorem{theorem}{Theorem}[section]
\newtheorem{corollary}[theorem]{Corollary}
\newtheorem{proposition}[theorem]{Proposition}
\newtheorem{lemma}[theorem]{Lemma}
\theoremstyle{definition}
\newtheorem*{remark}{Remark}
\renewcommand{\leq}{\leqslant}
\renewcommand{\geq}{\geqslant}
\newcommand{\GEQ}{\hspace{-3pt}\geq\hspace{-3pt}}
\newcommand{\LE}{\hspace{-3pt}<\hspace{-3pt}}
\renewcommand{\Re}{\textup{Re}}
\renewcommand{\Im}{\textup{Im}}
\newcommand{\dup}{\mathrm{d}}
\newcommand{\eup}{\mathrm{e}}
\newcommand{\iup}{\hspace{1pt}\mathrm{i}\hspace{1pt}}
\newcommand{\Int}{\int\limits}
\newcommand{\bla}{\boldsymbol{\la}}
\newcommand{\bmu}{\boldsymbol{\mu}}
\newcommand{\bnu}{\boldsymbol{\nu}}
\newcommand{\lar}[1]{\lambda^{(#1)}}
\newcommand{\nur}[1]{\nu^{(#1)}}
\newcommand{\mur}[1]{\mu^{(#1)}}
\newcommand{\tar}[1]{t^{(#1)}}
\newcommand{\Symm}{\mathfrak{S}}
\newcommand{\abs}[1]{\lvert#1\rvert}
\newcommand{\Abs}[1]{\big\lvert#1\big\rvert}
\newcommand{\la}{\lambda}
\newcommand{\La}{\Lambda}
\newcommand{\obinomE}{\genfrac\langle\rangle{0pt}{}}
\newcommand{\spec}[1]{\langle #1\rangle}
\newcommand{\A}{\mathrm A}
\DeclareMathOperator{\interior}{int}
\DeclareMathOperator{\exterior}{ext}
\begin{document}

\title[Selberg integrals]{AFLT-type Selberg integrals}

\author{Seamus P. Albion}
\address{Fakult\"at f\"ur Mathematik,
Universit\"at Wien, Oskar-Morgenstern-Platz 1, A-1090, Vienna,
Austria}
\email{seamus.albion@univie.ac.at}
%https://orcid.org/0000-0002-8930-3109

\author{Eric M. Rains}
\address{Department of Mathematics, 
California Institute of Technology,
Pasadena, CA 91125, USA}
\email{rains@caltech.edu}

\author{S. Ole Warnaar}
\address{School of Mathematics and Physics,
The University of Queensland, Brisbane, QLD 4072,
Australia}
\email{o.warnaar@maths.uq.edu.au}
%https://orcid.org/0000-0002-9786-0175

\dedicatory{Dedicated to the memory of Richard (Dick) Askey.}

\thanks{Work supported by the Australian Research Council Discovery
Grant DP170102648.}

\subjclass[2010]{05E05, 05E10, 30E20, 33D05, 33D52, 33D67, 81T40}

\begin{abstract}
In their 2011 paper on the AGT conjecture, Alba, Fateev, Litvinov 
and Tarnopolsky (AFLT) obtained a closed-form evaluation for a Selberg
integral over the product of two Jack polynomials, thereby unifying 
the well-known Kadell and Hua--Kadell integrals.
In this paper we use a variety of symmetric functions and symmetric function
techniques to prove generalisations of the AFLT integral.
These include 
(i) an $\A_n$ analogue of the AFLT integral, containing two Jack 
polynomials in the integrand;
(ii) a generalisation of (i) for $\gamma=1$ (the Schur or GUE case),
containing a product of $n+1$ Schur functions; 
(iii) an elliptic generalisation of the AFLT integral in which the role
of the Jack polynomials is played by a pair of elliptic interpolation 
functions;
(iv) an AFLT integral for Macdonald polynomials.
\smallskip

\noindent\textbf{Keywords:} AGT conjecture, (complex) Schur functions,
elliptic beta integrals, elliptic interpolation functions, Jack polynomials,
Macdonald polynomials, Selberg integrals.
\end{abstract}

\maketitle

\section{Introduction}
In 2010 Alday, Gaiotto and Tachikawa \cite{AGT10} conjectured a deep 
relationship between $\mathcal{N}=2$ superconformal field theory  
in four dimensions and Liouville conformal field theory on a punctured
Riemann surface.
Their correspondence provides a dictionary
between correlation functions in Liouville field theory 
\cite{Nakayama04,Teschner01} and the Nekrasov partition function 
in $\mathcal{N}=2$ superconformal field theory~\cite{Nekrasov04,NO06}.
One entry of this dictionary relates the instanton part of the Nekrasov 
partition function to conformal blocks in Liouville field theory.
This relationship allowed Alday et al. to derive an explicit
combinatorial expansion for conformal blocks.

One particularly promising approach to the AGT conjecture was developed
by Alba, Fateev, Litvinov and Tarnopolsky~\cite{AFLT11}.
Let $\textrm{Vir}$ and $\mathscr{A}$ denote the Virasoro and Heisenberg
algebras respectively.
Then Alba et al.\ considered representations $L(P,Q)$ of central charge and 
conformal dimension
\begin{equation}\label{Eq_central-charge}
c=1+6Q^2 \quad\text{and}\quad \Delta(P)=Q^2/4-P^2 
\end{equation} 
of $\textrm{Vir}\oplus\mathscr{A}$, and showed that $L(P,Q)$ has a unique
orthogonal basis $\{\ket{P_{\bla}}\}$ indexed by bipartitions $\bla$, 
such that in this basis the matrix element between $L(P,Q)$ and $L(P',Q)$
corresponding to the primary field indexed by $\alpha$ coincides with 
\[
Z_{\textrm{bifund}}\big((P',-P'),\bla;(P,-P),\bmu;\alpha\big).
\]
Here $Z_{\textrm{bifund}}$ is the key building block --- corresponding to
the `bifundamental hypermultiplet' --- of the instanton part of the Nekrasov
partition function, which admits the following explicit combinatorial 
expression \cite{FP03,FMP04,Shadchin06}
\begin{multline}\label{Eq_Zbifund}
Z_{\textrm{bifund}}\big((u_1,u_2),\bla;(v_1,v_2),\bmu;m\big)
=\prod_{i,j=1}^2 \bigg(\,\prod_{s\in \lar{i}}
\Big(E\big(u_i-v_j,\lar{i},\mur{j},s\big)-m\Big) \\
\times \prod_{s\in \mur{j}}
\Big(Q-m-E\big(v_j-u_i,\mur{j},\lar{i},s\big)\Big)\bigg),
\end{multline}
where $\bla=(\lar{1},\lar{2})$, $\bmu=(\mur{1},\mur{2})$
and, for $s=(i,j)\in \la$,
\begin{equation}\label{Eq_E}
E(u,\la,\mu,s)=u-b(\mu'_j-i)+b^{-1}(\la_i-j+1)
\end{equation}
with $Q$ in \eqref{Eq_central-charge} parametrised as $Q=b+b^{-1}$.
In the above $\mu'$ is the conjugate of the partition $\mu$, so that 
$\mu'_j-i$ and $\la_i-j$ may be recognised as the (generalised) leg-length
$l_{\mu}(s)$ and arm-length $a_{\la}(s)$ of the square $s\in\la$,
see \eqref{Eq_arm}.
Lifting the isomorphism \cite{FF82,TK86} between Verma modules for 
$\mathrm{Vir}$ and Fock space representations of $\mathscr{A}$ to the level
of $L(P,Q)$, Alba et al.\ obtained a closed-form expression in 
the spirit of \cite{MY95,SSAFR05,Yanagida11} for the states 
$\ket P_{\la,0}$ in terms of Jack polynomials $P_{\la}^{(-1/b^2)}$.
The more general states $\ket P_{\bla}$ then follow recursively
from $\ket P_{\la,0}$.

For $\mathscr{O}$ a symmetric Laurent polynomial in $k$ variables and
$\alpha,\beta,\gamma\in\mathbb{C}$, define the Selberg average of 
$\mathscr{O}$ as
\[
\big\langle \mathscr{O}\big\rangle_{\alpha,\beta;\gamma}^k
:=\frac{1}{S_k(\alpha,\beta;\gamma)} 
\Int_{[0,1]^k}\! \mathscr{O}(t_1,\dots,t_k)
\prod_{i=1}^k t_i^{\alpha-1}(1-t_i)^{\beta-1}
\prod_{1\leq i<j\leq k}\abs{t_i-t_j}^{2\gamma} \,
\dup t_1\cdots \dup t_k,
\]
where the normalisation $S_k(\alpha,\beta;\gamma)$ is given by the
classical Selberg integral \cite{Forrester10,FW08,Selberg44,TV19}
\begin{align}\label{Eq_Selberg}
S_k(\alpha,\beta;\gamma)&:=
\Int_{[0,1]^k}\! \prod_{i=1}^k t_i^{\alpha-1}(1-t_i)^{\beta-1}
\prod_{1\leq i<j\leq k}\abs{t_i-t_j}^{2\gamma}
\,\dup t_1\cdots \dup t_k \\
&\hphantom{:}=\prod_{i=1}^k
\frac{\Gamma(\beta+(i-1)\gamma)\Gamma(\alpha+(i-1)\gamma)\Gamma(1+i\gamma)}
{\Gamma(\alpha+\beta+(2k-i-1)\gamma)\Gamma(1+\gamma)}, \notag 
\end{align}
for $\mathrm{Re}(\alpha)>0$, $\mathrm{Re}(\beta)>0$ and
\[
\mathrm{Re}(\gamma)>-\min\{1/k,\mathrm{Re}(\alpha)/(k-1),
\mathrm{Re}(\beta)/(k-1)\}.
\]
A crucial step taken by Alba et al.\ was to show that
for $P+P'+\alpha+kb=0$ 
\begin{multline}\label{Eq_Z-Selb}
Z_{\textrm{bifund}}\big((P',-P'),(\la,0);(P,-P),(\mu,0);\alpha\big) \\[1mm]
=\kappa_{\la}(P)\kappa_{\mu}(P') \Big\langle 
P_{\la}^{(-1/b^2)}\big(t^{-1}\big)
P_{\mu}^{(-1/b^2)}\big[t+(2b\alpha-1-b^2)/b^2\big]
\Big\rangle_{1-b(Q+2P),1-2b\alpha;-b^2}^k.
\end{multline}
In the above,
\[
\kappa_{\la}(P):=
\prod_{(i,j)\in\la} \Big(\big(b(i-\la'_j-1)+b^{-1}(\la_i-j)\big)
\big(2P+bi+b^{-1}j\big)\Big),
\]
$f(t^{-1}):=f(t_1^{-1},\dots,t_k^{-1})$ and, for $f$ a symmetric function
which expands in terms of the power sums $p_{\la}$ as
$f=\sum_{\la} c_{\la} p_{\la}$, $f[t+z]$ is plethystic notation
(see Section~\ref{Sec_plethystic} for details) for 
\[
\sum_{\la} c_{\la} \prod_{i=1}^{l(\la)}
\Big( p_{\la_i}(t_1,\dots,t_k)+z \Big).
\]
The verification of \eqref{Eq_Z-Selb} boils down to computing 
the Selberg average
\[
\Big\langle P^{(1/\gamma)}_{\la}\big(t^{-1}\big) 
P^{(1/\gamma)}_{\mu}\big[t+\beta/\gamma-1\big] 
\Big\rangle_{\alpha,\beta;\gamma}^k
\]
and comparing this against the explicit form of $Z_{\textrm{bifund}}$
provided by \eqref{Eq_Zbifund}.
By the complementation symmetry
\[
P^{(1/\gamma)}_{\la}(t^{-1})=
(t_1\cdots t_k)^{-N} P^{(1/\gamma)}_{(N-\la_k,\dots,N-\la_1)}(t),
\]
for $\la\in\mathscr{P}_k$ (the set of partitions of length at most $k$) 
and $N$ an arbitrary integer such that $N\geq \la_1$, this is achieved by 
the following integral evaluation~\cite{AFLT11} (see also \cite{MMSS12}).
Let $\mathscr{P}$ denote the set of partitions.

\begin{theorem}[AFLT integral]\label{Thm_AFLT}
Let $k$ be a positive integer, $\la,\mu\in\mathscr{P}$ 
and $\alpha,\beta,\gamma\in\mathbb{C}$.
Then
\begin{align}\label{Eq_AFLT}
&\Int_{[0,1]^k}\!
P_{\la}^{(1/\gamma)}[t]P_{\mu}^{(1/\gamma)}[t+\beta/\gamma-1]\,
\prod_{i=1}^k t_i^{\alpha-1}(1-t_i)^{\beta-1}
\prod_{1\leq i<j\leq k} \abs{t_i-t_j}^{2\gamma}
\,\dup{t_1}\cdots \dup t_k \\ 
\notag
&\quad=P_{\la}^{(1/\gamma)}[k]P_{\mu}^{(1/\gamma)}[k+\beta/\gamma-1]
\prod_{i=1}^k
\frac{\Gamma(\beta+(i-1)\gamma)\Gamma(\alpha+(k-i)\gamma+\la_i)
\Gamma(1+i\gamma)}
{\Gamma(\alpha+\beta+(2k-m-i-1)\gamma+\la_i)\Gamma(1+\gamma)} \\\notag
&\qquad\times\prod_{i=1}^k\prod_{j=1}^m
\frac{\Gamma(\alpha+\beta+(2k-i-j-1)\gamma+\la_i+\mu_j)}
{\Gamma(\alpha+\beta+(2k-i-j)\gamma+\la_i+\mu_j)},
\end{align}
where $m$ is an arbitrary integer such that $m\geq l(\mu)$, and
\[
\Re(\alpha)>-\la_k,\quad \Re(\beta)>0,\quad
\Re(\gamma)>-\min_{1\leq i\leq k-1}
\bigg\{\frac{1}{k},\frac{\Re(\alpha)+\la_i}{k-i},\frac{\Re(\beta)}{k-1}
\bigg\}.
\]
\end{theorem}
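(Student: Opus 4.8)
The plan is to first unwind the plethystic shift, then linearise the product of the two Jack polynomials by a Cauchy identity so as to reduce the whole evaluation to a single-Jack, Kaneko--Kadell type Selberg average.

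First I would record the plethystic reformulation, since it clarifies the entire structure. The substitution $p_r\mapsto p_r+c$ defining $f[t+c]$ is precisely evaluation on the augmented alphabet $(t_1,\dots,t_k,1^{c})$: the alphabet with all power sums equal to $c$ has generating function $\sum_n h_n u^n=(1-u)^{-c}$, i.e.\ $h_n=\binom{c+n-1}{n}=h_n(1^c)$, so adding the constant $c$ plethystically means adjoining $c$ ``ones''. Hence the integrand is
\[
P_\lambda^{(1/\gamma)}(t_1,\dots,t_k)\,
P_\mu^{(1/\gamma)}\big(t_1,\dots,t_k,1^{\beta/\gamma-1}\big),
\]
and the right-hand side specialisations $P_\mu^{(1/\gamma)}[k+\beta/\gamma-1]=P_\mu^{(1/\gamma)}(1^{k+\beta/\gamma-1})$ are ordinary principal specialisations. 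In particular the case $\mu=\emptyset$ is Kadell's integral and the case $\lambda=\emptyset$ is the Hua--Kadell integral; both I would take as established base cases, so that the genuinely new content is the simultaneous presence of the two Jack polynomials.

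The main step is to remove the product of two Jack polynomials. I would introduce auxiliary variables $y=(y_1,y_2,\dots)$ and apply the Jack Cauchy identity $\sum_\lambda P_\lambda^{(1/\gamma)}(t)\,Q_\lambda^{(1/\gamma)}(y)=\prod_{i,a}(1-t_iy_a)^{-1}$ (with $Q_\lambda=b_\lambda P_\lambda$) to the generating function $\sum_\lambda Q_\lambda^{(1/\gamma)}(y)\,\big\langle P_\lambda^{(1/\gamma)}[t]\,P_\mu^{(1/\gamma)}[t+\beta/\gamma-1]\big\rangle_{\alpha,\beta;\gamma}^k$. Since the factor $Q_\lambda(y)$ is inert under the $t$-average, this collapses the sum to the Selberg average of the single Jack polynomial $P_\mu^{(1/\gamma)}(t,1^{\beta/\gamma-1})$ against the Cauchy kernel $\prod_{i,a}(1-t_iy_a)^{-1}$, that is, a Kaneko-type integral whose value is a Jack-hypergeometric series in $y$. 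Reading off the coefficient of $Q_\lambda^{(1/\gamma)}(y)$ then produces a candidate closed form, which I would match term by term against the stated product of Gamma functions.

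The hard part will be the evaluation of this single-Jack Kaneko integral together with the coefficient extraction: this is where the cross terms $\prod_{i,j}\Gamma(\alpha+\beta+(2k-i-j-1)\gamma+\lambda_i+\mu_j)/\Gamma(\alpha+\beta+(2k-i-j)\gamma+\lambda_i+\mu_j)$ must be generated, and it requires a nontrivial binomial / connection-coefficient identity for Jack polynomials (in the spirit of the generalised binomial theorem of Lassalle and Okounkov) to collapse the resulting sum over intermediate partitions into a single product. An alternative route that bypasses the Cauchy step is induction on $k$ via the Dixon--Anderson integral, integrating out one variable and controlling the two Jack polynomials through their branching rules under restriction of the alphabet; the same collapsing identity reappears there as the inductive step, so I regard it as the true crux either way. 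Finally I would fix the domain of validity by analytic continuation from the region where all Gamma arguments have positive real part and the integral converges absolutely.
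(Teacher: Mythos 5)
There is a genuine gap at the step you yourself identify as the crux. After applying the Cauchy identity (whose kernel for Jack parameter $1/\gamma$ is $\prod_{i,a}(1-t_iy_a)^{-\gamma}$, not $\prod_{i,a}(1-t_iy_a)^{-1}$), you arrive at the average
$\big\langle \prod_{i,a}(1-t_iy_a)^{-\gamma}\,P_{\mu}^{(1/\gamma)}[t+\beta/\gamma-1]\big\rangle_{\alpha,\beta;\gamma}^{k}$.
This is \emph{not} a Kaneko integral with a known evaluation: the insertion of the second Jack polynomial is exactly what makes it an AFLT-type average, so the reduction is circular. Equally, the generating function $\sum_{\la}Q_{\la}^{(1/\gamma)}(y)\,R_{\la}$ of the claimed right-hand sides has no independently known closed form against which to match coefficients. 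The ``nontrivial binomial / connection-coefficient identity'' you defer is therefore not an auxiliary lemma but is equivalent to the theorem itself, and nothing in the proposal supplies it. The same remark applies to your Dixon--Anderson alternative: collapsing the branching sum for the pair of Jack polynomials is the entire content of the argument (this is in fact how \cite{AFLT11} proceed, with the Okounkov--Olshanski integral formula as the key input), and you have not exhibited the required identity there either.

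For contrast, the paper obtains Theorem~\ref{Thm_AFLT} as the $n=1$ case of Theorem~\ref{Thm_An-AFLT}, working at the Macdonald level. The skew Cauchy identity is specialised at the spectral vectors $\spec{\la}_{k}$ and $bt^{1-m}\spec{\mu}_m$, and the evaluation symmetries \eqref{Eq_eval-sym} and \eqref{Eq_gen-eval-sym} are used to exchange the roles of the summation index and of $\la,\mu$; the resulting sum is a Jackson-integral (discrete) analogue of \eqref{Eq_AFLT} in which both sides are manifestly products, and the $q\to 1^{-}$ limit then yields the theorem. The evaluation symmetry is precisely the ingredient that replaces your missing collapsing identity; without it, or an equivalent (e.g.\ the Okounkov--Olshanski formula driving the Anderson recursion), the proposal does not constitute a proof.
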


The arguments of the Jack polynomials on the right-hand side are again
expressed using plethystic notation, see Section~\ref{Sec_plethystic} for
details.
Alternatively, both $P_{\la}^{(1/\gamma)}[k]$ and
$P_{\mu}^{(1/\gamma)}[k+\beta/\gamma-1]$ may be written in fully factorised
form by equation \eqref{Eq_Jack-z} below.
When $\mu=0$ the AFLT integral simplifies to Kadell's integral
\cite{Kadell97} (see also \cite{Macdonald87}) and 
for $\beta=\gamma$ it yields the Hua--Kadell integral~\cite{Hua79,Kadell93}.
Theorem~\ref{Thm_AFLT} is proved in \cite{AFLT11} by generalising the
Anderson-style recursive proof of Kadell's integral given in \cite{Warnaar08a}.
Key input in both these proofs is the Okounkov--Olshanski integral 
formula for Jack polynomials \cite{Okounkov98a,OO97}.

In the conclusion to their paper \cite{AFLT11}, Alba et al.\ remark that 
the generalisation of their construction to $\mathrm{WA}_n$ 
requires a generalisation of the $\A_n$ Selberg integral of \cite{Warnaar09}
with two Jack polynomials included in the integrand.\footnote{See also
\cite{CPT20,IOY13} for the relation between $\A_n$ Selberg integrals and
the AGT conjecture.}
For $\A_2$ such an AFLT-type integral was considered by Fateev and Litvinov
in \cite{FL12}, where they again used a recursion based on the 
Okounkov--Olshanski integral to obtain a closed-form evaluation.
They also claim a more general $\A_n$ AFLT integral, but the evaluation
of this integral is only implicit and, unfortunately, the stated recursion
relation that needs to be solved to obtain the evaluation is incorrect
for $n\geq 3$ (see also \eqref{Eq_recR} in Section~\ref{Sec_open}).
 
The first main result of the present paper is the evaluation of 
the $\A_n$ AFLT integral for all $n$.
Before stating this result we introduce some notation.
For $k,\ell$ nonnegative integers and $t=(t_1,\dots,t_k)$, 
$s=(s_1,\dots,s_{\ell})$ we define the Vandermonde products
$\Delta(t)$ and $\Delta(t,s)$ as
\begin{equation}\label{Eq_Vandermonde}
\Delta(t):=\prod_{1\leq i<j\leq k} (t_i-t_j)\quad\text{and}\quad
\Delta(t,s):=\prod_{i=1}^k\prod_{j=1}^{\ell}(t_i-s_j).
\end{equation}
We also abbreviate $\dup t_1\cdots \dup t_k$ as $\dup t$.

For $n$ a positive integer, let $k_1,\dots,k_n$ be integers such that
$0\leq k_1\leq\cdots\leq k_n$ and let $\tar{1},\dots,\tar{n}$ 
be a sets of variables (or alphabets) such that $\tar{r}$ has
cardinality $k_r$.
Further let $\alpha_1,\dots,\alpha_n,\beta\in\mathbb{C}$ such that
\begin{subequations}\label{Eq_conditions}
\begin{gather}\label{Eq_conditions-a}
\Re(\beta)>0,\quad
\abs{\Re(\gamma)}<\frac{1}{k_n}, \quad
\Re\big(\beta+(k_n-1)\gamma\big)>0, \\[2mm]
\Re\big(\alpha_r+\dots+\alpha_s+(r-s+i-1)\gamma\big)>0
\quad\text{for $1\leq r\leq s\leq n$ and $1\leq i\leq k_r-k_{r-1}$}, 
\end{gather}
\end{subequations}
where $k_0:=0$.\footnote{The condition $\Re(\gamma)<1/k_n$ may be dropped
when $n=1$.}
We then define the $\A_n$ Selberg average of a polynomial
$\mathscr{O}(\tar{1},\dots,\tar{n})$, symmetric in each of the alphabets
$\tar{r}$, as
\begin{equation}\label{Eq_An-Selberg-average}
\big\langle \mathscr{O}\big\rangle_{\alpha_1,\dots,\alpha_n,\beta;\gamma}
^{k_1,\dots,k_n}:=\frac{I^{\A_n}_{k_1,\dots,k_n}
(\mathscr{O};\alpha_1,\dots,\alpha_n,\beta;\gamma)}
{I^{\A_n}_{k_1,\dots,k_n}(1;\alpha_1,\dots,\alpha_n,\beta;\gamma)},
\end{equation}
where 
\begin{align}\label{Eq_I-An}
I^{\A_n}_{k_1,\dots,k_n}&(\mathscr{O};\alpha_1,\dots,\alpha_n,\beta;\gamma) \\
&:=\Int_{C_{\gamma}^{k_1,\dots,k_n}[0,1]}
\mathscr{O}\big(\tar{1},\dots,\tar{n}\big)
\prod_{r=1}^n \prod_{i=1}^{k_r}\big(\tar{r}_i\big)^{\alpha_r-1}
\big(1-\tar{r}_i\big)^{\beta_r-1} \notag \\[-1mm]
&\qquad\qquad\qquad\quad\times
\prod_{r=1}^n \Abs{\Delta\big(\tar{r}\big)}^{2\gamma}
\prod_{r=1}^{n-1} \Abs{\Delta\big(\tar{r},\tar{r+1}\big)}^{-\gamma}\, 
\dup\tar{1}\cdots\dup\tar{n}. \notag
\end{align}
Here
\begin{equation}\label{Eq_beta-r}
\beta_1=\dots=\beta_{n-1}:=1, \quad \beta_n:=\beta 
\end{equation}
and 
$C_{\gamma}^{k_1,\dots,k_n}[0,1]$ is a real domain
of integration described in Section~\ref{Sec_chain}.
The normalisation 
$I^{\A_n}_{k_1,\dots,k_n}(1;\alpha_1,\dots,\alpha_n,\beta;\gamma)$ in 
\eqref{Eq_An-Selberg-average} is the $\A_n$ Selberg integral of 
\cite[Theorem~1.2]{Warnaar09} (see also \cite[Theorem~3.3]{TV03} 
for the $\A_2$ case), which admits the evaluation
\begin{align}\label{Eq_An-Selberg}
&I^{\A_n}_{k_1,\dots,k_n}(1;\alpha_1,\dots,\alpha_n,\beta;\gamma) \\
&\quad\quad=\prod_{r=1}^n\prod_{i=1}^{k_r}
\frac{\Gamma(\beta_r+(i-k_{r+1}-1)\gamma)\Gamma(i\gamma)}
{\Gamma(\gamma)} \notag \\
&\quad\qquad\times \prod_{1\leq r\leq s\leq n} \prod_{i=1}^{k_r-k_{r-1}} 
\frac{\Gamma(\alpha_r+\dots+\alpha_s+(r-s+i-1)\gamma)}
{\Gamma(\alpha_r+\dots+\alpha_s+\beta_s+(k_s-k_{s+1}+r-s+i-2)\gamma)},
\notag 
\end{align}
where, once again, $k_0=k_{n+1}:=0$.

For $n$ a nonnegative integer let $(a)_n:=\Gamma(a+n)/\Gamma(a)=
a(a+1)\cdots(a+n-1)$ denote the Pochhammer symbol,
and let $\delta_{u,v}$ be the usual Kronecker delta.
Then the $\A_n$ analogue of the AFLT integral is given by the
following identity for the Selberg average of the product of
two Jack polynomials.

\begin{theorem}[$\A_n$ AFLT integral]\label{Thm_An-AFLT}
For $n$ a positive integer, let $k_1,\dots,k_n$ be integers 
such that $0\leq k_1\leq\cdots\leq k_n$.
Then for $\alpha_1,\dots,\alpha_n,\beta,\gamma\in\mathbb{C}$
such that \eqref{Eq_conditions} holds and 
$\la,\mu\in\mathscr{P}$, we have
\begin{align}\label{Eq_An-AFLT}
&\Big\langle P_{\la}^{(1/\gamma)}\big[\tar{1}\big]
P_{\mu}^{(1/\gamma)}\big[\tar{n}+\beta/\gamma-1\big]
\Big\rangle_{\alpha_1,\dots,\alpha_n,\beta;\gamma}^{k_1,\dots,k_n} \\[2mm]
&\qquad=P_{\la}^{(1/\gamma)}[k_1]
P_{\mu}^{(1/\gamma)}[k_n+\beta/\gamma-1] \notag \\
&\qquad\quad\times
\prod_{r=1}^n \prod_{i=1}^{\ell}
\frac{(\alpha_1+\dots+\alpha_r+(k_1-r-i+1)\gamma)_{\la_i}}
{(\alpha_1+\dots+\alpha_r+\beta_r+
(k_1+k_r-k_{r+1}-r-m\delta_{r,n}-i)\gamma)_{\la_i}} \notag \\
&\qquad\quad\times
\prod_{r=1}^n\prod_{j=1}^m
\frac{(\alpha_r+\dots+\alpha_n+\beta+(k_n+r-n-j-1)\gamma)_{\mu_j}}
{(\alpha_r+\dots+\alpha_n+\beta+(k_r-k_{r-1}+k_n+r-n-\ell\delta_{r,1}-j-1)
\gamma)_{\mu_j}}\notag \\
&\qquad\quad\times
\prod_{i=1}^{\ell}\prod_{j=1}^m
\frac{(\alpha_1+\dots+\alpha_n+\beta+(k_1+k_n-n-i-j)\gamma)_{\la_i+\mu_j}}
{(\alpha_1+\dots+\alpha_n+\beta+(k_1+k_n-n-i-j+1)\gamma)_{\la_i+\mu_j}}.
\notag
\end{align}
In the expression on the right, $\ell$ and $m$ are arbitrary integers such
that $\ell\geq l(\la)$, $m\geq l(\mu)$, $k_0=k_{n+1}:=0$ and the $\beta_r$
are as in \eqref{Eq_beta-r}.
\end{theorem}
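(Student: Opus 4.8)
The plan is to prove \eqref{Eq_An-AFLT} by induction on $n$, integrating out one alphabet at a time from the left, with the base case $n=1$ being the AFLT integral of Theorem~\ref{Thm_AFLT}. The chain structure of the $\A_n$ Selberg density---each alphabet $\tar{r}$ couples only to its neighbour through the factor $\Abs{\Delta(\tar{r},\tar{r+1})}^{-\gamma}$, and carries the trivial boundary weight $\beta_r=1$ for $r<n$---means that $I^{\A_n}_{k_1,\dots,k_n}$ can be computed as an iterated one-dimensional integral, exactly as in the proof of the $\A_n$ Selberg integral \eqref{Eq_An-Selberg}. The two Jack polynomials sit at the opposite ends of the chain: $P_\la^{(1/\gamma)}[\tar{1}]$ on the smallest alphabet and $P_\mu^{(1/\gamma)}[\tar{n}+\beta/\gamma-1]$ on the largest. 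The shape of the right-hand side dictates the strategy. The $\la$-product involves the left partial sums $\alpha_1+\dots+\alpha_r$, the $\mu$-product the right partial sums $\alpha_r+\dots+\alpha_n+\beta$, while the specialisation $P_\la^{(1/\gamma)}[k_1]$ depends only on the size $k_1$ of the first alphabet. This tells us that $\la$ should be \emph{evaluated} at the very first integration and thereafter survive only as a shift of the remaining parameters.

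Concretely, I would first integrate out $\tar{1}$ while $P_\mu^{(1/\gamma)}[\tar{n}+\beta/\gamma-1]$ rides along as a spectator. The required evaluation is a Jack-enriched Dixon--Anderson integral: since $\beta_1=1$, one integrates $P_\la^{(1/\gamma)}[\tar{1}]\prod_i(\tar{1}_i)^{\alpha_1-1}\,\Abs{\Delta(\tar{1})}^{2\gamma}\Abs{\Delta(\tar{1},\tar{2})}^{-\gamma}$ over the relevant chain domain. I expect this to produce the scalar $P_\la^{(1/\gamma)}[k_1]$, the $r=1$ slab of the $\la$-product, the corresponding Selberg normalisation factors, and---crucially---a weight on $\tar{2}$ whose exponents are displaced by the parts $\la_i$, exactly as in Kadell's integral the gamma arguments carry the shifts $\alpha+(k-i)\gamma+\la_i$. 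As in the $n=1$ case, the key input for this step is the Okounkov--Olshanski integral formula for Jack polynomials, now coupled to the neighbouring alphabet.

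The remaining alphabets $\tar{2},\dots,\tar{n-1}$ then carry no Jack polynomial on the left, so they are integrated out by the plain Dixon--Anderson cascade underlying \eqref{Eq_An-Selberg}, except that the $\la$-shifts planted at the first step propagate rightward and successively generate the $r=2,\dots$ slabs of the $\la$-product, with consecutive parameters $\alpha_r$ merging as alphabets are removed. By the time the cascade reaches the final alphabet $\tar{n}$, its effective parameters have been shifted by $\la$; integrating $P_\mu^{(1/\gamma)}[\tar{n}+\beta/\gamma-1]$ against this $\la$-shifted, two-boundary Selberg weight (the only alphabet with $\beta_n=\beta$) is an AFLT--Kadell integration whose output is $P_\mu^{(1/\gamma)}[k_n+\beta/\gamma-1]$, the $\mu$-product, and the cross-product $\prod_{i,j}(\dots)_{\la_i+\mu_j}$. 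The cross-term arises precisely because $P_\mu$ is integrated against parameters already displaced by $\la$, so that the two partitions meet only at this last step. Throughout, the accumulated normalisation must be matched against \eqref{Eq_An-Selberg} so that the average \eqref{Eq_An-Selberg-average} comes out in the stated closed form.

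The main obstacle is the single-alphabet transfer lemma at the first peel: establishing that integrating $P_\la^{(1/\gamma)}[\tar{1}]$ against the one-sided weight coupled to $\tar{2}$ yields exactly the specialisation $P_\la^{(1/\gamma)}[k_1]$ together with a clean $\la$-shift of the downstream parameters, with no residual sum over partitions. Equally delicate is the bookkeeping of these shifts as they cascade through the chain: one must verify that the propagated $\la$-shifts, together with the final $\mu$-integration, reproduce the precise Pochhammer arguments on the right-hand side, including the boundary corrections encoded by the Kronecker deltas $\delta_{r,1}$ and $\delta_{r,n}$. Since these are exactly the places where the analogous $n=1$ computation already required the Okounkov--Olshanski formula, I expect the combinatorial identities verifying the gamma-quotients to be the most laborious part, while the inductive skeleton should be routine once the transfer lemma is in hand.
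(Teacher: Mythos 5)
Your strategy is genuinely different from the paper's: the paper proves \eqref{Eq_An-AFLT} by specialising a discrete $\A_n$ Cauchy-type identity for Macdonald polynomials (Corollary~\ref{Cor_An-Cauchy}), interpreting the resulting sum as a restricted $q$-integral over $D^{k_1,\dots,k_n}[0,1]$, and taking the $q\to 1^{-}$ limit --- explicitly \emph{avoiding} the Okounkov--Olshanski formula. The problem is that your route contains a gap at exactly the step you defer: the ``single-alphabet transfer lemma''. You need the $\tar{1}$-integral of $P_{\la}^{(1/\gamma)}[\tar{1}]$ against the one-sided coupling $\Abs{\Delta(\tar{1},\tar{2})}^{-\gamma}$ to return $P_{\la}^{(1/\gamma)}[k_1]$ times a product weight on $\tar{2}$ with exponents cleanly displaced by the parts of $\la$, and no residual sum over partitions. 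No such evaluation is available: the Okounkov--Olshanski formula reduces the number of variables by one \emph{within} a single alphabet, whereas here you must pass from a $k_1$-letter alphabet to a $k_2$-letter alphabet with $k_2\geq k_1$, and on the chain $C_{\gamma}^{k_1,\dots,k_n}[0,1]$ --- a weighted sum over all admissible interleavings $M_1,\dots,M_{n-1}$ --- any putative shift $\la_i$ would attach to \emph{different} variables of $\tar{2}$ on different components of the chain, so the output cannot be a single product weight. Note also that the ``plain Dixon--Anderson cascade underlying \eqref{Eq_An-Selberg}'' that you want to reuse for the middle alphabets does not exist in that form: the $\A_n$ Selberg integral in \cite{Warnaar09} is itself proved via a discrete $\A_n$ $q$-binomial theorem, not by iterated whole-alphabet integration.

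There is also direct evidence in the paper that the intermediate objects in your cascade will not factorise. Fateev and Litvinov pursued precisely an Okounkov--Olshanski-based recursion for this integral, and the paper points out that their recursion (the analogue of \eqref{Eq_recR}) is incorrect for $n\geq 3$; and Section~\ref{Sec_open} shows that averages with Jack polynomials attached to intermediate alphabets evaluate to terminating ${}_3F_2$'s rather than to closed products, which is a strong warning that whatever symmetric function of $\tar{2}$ your first integration actually produces, it is not of the clean ``$\la$-shifted weight'' form your induction requires. Until you can state and prove the transfer lemma (and its compatibility with the chain decomposition \eqref{Eq_chain-Cn-def}), the inductive skeleton does not get off the ground; I would recommend either working at the Macdonald level with the skew Cauchy identities of Section~\ref{Sec_Cauchy}, as the paper does, or at minimum testing your transfer lemma explicitly for $n=2$, $k_1=1$, $k_2=2$ before investing in the bookkeeping.
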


Note that both sides of \eqref{Eq_An-AFLT} trivially vanish unless
$l(\la)\leq k_1$ so that without loss of generality it may be assumed
that $\la\in\mathscr{P}_{k_1}$.
Then the $r=1$ term in the second double product on
the right simplifies to $1$ upon choosing $\ell=k_1$.

Since
\begin{equation}\label{Eq_simplex}
C_{\gamma}^{k_1}[0,1]=\{t\in \mathbb{R}^{k_1}: 0<t_1<t_2<\dots<t_{k_1}<1\},
\end{equation}
Theorem~\ref{Thm_An-AFLT} for $n=1$ is equivalent to Theorem~\ref{Thm_AFLT}.
For $\mu=0$ the theorem corresponds to the $\A_n$ analogue of Kadell's
integral \cite[Theorem~6.1]{Warnaar09}, and for $\beta=\gamma$ it gives an
$\A_n$ analogue of the Hua--Kadell integral \cite{Hua79,Kadell93}.
Our proof of Theorem~\ref{Thm_An-AFLT} is not reliant on the
Okounkov--Olshanski integral formula for Jack polynomials and instead
uses $\A_n$ Cauchy-type identities for Macdonald polynomials.
One advantage of our approach is that it immediately implies a companion to
the $\A_n$ AFLT integral, stated as Theorem~\ref{Thm_An-alt} in
Section~\ref{Sec_AFLT}. 

\medskip

Setting $k_1=0$ in the $\A_n$ Selberg integral yields the $\A_{n-1}$ Selberg
integral.
The same is not true, however, for the $\A_n$ AFLT integral.
Setting $k_1=0$ in \eqref{Eq_An-AFLT} forces $\la=0$ for nonvanishing, 
thus eliminating one of the two Jack polynomials in the integrand.
In their work on the AGT conjecture for $\mathrm{WA}_n$,
Matsuo and Zhang \cite{ZM11} formulated several conjectures for 
$\A_n$ Selberg integrals of AFLT type that do have the desired
reduction property, but unfortunately as stated their conjectures
appear to be false.
In the $\gamma\to 1$ limit of Theorem~\ref{Thm_An-AFLT}, in which case the
Jack polynomials $P^{(1/\gamma)}_{\la}\big[\tar{1}\big]$ and
$P^{(1/\gamma)}_{\mu}\big[\tar{n}+\beta/\gamma-1\big]$
simplify to the Schur functions
$s_{\la}\big[\tar{1}\big]$ and $s_{\mu}\big[\tar{n}+\beta-1\big]$, 
we have managed to prove a corrected Matsuo--Zhang-type AFLT integral,
containing a product of $n+1$ Schur functions in the integrand.
Because \eqref{Eq_conditions-a} implies that $\abs{\Re(\gamma)}<1$,
in this integral we must replace the real domain of integration
$C_{\gamma}^{k_1,\dots,k_n}[0,1]$ by the complex contour 
\begin{equation}\label{Eq_contour}
C^{k_1,\dots,k_n}=C_1^{k_1}\times\dots\times C_n^{k_n},
\quad \text{where}\quad 
C_r^{k_r}=\underbrace{C_r\times\dots\times C_r}_{k_r \textrm{ times}}
\end{equation}
defined as follows.
Each $C_r$ is a positively oriented Jordan curve which 
passes through the origin, contains the interval $(0,1]$ in its
interior, and has nonzero slope close to $0$.
Away from $0$, the contour $C_r$ for $r\geq 2$ is contained in the
interior of $C_{r-1}$ as shown in the figure below

\tikzset{arro/.pic={
\filldraw[blue] (0,-0.05) -- (-0.15,0) -- (0,0.05) -- cycle;}}

\begin{center}
\begin{tikzpicture}[scale=1.1]
\draw (0,-0.25) node {$0$};
\draw (4,-0.25) node {$1$};
\draw[blue] (-0.02,0) .. controls (3,1) and (4.2,1) .. (4.2,0) 
.. controls (4.2,-1) and (3,-1) .. (-0.02,0);
\draw[blue,dotted,thick] (4.42,0) -- (5.38,0);
\draw[blue] (-0.02,0) .. controls (3,1.6) and (5.6,1.6) .. (5.6,0) 
.. controls (5.6,-1.6) and (3,-1.6) .. (-0.02,0);
\draw[blue] (-0.02,0) .. controls (3,1.9) and (5.9,1.9) .. (5.9,0) 
.. controls (5.9,-1.9) and (3,-1.9) .. (-0.02,0);
\draw (3.7,0.4) node {$C_n$};
\draw (6.0,0.7) node {$C_1$};
\draw (5.1,0.55) node {$C_2$};
\pic at (3.4,0.75) {arro};
\pic at (3.9,1.198) {arro};
\pic at (4.2,1.425) {arro};
\draw[thick] (0,0)--(4,0);
\draw[fill] (0,0) circle (0.05cm);
\draw[fill] (4,0) circle (0.05cm); 
\end{tikzpicture}
\end{center}

For $\gamma=1$ we now redefine the $\A_n$ Selberg average as follows.
In the complex $t_i^{(r)}$-plane fix the usual principal branch of 
the complex logarithm, with cut along the negative real axis and argument 
in $(-\pi,\pi]$.
Then for $0\leq k_1\leq\cdots\leq k_n$ and
$\alpha_1,\dots,\alpha_n,\beta\in\mathbb{C}$ such that
\begin{equation}\label{Eq_conv}
\Re(\alpha_r+\cdots+\alpha_s)>s-r \quad
\text{for $1\leq r\leq s\leq n$}
\end{equation}
we define
\[
\big\langle \mathscr{O}\big\rangle_{\alpha_1,\dots,\alpha_n,\beta}
^{k_1,\dots,k_n}:=
\frac{I^{\A_n}_{k_1,\dots,k_n}
(\mathscr{O};\alpha_1,\dots,\alpha_n,\beta)}
{I^{\A_n}_{k_1,\dots,k_n}(1;\alpha_1,\dots,\alpha_n,\beta)},
\]
where, assuming \eqref{Eq_beta-r}, 
\begin{align}\label{Eq_I-def}
&I^{\A_n}_{k_1,\dots,k_n}(\mathscr{O};\alpha_1,\dots,\alpha_n,\beta) \\
&\qquad :=\frac{1}{(2\pi\iup)^{k_1+\dots+k_n}} \Int_{C^{k_1,\dots,k_n}}
\mathscr{O}\big(\tar{1},\dots,\tar{n}\big)
\prod_{r=1}^n \prod_{i=1}^{k_r}\big(\tar{r}_i\big)^{\alpha_r-1}
\big(\tar{r}_i-1\big)^{\beta_r-1} \notag \\[-1mm]
&\qquad\qquad\qquad\qquad\qquad\qquad\qquad\times
\prod_{r=1}^n \Delta^2\big(\tar{r}\big)
\prod_{r=1}^{n-1} \Delta^{-1}\big(\tar{r},\tar{r+1}\big)\, 
\dup\tar{1}\cdots\dup\tar{n}. \notag
\end{align}
The integral \eqref{Eq_I-def} should be understood in the sense of 
indefinite integrals since the integrand is not defined at 
$t_i^{(r)}=0\in C_r$.
Due to the change in contour, the normalisation is now given by 
(see \eqref{Eq_An-one})
\begin{align}\label{Eq_An-One}
&I^{\A_n}_{k_1,\dots,k_n}(1;\alpha_1,\dots,\alpha_n,\beta) \\
&\quad\quad =\prod_{r=1}^n \bigg( (-1)^{\binom{k_r}{2}} \prod_{i=1}^{k_r}
\frac{i!}{\Gamma(k_{r+1}-\beta_r+2-i)} \bigg) \notag \\
&\quad\qquad\times \prod_{1\leq r\leq s\leq n} \prod_{i=1}^{k_r-k_{r-1}} 
\frac{\Gamma(\alpha_r+\dots+\alpha_s+r-s+i-1)}
{\Gamma(\alpha_r+\dots+\alpha_s+\beta_s+k_s-k_{s+1}+r-s+i-2)}, \notag
\end{align}
where $k_0=k_{n+1}:=0$.

Let $\tar{0}:=0$.
Then our next main result is a closed form evaluation of
\begin{equation}\label{Eq_gammaisone}
\bigg\langle 
\bigg(\prod_{r=1}^n s_{\lar{r}}\big[\tar{r}-\tar{r-1}\big]\bigg)
s_{\lar{n+1}}\big[\tar{n}+\beta-1\big]
\bigg\rangle_{\alpha_1,\dots,\alpha_n,\beta}^{k_1,\dots,k_n},
\end{equation}
generalising the $\gamma=1$ case of \eqref{Eq_An-AFLT}.
The most concise way to state this is by using the duality 
\cite[p.~43]{Macdonald95}
\begin{equation}\label{Eq_Schur-duality}
s_{\mu'}[X]=(-1)^{\abs{\mu}} s_{\mu}[-X],
\end{equation}
and to instead give the evaluation of
\[
\bigg\langle 
\prod_{r=1}^{n+1} s_{\lar{r}}\big[\tar{r}-\tar{r-1}\big]
\bigg\rangle_{\alpha_1,\dots,\alpha_n,\beta}^{k_1,\dots,k_n},
\]
where $\tar{n+1}:=1-\beta$.\footnote{For the evaluation of the average
\eqref{Eq_gammaisone} see equation \eqref{Eq_gamma-one} below.}
Before stating this evaluation we introduce the following
shorthand notation.
For $1\leq r\leq n+1$ let
\begin{subequations}\label{Eq_A-def}
\begin{equation}
A_r:=\alpha_r+\dots+\alpha_n+k_r-k_{r-1}+r
\end{equation}
and $A_{r,s}:=A_r-A_s$, so that $A_{r,s}=-A_{s,r}$.
In particular,
\begin{equation}
A_{r,s}=\alpha_r+\dots+\alpha_{s-1}+k_r-k_{r-1}-k_s+k_{s-1}+r-s
\end{equation}
\end{subequations}
for $1\leq r\leq s\leq n+1$.

\begin{theorem}\label{Thm_nplusone}
For $n$ a positive integer, let $0\leq k_1\leq\cdots\leq k_n$ be
integers, $\alpha_1,\dots,\alpha_n,\beta\in\mathbb{C}$ such that
\eqref{Eq_conv} holds, and $\lar{1},\dots,\lar{n+1}\in\mathscr{P}$.
Let $\tar{0}:=0$ and $\tar{n+1}:=1-\beta$.
Then
\begin{align}\label{Eq_nplusone}
&\bigg\langle 
\prod_{r=1}^{n+1} s_{\lar{r}}\big[\tar{r}-\tar{r-1}\big]
\bigg\rangle_{\alpha_1,\dots,\alpha_n,\beta}^{k_1,\dots,k_n} \\
&\qquad=
\prod_{r=1}^{n+1}\prod_{1\leq i<j\leq \ell_r} 
\frac{\lar{r}_i-\lar{r}_j+j-i}{j-i} 
\prod_{r,s=1}^{n+1}\prod_{i=1}^{\ell_r}
\frac{(A_{r,s}-k_{s-1}+k_s-i+1)_{\lar{r}_i}}
{(A_{r,s}+\ell_s-i+1)_{\lar{r}_i}} \notag \\
&\qquad\quad\times 
\prod_{1\leq r<s\leq n+1}\prod_{i=1}^{\ell_r}\prod_{j=1}^{\ell_s}
\frac{\lar{r}_i-\lar{s}_j+A_{r,s}+j-i}{A_{r,s}+j-i}, \notag
\end{align}
where $k_0:=0$ and $k_{n+1}:=1-\beta$, and where $\ell_r$ 
\textup{(}$1\leq r\leq n+1$\textup{)}
is an arbitrary nonnegative integer such that $\ell_r\geq l(\lar{r})$.
\end{theorem}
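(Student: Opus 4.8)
The plan is to reduce Theorem~\ref{Thm_nplusone} to a single Cauchy-type Selberg evaluation and then to extract the closed form by a residue/Jacobi--Trudi computation. First I would observe that the integral \eqref{Eq_I-def} for $\gamma=1$ is, up to the Vandermonde factors, a product of one-dimensional contour integrals coupled through the factors $\Delta^{-1}(\tar{r},\tar{r+1})$. The natural strategy is therefore to integrate out the alphabets one layer at a time, starting from $\tar{n}$ and moving down to $\tar{1}$, exactly as in the $\A_n$ Cauchy-type approach advertised after Theorem~\ref{Thm_An-AFLT}. Each such integration is a Schur-function-weighted complex Selberg integral in a single alphabet, and the key input is that for $\gamma=1$ the Vandermonde-squared weight $\Delta^2(\tar{r})$ turns the integral into a determinant (via the standard $\det(t_i^{\,j-1})$ expansion of the Vandermonde), so that a single-alphabet evaluation becomes a Cauchy--Binet / Andr\'eief computation.

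The key steps, in order, are as follows. Step one: use the duality \eqref{Eq_Schur-duality} to write the product $\prod_r s_{\lar{r}}[\tar{r}-\tar{r-1}]$ in a form where each Schur function depends on a single alphabet difference, and apply the branching/addition rule $s_{\la}[X+Y]=\sum_{\kappa} s_{\kappa}[X]\,s_{\la/\kappa}[Y]$ to separate the coupled alphabets $\tar{r}$ and $\tar{r-1}$. Step two: expand every Schur function that survives on a single alphabet $\tar{r}$ as a ratio of determinants $\det\big((\tar{r}_i)^{\lar{r}_j+\ell_r-j}\big)/\Delta(\tar{r})$; the denominator cancels one power of the Vandermonde, so the $\Delta^2(\tar{r})$ weight leaves exactly $\Delta(\tar{r})$ times a determinant, and Andr\'eief's identity collapses the $k_r$-fold integral over $\tar{r}$ into a $k_r\times k_r$ determinant whose entries are one-dimensional beta-type contour integrals $\oint_{C_r} t^{a}(t-1)^{b}\,\dup t$. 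Step three: evaluate each such scalar integral in closed form (this is a Pochhammer/Gamma ratio, consistent with the normalisation \eqref{Eq_An-One}), and then recognise the resulting determinants as Cauchy-type determinants that factorise into the products appearing on the right-hand side of \eqref{Eq_nplusone}. Step four: track the shift parameters $A_r$ and $A_{r,s}$ of \eqref{Eq_A-def} through the layered integration, verify that the boundary conventions $\tar{0}=0$ and $\tar{n+1}=1-\beta$ produce precisely the $r=0$ and $r=n+1$ factors, and finally divide by \eqref{Eq_An-One} to obtain the normalised average.

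The main obstacle will be Step three and Step four together: bookkeeping the interlaced shifts. Each layer of integration feeds a new set of exponents into the next layer through the coupling $\Delta^{-1}(\tar{r},\tar{r+1})$, and the challenge is to show that the accumulated product of Gamma factors reorganises exactly into the three products of \eqref{Eq_nplusone} --- the ``Weyl denominator'' product $\prod_{i<j}(\lar{r}_i-\lar{r}_j+j-i)/(j-i)$, the diagonal Pochhammer ratios indexed by $A_{r,s}$, and the off-diagonal interaction product. I expect the cleanest way to handle this is to prove the identity first in a stable range where $\ell_r$ is large and all partitions are generic, so that the determinants are genuinely Cauchy determinants of the form $\det\big(1/(x_i-y_j)\big)$ after the beta integrals are performed, and then to invoke analyticity in the $\alpha_r$ (justified by the convergence conditions \eqref{Eq_conv}) to remove the genericity assumption. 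The independence of the right-hand side from the choice of $\ell_r\geq l(\lar{r})$ should be checked separately, by confirming that increasing $\ell_r$ by one multiplies numerator and denominator by matching factors; this serves as a useful internal consistency check on the final formula.
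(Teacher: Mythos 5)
Your overall architecture --- integrate out one alphabet at a time, reduce each layer to a determinant via Andr\'eief, evaluate the entries, and continue analytically --- has the right shape, and the paper does use exactly your ``Andr\'eief plus beta-integral determinant'' device at one point, namely to extend the single-alphabet evaluation from integer to complex $\beta$ in Theorem~\ref{Thm_beta-Schur}. But as written the plan has a gap at its core, in two related places.

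First, the determinant entries are not the pure beta integrals $\oint t^a(t-1)^b\,\dup t$ you describe. When you integrate over $\tar{r}$ with a neighbouring alphabet still present, the coupling $\Delta^{-1}\big(\tar{r},\tar{r\pm1}\big)$ contributes $\prod_i\big(t_i^{(r\pm1)}-t\big)^{-1}$ to the one-dimensional weight, so each entry is a contour integral with simple poles at the points of the neighbouring alphabet. Such an entry is evaluated by residues, not by the Euler beta integral, and the result is a function of the remaining alphabet rather than a Gamma ratio. Starting from $\tar{n}$ makes this worse: that is the one layer carrying the genuine $(t-1)^{\beta-1}$ branch point (since $\beta_1=\dots=\beta_{n-1}=1$), and coupling it to $\tar{n-1}$ forces you to handle the branch cut and the extra poles simultaneously. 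The paper integrates in the opposite order, $\tar{1}$ up to $\tar{n}$, precisely so that the $\beta$-layer is the last, uncoupled one.

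Second, and more seriously, you give no mechanism for the induction to close. After integrating out one layer, the surviving object is a symmetric function of the next alphabet whose exponents are the original partition parts shifted by the complex parameter $\alpha_1$; it is not an ordinary or skew Schur function, so your Step 1 branching rule and Step 2 Jacobi--Trudi expansion cannot be reapplied at the next layer. The paper's Section~\ref{Sec_An-AFLT-Schur} exists to repair exactly this: it introduces Schur functions $S^{(n)}(x;z)$ indexed by sequences of complex numbers $z$, and proves (Theorem~\ref{Thm_Schur}) that the residue sum produced by one layer of integration is again such a function, now in the next alphabet with index sequence extended by the parts of the next partition. That merging identity is the actual engine of the proof; without it, your Step 3 (``recognise the resulting determinants as Cauchy-type determinants that factorise'') asks the factorisation to happen all at once at the end, which is not how the cancellations organise themselves. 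Your final consistency check (independence of the choice of $\ell_r$) is sound and is indeed verified in the paper, but it does not substitute for the missing inductive step.
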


The reader is warned that in order to obtain the above compact form 
for the right-hand side we have used a different convention for $k_{n+1}$ 
than in the previous two theorems.
We also remark that \eqref{Eq_nplusone} displays a significant
amount of nontrivial cancellation.
For $s=r$ the second triple product on the right becomes
\[
\prod_{r=1}^{n+1}\prod_{i=1}^{\ell_r}
\frac{(k_r-k_{r-1}-i+1)_{\lar{r}_i}}
{(\ell_r-i+1)_{\lar{r}_i}}.
\]
Since $\ell_r\geq l(\lar{r})$, this shows that the right-hand side
vanishes unless $l(\lar{r})\leq k_r-k_{r-1}$ for all $1\leq r\leq n$.
The integrand, however, only vanishes for $l(\lar{1})\leq k_1-k_0=k_1$.
Finally we note that \eqref{Eq_nplusone} has the desired rank-reduction
property. 
If we denote either side of \eqref{Eq_nplusone} by
\[
I_{\lar{1},\dots,\lar{n+1}}^{k_1,\dots,k_n}
(\alpha_1,\dots,\alpha_n,\beta),
\]
then it is readily verified that
\[
I_{0,\lar{2},\dots,\lar{n+1}}^{0,k_2,\dots,k_n}
(\alpha_1,\alpha_2,\dots,\alpha_n,\beta)
=
I_{\lar{2},\dots,\lar{n+1}}^{k_2,\dots,k_n}
(\alpha_2,\dots,\alpha_n,\beta).
\]

Evaluating the Selberg integral \eqref{Eq_Selberg} for $\gamma=1$
is not at all hard; it follows from Heine's integral formula for 
the Hankel determinant of moments of orthogonal polynomials on the real 
line applied to the case of Jacobi polynomials, see e.g., 
\cite{LT03,Rosengren18}.
We have not found a similarly elementary proof of Theorem~\ref{Thm_nplusone},
and our proof hinges on a novel type of integration formula for Schur 
functions indexed by sequences of complex numbers, see Theorem~\ref{Thm_Schur}.

\medskip

Our third main result applies to $\A_1$, and is an elliptic
generalisation of the AFLT integral \eqref{Eq_AFLT}.
This integral, which also generalises the elliptic 
Selberg integral of \cite{vDS00,vDS01,Rains10}, 
contains two elliptic skew interpolation functions in the
integrand.
These play the role of the pair of Jack polynomials in the 
AFLT integral.

For $\bla,\bnu$ a pair of bipartitions, let 
$R^{\ast}_{\bla/\bnu}([v_1,\dots,v_{2m}];a,b;t,p,q)$ 
denote an elliptic skew interpolation function \cite{Rains12}, for which
we use the shorthand notation
\begin{multline}\label{Eq_Rastpm}
R^{\ast}_{\bla/\bnu}
\big([uz_1^{\pm},\dots,uz_n^{\pm},v_1,\dots,v_{2m}];a,b;t;p,q) \\
:=R^{\ast}_{\bla/\bnu}
\big([uz_1,uz_1^{-1},\dots,uz_n,uz_n^{-1},v_1,\dots,v_{2m}];a,b;t;p,q).
\end{multline}
Further let $(a_1,\dots,a_k;q)_{\infty}:=\prod_{r=1}^k \prod_{i=0}^{\infty}
(1-a_r q^i)$ and let $\Gamma(z;p,q)$ denote the elliptic gamma function 
\cite{Ruijsenaars97}, for which we adopt the usual multiplicative
plus-minus conventions
\begin{align*}
\Gamma(z^{\pm};p,q)&:=\Gamma(z;p,q)\Gamma(z^{-1};p,q), \\
\Gamma(z^{\pm}w^{\pm};p,q)&:=\Gamma(zw;p,q)\Gamma(zw^{-1};p,q)
\Gamma(z^{-1}w;p,q)\Gamma(z^{-1}w^{-1};p,q).
\end{align*}
For $\bla=(\lar{1},\lar{2})\in\mathscr{P}_n^2$
we define the following shorthand for the ratio of products of elliptic gamma 
functions,
\[
\Delta_{\bla}^0(a\vert b_1,\dots,b_k;t;p,q):=
\Delta_{\lar{1}}^0(a\vert b_1,\dots,b_k;q,t;p)
\Delta_{\lar{2}}^0(a\vert b_1,\dots,b_k;p,t;q),
\]
with
\[
\Delta_{\la}^0(a\vert b_1,\dots,b_k;q,t;p):=
\prod_{r=1}^k \prod_{i=1}^n 
\frac{\Gamma(t^{1-i}q^{\la_i}b_r;p,q)\Gamma(pqt^{1-i}a/b_r;p,q)}
{\Gamma(t^{1-i}b_r;p,q)\Gamma(pqt^{1-i}q^{\la_i}a/b_r;p,q)}.
\]
In particular, for $\mu\in\mathscr{P}_m$,
\[
\Delta_{\mu}^0\big(a\vert b\spec{\bla}_{n;t;p,q};q,t;p\big)=
\prod_{i=1}^n \prod_{j=1}^m 
\frac{\Gamma(t^{n-i-j+1}q^{\lar{1}_i+\mu_j}p^{\lar{2}_i}b;p,q)
\Gamma(pqt^{i-j-n+1}q^{-\lar{1}_i}p^{-\lar{2}_i}a/b;p,q)} 
{\Gamma(t^{n-i-j+1}q^{\lar{1}_i}p^{\lar{2}_i}b;p,q)
\Gamma(pqt^{i-j-n+1}q^{-\lar{1}_i+\mu_j}p^{-\lar{2}_i}a/b;p,q)},
\]
where
\begin{equation}\label{Eq_e-spec}
\spec{\bla}_{n;t;p,q}:=\big(q^{\lar{1}_1}p^{\lar{2}_1}t^{n-1}
,q^{\lar{1}_2}p^{\lar{2}_2}t^{n-2},\dots,
q^{\lar{1}_{n-1}}p^{\lar{2}_{n-1}}t,
q^{\lar{1}_n}p^{\lar{2}_n}\big),
\end{equation}
is the spectral vector for the bipartition $\bla$.
With the above notation we may state the following elliptic analogue of 
\eqref{Eq_AFLT}, where we note that instead of $k$ we use
$n$ to denote the number of integration variables.

\begin{theorem}[Elliptic AFLT integral]\label{Thm_eAFLT}
Let $n$ be a positive integer, $p,q,t,t_1,t_2,t_3,t_4,t_5,t_6\in\mathbb{C}$ 
such that the elliptic balancing condition
\begin{equation}\label{Eq_balancing}
t^{2n-2} t_1\cdots t_6=pq
\end{equation}
holds and such that $\abs{p},\abs{q}<1$.
Further let 
\[
\kappa_n:=\frac{(p;p)_{\infty}^n(q;q)_{\infty}^n \Gamma^n(t;p,q)}
{2^n n!(2\pi\iup)^n}.
\]
Then, for $\bla\in\mathscr{P}_n^2$ and $\bmu\in\mathscr{P}^2$,
\begin{align}\label{Eq_eAFLT}
\kappa_n & \Int_{C_{\bla\bmu}}
R^{\ast}_{\bla/\boldsymbol{0}}
\big([t^{1/2}z_1^{\pm},\dots,t^{1/2}z_n^{\pm}];
t^{n-1/2}t_1,t^{1/2}t_2;t;p,q\big) \\[-1mm]
& \qquad\times R^{\ast}_{\bmu/\boldsymbol{0}}
\big([t^{1/2}z_1^{\pm},\dots,t^{1/2}z_n^{\pm},
t^{-1/2}t_4,t^{-1/2}t_5];t^{n-3/2}t_3t_4t_5,t^{1/2}t_6;t;p,q\big) 
\notag \\[1mm]
& \qquad\times \prod_{1\leq i<j\leq n}
\frac{\Gamma(tz_i^{\pm}z_j^{\pm};p,q)}{\Gamma(z_i^{\pm}z_j^{\pm};p,q)}
\prod_{i=1}^n \frac{\prod_{r=1}^6 \Gamma(t_r z_i^{\pm};p,q)}
{\Gamma(z_i^{\pm 2};p,q)} \,
\frac{\dup z_1}{z_1}\cdots\frac{\dup z_n}{z_n} \notag \\
&=\prod_{i=1}^n 
\bigg(\Gamma(t^i;p,q) 
\prod_{1\leq r<s\leq 6} \Gamma(t^{i-1}t_rt_s;p,q)\bigg) \notag \\[1mm]
&\quad \times 
\Delta^0_{\bla}(t^{n-1}t_1/t_2\vert 
t^n,t^{n-1}t_1t_3,t^{n-1}t_1t_4,t^{n-1}t_1t_5,t^{n-1}t_1t_6;t;p,q) 
\notag \\[1.5mm]
&\quad \times 
\Delta^0_{\bmu}(t^{n-2}t_3t_4t_5/t_6\vert 
t^{n-1}t_3t_4,t^{n-1}t_3t_5,t^{n-1}t_4t_5;t;p,q) \notag \\[1mm]
&\quad \times 
\frac{\Delta^0_{\bmu}(t^{n-2}t_3t_4t_5/t_6\vert
t^{n-2}t_1t_3t_4t_5 \spec{\boldsymbol{\bla}}_{n;t;p,q})}
{\Delta^0_{\bmu}(t^{n-2}t_2t_3t_4/t_5\vert
t^{n-1}t_1t_3t_4t_5 \spec{\boldsymbol{\bla}}_{n;t;p,q})}, \notag 
\end{align}
where $C_{\bla\bmu}$ is a deformation of $\mathbb{T}^n$ 
\textup{(}with $\mathbb{T}$ the positively oriented unit circle\textup{)} 
separating sequences of poles of the integrand tending to zero from sequences
of poles tending to infinity.
\end{theorem}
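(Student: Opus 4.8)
The plan is to recognise the integrand as a decoration of the $BC_n$ elliptic Selberg density and to reduce the evaluation to this known base integral by means of the integration theory of elliptic interpolation functions. To begin, observe that the product
\[
\prod_{1\leq i<j\leq n}\frac{\Gamma(tz_i^{\pm}z_j^{\pm};p,q)}{\Gamma(z_i^{\pm}z_j^{\pm};p,q)}\prod_{i=1}^n\frac{\prod_{r=1}^6\Gamma(t_rz_i^{\pm};p,q)}{\Gamma(z_i^{\pm 2};p,q)}
\]
is exactly the type~II elliptic Selberg density in the six parameters $t_1,\dots,t_6$ subject to the balancing condition \eqref{Eq_balancing}. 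When $\bla=\bmu=\boldsymbol{0}$ both skew functions reduce to $1$, each $\Delta^0$ factor on the right collapses to $1$, and \eqref{Eq_eAFLT} becomes precisely the elliptic Selberg integral of \cite{vDS00,vDS01,Rains10}, whose value is the factor $\prod_{i=1}^n\Gamma(t^i;p,q)\prod_{r<s}\Gamma(t^{i-1}t_rt_s;p,q)$. Thus the whole content of the theorem is the manner in which the two interpolation functions deform this base evaluation, and I would structure the proof around peeling them off one at a time.

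First I would dispose of $R^{\ast}_{\bla/\boldsymbol{0}}$. Its parameters $t^{n-1/2}t_1,t^{1/2}t_2$ and the argument shift $z_i\mapsto t^{1/2}z_i$ are chosen so that inserting it into the Selberg density is controlled by the reproducing property of interpolation functions with respect to this density: integrating a $BC_n$-symmetric function $f(z)$ against $R^{\ast}_{\bla}(z)$ times the density evaluates $f$, up to an explicit product, at the spectral vector $\spec{\bla}_{n;t;p,q}$ of \eqref{Eq_e-spec}. Applied with $f=R^{\ast}_{\bmu}$, the self-contribution of $R^{\ast}_{\bla}$ together with the resulting shift of the six parameters by $q^{\lar{1}_i}p^{\lar{2}_i}$ produces the factor $\Delta^0_{\bla}(t^{n-1}t_1/t_2\mid\cdots)$.

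The second function $R^{\ast}_{\bmu/\boldsymbol{0}}$ is then no longer integrated but \emph{evaluated} at the spectral vector $\spec{\bla}_{n;t;p,q}$. The two extra arguments $t^{-1/2}t_4,t^{-1/2}t_5$ it carries are precisely what is needed so that, after the $\bla$-integration has shifted the parameters, the balancing condition for $R^{\ast}_{\bmu}$ is met and its principal specialisation applies; this specialisation is a ratio of products of elliptic gamma functions and accounts for both the free factor $\Delta^0_{\bmu}(t^{n-2}t_3t_4t_5/t_6\mid\cdots)$ and the quotient $\Delta^0_{\bmu}(\cdots\spec{\bla})/\Delta^0_{\bmu}(\cdots\spec{\bla})$ in the last line of \eqref{Eq_eAFLT}, the asymmetry between numerator and denominator reflecting the asymmetry between the parameters $t_4,t_5$ entering the extra arguments and the remaining parameter $t_6$.

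The main obstacle, and where essentially all the work lies, is establishing the master integration formula that simultaneously handles both interpolation functions and produces the spectral evaluation just described. Iterating two single-function integrals is awkward because the density is altered after the first integration and the contour $C_{\bla\bmu}$ must be deformed consistently as $\bla,\bmu$ vary so that the poles tending to $0$ stay separated from those tending to $\infty$. I therefore expect the cleanest route to be a Cauchy-type identity for the skew functions $R^{\ast}_{\bla/\boldsymbol{0}}$ and $R^{\ast}_{\bmu/\boldsymbol{0}}$ that merges them into a single sum over bipartitions, which can then be integrated termwise against the base elliptic Selberg integral and resummed. Throughout, the delicate point is the bookkeeping of elliptic gamma arguments: one must verify that every argument generated by the specialisation matches \eqref{Eq_e-spec} and the definitions of $\Delta^0_{\bla}$ and $\Delta^0_{\bmu}$, and that the precise half-integer powers of $t$ in the arguments of the two interpolation functions conspire with \eqref{Eq_balancing} to keep the whole identity balanced. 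Verifying this matching, rather than any single analytic estimate, is the crux of the argument.
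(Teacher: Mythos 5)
Your outline has the right architecture --- decorated elliptic Selberg density, base case $\bla=\bmu=\boldsymbol{0}$, expansion of the skew function, termwise integration, spectral evaluation, and a final Cauchy-type factorisation enabled by the extra arguments $t^{-1/2}t_4,t^{-1/2}t_5$ --- and this is indeed broadly how the paper proceeds. But there is a genuine gap at exactly the point you flag as ``where essentially all the work lies'': the master integration formula is neither proved nor correctly identified. The ``reproducing property'' you invoke, namely that integrating an arbitrary $\mathrm{BC}_n$-symmetric $f$ against $R^{\ast}_{\bla}$ times the density returns $f$ evaluated at $\spec{\bla}_{n;t;p,q}$ up to an explicit product, is false as stated: the correct statement, \eqref{Eq_thm92} (Rains' generalised elliptic Selberg integral), applies only when $f$ is itself an interpolation function $R^{\ast}_{\bmu}(\,\cdot\,;t_3,t_6;t;p,q)$ with parameters tied to the density; the evaluation point is the rescaled vector $t_1\spec{\bla}_{n;t;p,q}/\zeta$ with $\zeta=\sqrt{t^{n-1}t_1t_2}$ and rescaled parameters $t_3\zeta,t_6\zeta$; and the answer carries an extra $\bmu$-dependent factor $\Delta^0_{\bmu}(t^{n-1}t_3/t_6\vert t^{n-1}t_3t_4,t^{n-1}t_3t_5;t;p,q)$, so the functional is not of the form $c_{\bla}\,f(\text{point})$ even on the span of these functions.

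The paper closes this gap as follows. The skew function $R^{\ast}_{\bmu/\boldsymbol{0}}$ with the two extra arguments is expanded via the branching rule \eqref{Eq_branching} together with \eqref{Eq_nisone} and \eqref{Eq_interpolation-skew} into a sum over $\bnu\in\mathscr{P}_n^2$ of elliptic binomial coefficients times the non-skew functions $R^{\ast}_{\bnu}(z;t_3,t_6;t;p,q)$; each term is then integrated using \eqref{Eq_thm92}; and the resulting $\bnu$-sum of binomial coefficients times $R^{\ast}_{\bnu}(t_1\spec{\bla}_{n;t;p,q}/\zeta;t_3\zeta,t_6\zeta;t;p,q)$ is resummed by the connection coefficient identity \eqref{Eq_CC} into a single interpolation function whose parameters satisfy the Cauchy condition $t^nab=pq$, whence \eqref{Eq_factorisation} and \eqref{Eq_R-PS} give the closed form. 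Your proposal gestures at the first and last of these steps but supplies neither the key known input \eqref{Eq_thm92} nor the resummation mechanism \eqref{Eq_CC}, so as written it does not constitute a proof.
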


For $\bmu=0$ this may be viewed as an analogue of Kadell's integral, and 
also follows by setting $\bmu=0$ in \eqref{Eq_thm92} below, a fact that was 
already noted by the second author in \cite[Remark 2]{Rains10}.
Imposing the constraint $t_4t_5=t$, the integral may be viewed as
an elliptic analogue of the Hua--Kadell integral. 
In this case the integral is invariant under the simultaneous 
substitutions
$\bla\leftrightarrow\bmu$ and $(t_1,t_2)\leftrightarrow(t_3,t_6)$.

By taking an appropriate $p\to 0$ limit, Theorem~\ref{Thm_eAFLT}
simplifies to the following AFLT integral over a pair of Macdonald
polynomials $P_{\la}(q,t)$.

\begin{corollary}\label{Cor_Mac-limit}
For $\la\in\mathscr{P}_n$, $\mu\in\mathscr{P}$,
and $a,b,q,t\in\mathbb{C}$ such that $\abs{b},\abs{q},\abs{t}<1$,
\begin{align*}
&\frac{1}{n!(2\pi\iup)^n} \Int_{\mathbb{T}^n} P_{\la}(z;q,t) 
P_{\mu}\Big(\Big[z+\frac{t-b}{1-t}\Big];q,t\Big) \\[-1mm]
&\qquad\qquad\quad \times 
\prod_{i=1}^n \frac{(a/z_i,qz_i/a;q)_{\infty}}
{(b/z_i,z_i;q)_{\infty}} 
\prod_{1\leq i<j\leq n} \frac{(z_i/z_j,z_j/z_i;q)_{\infty}}
{(tz_i/z_j,tz_j/z_i;q)_{\infty}} \, 
\frac{\dup z_1}{z_1}\cdots \frac{\dup z_n}{z_n} \\[1mm]
&\quad=b^{\abs{\la}} t^{\abs{\mu}} 
P_{\la}\Big(\Big[\frac{1-t^n}{1-t}\Big];q,t\Big)
P_{\mu}\Big(\Big[\frac{1-bt^{n-1}}{1-t}\Big];q,t\Big) \\[1mm]
&\quad\quad \times 
\prod_{i=1}^n 
\frac{(t,at^{n-m-i}q^{\la_i},at^{1-i}/b,qt^{i-1}b/a;q)_{\infty}}
{(q,t^i,bt^{i-1},at^{1-i}q^{\la_i}/b;q)_{\infty}}
\prod_{i=1}^n \prod_{j=1}^m
\frac{(at^{n-i-j+1}q^{\la_i+\mu_j};q)_{\infty}}
{(at^{n-i-j}q^{\la_i+\mu_j};q)_{\infty}},
\end{align*}
where $m$ is an arbitrary integer such that $m\geq l(\mu)$.
\end{corollary}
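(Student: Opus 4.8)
The plan is to derive Corollary~\ref{Cor_Mac-limit} as a $p\to 0$ degeneration of the elliptic AFLT integral in Theorem~\ref{Thm_eAFLT}. First I would specialise the bipartitions to $\bla=(\la,0)$ and $\bmu=(\mu,0)$, so that the $p$-component of each is empty; this is the regime in which the two elliptic skew interpolation functions survive and collapse onto single Macdonald polynomials. The balancing condition $t^{2n-2}t_1\cdots t_6=pq$ then forces the product $t_1\cdots t_6$ to vanish, pinning one of the six parameters at $0$ and beginning the collapse of the integrand to Macdonald level.

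The core analytic input is the elementary limit $\lim_{p\to 0}\Gamma(z;p,q)=1/(z;q)_{\infty}$, applied termwise to every elliptic gamma function in the density and in the $\Delta^{0}$-blocks on the right. Under it the cross term $\prod_{i<j}\Gamma(tz_i^{\pm}z_j^{\pm};p,q)/\Gamma(z_i^{\pm}z_j^{\pm};p,q)$ and the single-variable weight $\prod_i\prod_r\Gamma(t_rz_i^{\pm};p,q)/\Gamma(z_i^{\pm2};p,q)$ pass to trigonometric $BC_n$ factors, the prefactor $\kappa_n$ tends to $(q;q)_{\infty}^n/\big(2^n n!\,(2\pi\iup)^n(t;q)_{\infty}^n\big)$, and the two $\Delta^{0}$-blocks together with their ratio collapse to the quotients of $(\,\cdot\,;q)_{\infty}$-products on the right of the corollary. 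The decisive ingredient is the known reduction of the $BC_n$ elliptic interpolation functions to $GL_n$ Macdonald polynomials: for the arguments $[t^{1/2}z_1^{\pm},\dots,t^{1/2}z_n^{\pm}]$ and with the interpolation parameter $t^{n-1/2}t_1$ driven to a boundary, $R^{\ast}_{(\la,0)/\boldsymbol{0}}$ degenerates to $P_{\la}(z;q,t)$, while the skew function $R^{\ast}_{(\mu,0)/\boldsymbol{0}}$ carrying the two extra arguments $t^{-1/2}t_4,t^{-1/2}t_5$ degenerates to $P_{\mu}\big([z+(t-b)/(1-t)]\big)$, the plethystic shift being produced precisely by those two arguments. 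Reading off, for instance, the factors $(at^{n-m-i}q^{\la_i};q)_{\infty}$ and $(bt^{i-1};q)_{\infty}$ on the right then fixes the identification of the surviving parameters with $a$ and $b$, and the evaluations $P_{\la}[(1-t^n)/(1-t)]$ and $P_{\mu}[(1-bt^{n-1})/(1-t)]$ emerge as limits of the principal specialisations hidden in the elliptic right-hand side.

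The main obstacle is that the entire integrand of Theorem~\ref{Thm_eAFLT} is $BC_n$-symmetric (invariant under each $z_i\mapsto z_i^{-1}$) both before and after $p\to 0$, whereas the target is a genuine $A_{n-1}$ integral whose density $\prod_i(a/z_i,qz_i/a;q)_{\infty}/(b/z_i,z_i;q)_{\infty}\,\prod_{i<j}(z_i/z_j,z_j/z_i;q)_{\infty}/(tz_i/z_j,tz_j/z_i;q)_{\infty}$ is not symmetric under $z_i\mapsto z_i^{-1}$. A symmetry-breaking degeneration must therefore accompany $p\to 0$: driving the interpolation parameter $t^{n-1/2}t_1$ to a pole-accumulation boundary — the same limit that reduces $R^{\ast}_{(\la,0)/\boldsymbol{0}}$ to $P_{\la}$ — pushes the contour $C_{\bla\bmu}$ off the doubled $BC_n$ locus, so that the factors $(z_iz_j,1/(z_iz_j);q)_{\infty}$ and $(z_i^{\pm2};q)_{\infty}$ drop out and only the $A_{n-1}$ half of the density survives, supplying, together with the interpolation-function normalisations, the factor $2^n$ needed to reconcile $\kappa_n$ with $1/\big(n!(2\pi\iup)^n\big)$. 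Making this rigorous — justifying the interchange of limit and integral by a dominated-convergence estimate uniform on $C_{\bla\bmu}$, tracking the contour as it degenerates to $\mathbb{T}^n$ without crossing poles, and checking that no residues are gained or lost — is where the real work lies; once the type reduction is under control, matching the two closed forms is a routine if lengthy comparison of $(\,\cdot\,;q)_{\infty}$-products.
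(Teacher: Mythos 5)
Your overall strategy --- specialise $\bla=(\la,0)$, $\bmu=(\mu,0)$ in Theorem~\ref{Thm_eAFLT} and let $p\to 0$ --- is the paper's strategy, and you have correctly identified the crux: the elliptic integrand is $\mathrm{BC}_n$-symmetric while the target density is only $\mathfrak{S}_n$-symmetric, so a symmetry-breaking step must accompany the limit. But the mechanism you propose for that step does not work. Deforming or degenerating the contour $C_{\bla\bmu}$ cannot make factors such as $\Gamma(z_iz_j^{\phantom{-}};p,q)^{-1}$ or $\Gamma(z_i^{-2};p,q)^{-1}$ ``drop out'' of the integrand --- the integrand is a fixed function, unchanged by any choice of contour --- and it certainly cannot manufacture the factor $2^n$ needed to cancel the $2^n$ in $\kappa_n$. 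Nor does the balancing condition pin one of the $t_r$ at $0$: setting any $t_r=0$ outright would kill factors of the density that must survive. What the paper actually does is insert the theta-function partition of unity of \cite[Corollary~1.2]{Rains06},
\[
1=\sum_{\sigma\in\{\pm 1\}^n} \prod_{i=1}^n
\frac{\theta(t_1 z_i^{\sigma_i},t_3 z_i^{\sigma_i},t_4 z_i^{\sigma_i},
t^{n-1}t_1t_3t_4z_i^{-\sigma_i};q)}
{\theta(z_i^{2\sigma_i},t^{i-1}t_1t_3,t^{i-1}t_1t_4,t^{i-1}t_3t_4;q)}
\prod_{1\leq i<j\leq n} \frac{\theta(t z_i^{\sigma_i}z_j^{\sigma_j};q)}
{\theta(z_i^{\sigma_i}z_j^{\sigma_j};q)},
\]
use the $\mathrm{BC}_n$ symmetry of the rest of the integrand to collapse the $2^n$ resulting terms into $2^n$ copies of the $\sigma=(1,\dots,1)$ term, and then absorb the theta functions into the elliptic gammas via $\Gamma(pz;p,q)=\theta(z;q)\Gamma(z;p,q)$. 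This is what breaks the symmetry at the level of the integrand: half of each doubled factor acquires an extra $p$ in its argument and hence tends to $1$ rather than to a $q$-infinite product.

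The second missing ingredient is the rescaling. Your ``termwise $\lim_{p\to0}\Gamma(z;p,q)=1/(z;q)_\infty$'' is only valid when the argument stays away from $0$ and $\infty$ as $p\to 0$; to control all factors simultaneously the paper substitutes $(t_1,\dots,t_6)\mapsto(t_1,p^{1/2}t_2,t_3,p^{-1/4}t_4,p^{1/4}t_5,p^{1/2}t_6)$ (making the balancing condition $p$-free rather than sending $t_1\cdots t_6$ to zero), rescales the contour and the variables by $p^{-1/4}$, and only then takes $p\to 0$, invoking \eqref{Eq_skewR-limit} with $\alpha=1/4$, $\beta=1/2$ for the two interpolation functions. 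Your identification of the plethystic shift $(t-b)/(1-t)$ as coming from the extra arguments $t^{-1/2}t_4,t^{-1/2}t_5$, and the final matching $(a,b)=(t^{n-1}t_1t_3t_4t_5,t_4t_5)$, are consistent with this, but without the partition-of-unity insertion and the fractional-power rescaling the limit you describe cannot be carried out.
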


For $\la=\mu=0$ the above integral may also be found in
\cite[Section~6]{vdBR11} as a special case of the biorthogonality relation
for multivariable Pastro polynomials.
This special case also easily follows by combining the $_1\Psi_1$ summation
for Macdonald polynomials \cite{Kaneko98} with the orthogonality and
quadratic norm evaluation of the Macdonald polynomials with respect 
to the scalar product $\langle \cdot,\cdot\rangle'_n$, see 
\eqref{Eq_scalar-product}.

\bigskip
The remainder of this paper is organised as follows.
In the next section we review some basic material from the theory of
symmetric functions.
This includes a discussion of Schur, Jack and Macdonald polynomials
as well as the heavily-used plethystic notation.
In Section~\ref{Sec_Cauchy} we present some $\A_n$ generalisations of
the classical Cauchy identity for Macdonald polynomials. 
Then, in Section~\ref{Sec_AFLT}, we show that such Cauchy identities are
essentially discrete analogues of $\A_n$ AFLT integrals, leading to a
proof of Theorem~\ref{Thm_An-AFLT} and its companion given in 
Theorem~\ref{Thm_An-alt}.
In Section~\ref{Sec_An-AFLT-Schur} we use integral formulas of Cauchy-type
for complex Schur functions to prove Theorem~\ref{Thm_nplusone}.
In Section~\ref{Sec_E-Selberg} we review some of the theory of
elliptic interpolation functions and use this to prove the elliptic AFLT 
integral of Theorem~\ref{Thm_eAFLT}.
Finally, in Section~\ref{Sec_open}, we discuss some open problems stemming
from our work.

\section{Preliminaries}

\subsection{Partitions}

Throughout this paper $\mathbb{N}$ denotes the set of nonnegative integers.

A partition $\la$ is a weakly decreasing sequence of nonnegative
integers $(\la_1,\la_2,\dots)$ with only finitely many of the $\la_i$ nonzero.
The positive $\la_i$ are called the parts of $\la$, and the number of parts 
is called the length, denoted by $l(\la)$. 
The sum of the $\la_i$ is denoted by $\abs{\la}$, and if $\abs{\la}=n$
we say that $\la$ is a partition of $n$ and write $\la\vdash n$.
With the possible exception of finitely many zeros, we usually drop 
the infinite string of zeros of a partition so that
$\la=(\la_1,\dots,\la_n)$ denotes a partition of length at most $n$.
The set of all partitions and the set of partitions of length at most $n$
are denoted by $\mathscr{P}$ and $\mathscr{P}_n$ respectively.
In particular, $\mathscr{P}_0=\{0\}$ where, by mild abuse of notation, the
unique partition of zero is denoted by $0$.

We identify a partition with its Young diagram, which is the set of all
$(i,j)\in\mathbb{N}^2$ such that $1\leq i\leq l(\la)$ and $1\leq j\leq \la_i$. 
This may be visualised as a left-justified array of squares with $\la_i$ 
squares in row $i$.
For example the Young diagram of the partition $(6,4,3,1,1)$ is
\smallskip
\begin{center}
\begin{tikzpicture}[scale=0.4]
\foreach \i [count=\ii] in {6,4,3,1,1}
\foreach \j in {1,...,\i}{\draw (\j,1-\ii) rectangle (\j+1,-\ii);}
\end{tikzpicture}
\end{center}
The conjugate of $\la$, which is denoted by $\la'$, is obtained by 
reflecting $\la$ in the main diagonal.
Hence $(6,4,3,1,1)'=(5,3,3,2,1,1)$. 
Given a square $s=(i,j)\in \mathbb{N}^2$ and a partition $\la$
we define the (generalised) arm-length, leg-length, arm-colength, 
and leg-colength of $s$ by
\begin{align}\label{Eq_arm}
a_{\la}(s)&:=\la_i-j, \hspace{-15mm} & l_{\la}(s)&:=\la'_j-i, \\
a'(s)&:=j-1, \hspace{-15mm} & l'(s)&:=i-1, \notag
\end{align}
respectively. 
If $s\in\la$ this reduces to the standard definition in
\cite[p.~337]{Macdonald95}.
Note that with the above notation the function $E$ in \eqref{Eq_E}
may also be written as
\[
E(u,\la,\mu,s)=u-b\, l_{\mu}(s)+b^{-1}(a_{\la}(s)+1),
\]
where $s\in\la$.

A frequently encountered statistic on partitions is
\[
n(\la):=\sum_{i\geq1}(i-1)\la_i=\sum_{i\geq1}\binom{\la_i'}{2}
=\sum_{s\in\la}l'(s).
\]

All of the previous notation regarding partitions is extended to
bipartitions $\bla\in\mathscr{P}^2$ in the obvious way. 
In particular, $\bmu\subseteq\bla$ will be used as shorthand for 
$\mur{1}\subseteq\lar{1}$ and $\mur{2}\subseteq\lar{2}$,
and $\boldsymbol{0}:=(0,0)$. 

\subsection{Generalised shifted factorials}

For $n$ a nonnegative integer and $b$ an indeterminate or 
complex number, the Pochhammer symbol $(b)_n$ is defined as
\[
(b)_n:=\prod_{i=0}^{n-1}(b+i),
\]
where an empty product is to be taken as $1$.
We generalise this to complex $z$ by
\begin{equation}\label{Eq_bz}
(b)_z:=\frac{\Gamma(b+z)}{\Gamma(b)},
\end{equation}
where now it is assumed that $b\in\mathbb{C}$ and neither $b$ nor
$b+z$ are nonpositive integers.
Similarly, for indeterminate or arbitrary complex $b$ and $q$,
the $q$-shifted factorial is defined as
\begin{equation}\label{Eq_q-Poch}
(b;q)_n:=\prod_{i=0}^{n-1}(1-bq^i),
\end{equation}
where $n\in\mathbb{N}\cup\{\infty\}$ and where
$\abs{q}<1$ in the infinite product case.
When $0<q<1$ this can again be extended to complex $z$
by
\[
(b;q)_z:=\frac{(b;q)_{\infty}}{(bq^z;q)_\infty}.
\]
This in particular implies that for $n$ a positive integer,
\begin{equation}\label{Eq_q-Poch-vanishing}
\frac{1}{(q;q)_{-n}}=0.
\end{equation}
In terms of the $q$-gamma function
\[
\Gamma_q(z):=\frac{(q;q)_{\infty}}{(q^z;q)_{\infty}}\,(1-q)^{1-z},
\]
and (for $z\in\mathbb{C}\setminus\{0,-1,-2,\dots\}$ and $0<q<1$),
\[
(q^b;q)_n=(1-q)^n \, \frac{\Gamma_q(b+n)}{\Gamma_q(b)}.
\]
A generalisation of \eqref{Eq_q-Poch} to partitions is given by
\begin{equation}\label{Eq_qt-Poch-def}
(b;q,t)_{\la}:=\prod_{s\in\la}\big(1-bq^{a'(s)}t^{-l'(s)}\big)
=\prod_{i\geq 1} (bt^{1-i};q)_{\la_i}.
\end{equation}
Setting $t=q^\gamma$ and replacing $b$ by $q^b$, and then letting $q$ tend
to $1$ we obtain an analogue of the Pochhammer symbol indexed by partitions 
\[
(b;\gamma)_{\la}:=\prod_{i\geq 1} \big(b+(1-i)\gamma\big)_{\la_i}.
\]
Also frequently used in this paper are the generalised hook polynomials
\begin{subequations}\label{Eq_hook}
\begin{align}
c_{\la}(q,t) &:= \prod_{s\in\la} \big(1-q^{a_{\la}(s)}t^{l_{\la}(s)+1}\big) 
=\prod_{i=1}^n(t^{n-i+1};q)_{\la_i}
\prod_{1\leq i<j\leq n}\frac{(t^{j-i};q)_{\la_i-\la_j}}
{(t^{j-i+1};q)_{\la_i-\la_j}}, \\
c'_{\la}(q,t)&:= \prod_{s\in\la}\big(1-q^{a_{\la}(s)+1}t^{l_{\la}(s)}\big)
=\prod_{i=1}^n(qt^{n-i};q)_{\la_i}
\prod_{1\leq i<j\leq n}\frac{(qt^{j-i-1};q)_{\la_i-\la_j}}
{(qt^{j-i};q)_{\la_i-\la_j}}, \\
b_{\la}(q,t)&:=\frac{c_{\la}(q,t)}{c'_{\la}(q,t)}. 
\end{align}
\end{subequations}
Note that the choice of $n$ on the right of the first two equations is 
irrelevant as long as $n\geq l(\la)$. 

Finally, at the top-level we have the elliptic shifted
factorials and gamma function.
To define the former we need the modified theta function
\[
\theta(z;p):=(z;p)_{\infty}(p/z;p)_{\infty},
\]
where $z\in\mathbb{C}^{\ast}$ and $p\in\mathbb{C}$ such that $\abs{p}<1$.
Then the elliptic shifted factorial is given by
\[
(b;q,p)_n:=\prod_{i=0}^{n-1}\theta(bq^i;p),
\]
so that $(b;q,0)_n=(b;q)_n$.
If $\Gamma(z;p,q)$ denotes the elliptic gamma function \cite{Ruijsenaars97}
\[
\Gamma(z;p,q):=\prod_{i,j=0}^{\infty}\frac{1-p^{i+1}q^{j+1}/z}{1-zp^iq^j},
\]
which satisfies the reflection formula $\Gamma(z;p,q)\Gamma(pq/z;p,q)=1$,
then, in analogy with \eqref{Eq_bz},
\[
(b;q,p)_n=\frac{\Gamma(bq^n;p,q)}{\Gamma(b;p,q)}.
\]
Similarly, the elliptic generalisation of \eqref{Eq_qt-Poch-def} is
\[
(b;q,t;p)_{\la}:=\prod_{s\in\la} \theta\big(bq^{a'(s)}t^{-l'(s)};p\big)=
\prod_{i\geq 1} (bt^{1-i};q,p)_{\la_i}.
\]
To shorten some of our expressions we also use the following shorthand 
for the ``well-poised'' ratio of products of elliptic shifted factorials:
\[
\Delta^0_{\la}(a\vert b_1,\dots,b_n;q,t;p)
:=\prod_{i=1}^n \frac{(b_i;q,t;p)_{\la}}{(pqa/b_i;q,t;p)_{\la}},
\]
from which it is clear that which it is clear that
\begin{equation}\label{Eq_Delta_sym}
\Delta^0_{\la}(a\vert b_1,\dots,b_n;q,t;p)=
\frac{1}{\Delta^0_{\la}(a\vert pqa/b_1,\dots,pqa/b_n;q,t;p)}.
\end{equation}
Finally we need 
\begin{align*}
C^{-}_{\la}(b;q,t;p)&:=
\prod_{s\in\la} \theta\big(bq^{a_{\la}(s)}t^{l_{\la}(s)};p\big) \\
C^{+}_{\la}(b;q,t;p)&:=
\prod_{(i,j)\in\la} \theta\big(bq^{\la_i+j-1}t^{2-\la'_j-i};p\big),
\end{align*}
so that $c_{\la}(q,t)=C^{-}_{\la}(t;q,t;0)$ and 
$c'_{\la}(q,t)=C^{-}_{\la}(q;q,t;0)$.

\subsection{Symmetric functions}

Let $X=\{x_1,x_2,\dots\}$ be an alphabet of countably many variables
and $X_n=\{x_1,\dots,x_n\}$ an alphabet of cardinality $n$.
Then we denote the ring of symmetric functions in $X$ (resp.\ $X_n$) 
over the field $\mathbb{F}$ by $\La$ (resp.\ $\La_n$), see \cite{Macdonald95}.
Typically, we will work with $\mathbb{F}=\mathbb{Q}$, or the extensions 
$\mathbb{Q}(\gamma)$ and $\mathbb{Q}(q,t)$. 

Given a sequence $\alpha=(\alpha_1,\alpha_2,\dots)$ of nonnegative
integers such that $\abs{\alpha}:=\alpha_1+\alpha_2+\cdots$ is finite, we
write $\alpha^{+}$ for the unique partition obtained by reordering the 
$\alpha_i$.
Then the monomial symmetric function indexed by the partition $\la$ is
defined as
\[
m_{\la}(X):=\sum_{\alpha^{+}=\la} X^{\alpha},
\]
where $X^\alpha:=x_1^{\alpha_1}x_2^{\alpha_2}\cdots$.
Further defining $m_{\la}(X_n):=m_{\la}(X)|_{x_i=0 \text{ for } i>n}$ it
follows that $m_{\la}(X_n)=0$ if $l(\la)>n$. 
The sets $\{m_{\la}(X)\}$ and $\{m_{\la}(X_n)\}_{l(\la)\leq n}$ form bases 
of $\La$ and $\La_n$ respectively.

For $k$ a nonnegative integer the $k$th complete and elementary symmetric
functions are defined by
\[
h_k(X):=\sum_{1\leq i_1\leq\cdots\leq i_k}x_{i_1}\cdots x_{i_k},
\]
and 
\[
e_k(X):=\sum_{1\leq i_1<\cdots<i_k} x_{i_1}\cdots x_{i_k},
\]
respectively.
For $k$ a positive integer we set $e_{-k}=h_{-k}=0$.
The generating functions for the complete symmetric functions is given by
\[
\sigma_z(X):=\sum_{k\geq 0} z^k h_k(X)=\prod_{i\geq 1} \frac{1}{1-zx_i}
\]
and
\begin{equation}\label{Eq_e-GF}
\sum_{k\geq 0} z^k e_k(X)=\prod_{i\geq 1} (1+zx_i)=\frac{1}{\sigma_{-z}(X)}.
\end{equation}
We also require Newton power sums
\[
p_k(X):=\sum_{i\geq 1}x_i^k,
\]
for $k$ a positive integer. 
These admit the generating function
\[
\psi_z(X):=\sum_{k\geq 1} \frac{z^k p_k(X)}{k}=
-\sum_{i\geq 1}\log(1-zx_i),
\]
so that
\begin{equation}\label{Eq_GF-rel}
\sigma_z(X)=\eup^{\psi_z(X)}.
\end{equation}

The most important family of classical symmetric functions are the 
Schur functions, defined by the usual ratio of alternants
\begin{equation}\label{Eq_Schur-def}
s_{\la}(X_n)=
\frac{\det_{1\leq i,j\leq n}\big(x_i^{\la_j+n-j}\big)}{\Delta(X_n)},
\end{equation}
for $l(\la)\leq n$ and $0$ otherwise.
The Schur functions indexed by partitions of length at most $n$
form a basis of $\Lambda_n$.
From \eqref{Eq_Schur-def} it is not hard to derive the specialisation
formula \cite[p.~44]{Macdonald95}
\begin{equation}\label{Eq_Schur-spec}
s_{\la}(\underbrace{1,\dots,1}_{\text{$n$ times}}) = 
\prod_{i=1}^k \frac{(n-i+1)_{\la_i}}{(k-i+1)_{\la_i}}
\prod_{1\leq i<j\leq k}\frac{\la_i-\la_j+j-i}{j-i},
\end{equation}
where $k$ is an arbitrary integer such that $k\geq l(\la)$.

\subsection{Plethystic notation}\label{Sec_plethystic}

We extensively use plethystic or $\la$-ring notation when dealing with 
symmetric functions, see e.g., \cite{Haglund08,Lascoux03,RW18}. 
For $X=\{x_1,x_2,\dots\}$ an alphabet and $f(X)$ a symmetric function
in $X$ we use the additive notation
\[
f(x_1,x_2,\dots)=f(X)=f[X]=f[x_1+x_2+\cdots].
\]
Hence $X+Y$, the sum of the alphabets $X$ and $Y$, is the disjoint union
of these sets.
The above notation forces
\begin{equation}\label{Eq_addition}
p_k[X+Y]=p_k[X]+p_k[Y].
\end{equation}
A symmetric function acting on the difference of two alphabets is then
defined as
\[
p_k[X-Y]:=p_k[X]-p_k[Y].
\]
Observe that
\[
p_k[(X+Y)-Y]=p_k[X]
\]
as it should.
Inside plethystic brackets we denote the empty alphabet by $0$.
By \eqref{Eq_GF-rel} it follows that
\begin{equation}\label{Eq_sigma-sum}
\sigma_z[X+Y]=\sigma_z[X]\sigma_z[Y]
\quad\text{and}\quad
\sigma_z[X-Y]=\frac{\sigma_z[X]}{\sigma_z[Y]},
\end{equation}
and hence $\sigma_z[-X]\sigma_z[X]=1$. 
Together with \eqref{Eq_e-GF} this implies
\[
h_k[-X]=(-1)^k e_k[X],
\]
which, by the dual Jacobi--Trudi identity \cite[p.~41]{Macdonald95},
extends to Schur functions as \eqref{Eq_Schur-duality}.

For the Cartesian product of two alphabets we have
\[
p_k[XY]=p_k[X]p_k[Y],
\]
which in particular implies that if $X=x$ is an alphabet containing a
single letter then
\[
p_k[xY]=x^k p_k[Y].
\]
However, from \eqref{Eq_addition} we also have
\[
p_k[nX]:=
p_k[\underbrace{X+\dots+X}_{\text{$n$ times}}]=np_k[X],
\]
so that
\[
f[n]=f[\underbrace{1+\dots+1}_{\text{$n$ times}}]=
f(\underbrace{1,\dots,1}_{\text{$n$ times}}).
\]
We extend the above to any $z\in\mathbb{F}$ via 
\[
p_k[zX]:=zp_k[X].
\]
For $X=1$ we more simply write the above left-hand side as 
$p_k[z]$.\footnote{In \cite[p.~32]{Lascoux03} Lascoux refers to a single
letter alphabet $x$ as a rank-$1$ element of a $\la$-ring and
$z\in\mathbb{F}$ as a binomial element, since for the latter 
$e_k[z]=\binom{z}{k}$ and $h_k[z]=\binom{z+k-1}{k}$.}
Note that this leads to some notational ambiguities, and whenever not
clear from the context we will indicate if a symbol such as $x$ or $z$
represents a letter or a binomial element.

Finally, since the Cartesian product of the alphabets
$\{1,t,t^2,\dots\}=1+t+t^2+\cdots$ and $1-t$ is $1$, we adopt the standard
convention of writing the former as $1/(1-t)$.
An often occurring composite alphabet  is $(a-b)/(1-t)$ for which
\[
p_k\bigg[\frac{a-b}{1-t}\bigg]=\frac{a^k-b^k}{1-t^k}
\]
and
\begin{equation}\label{Eq_sigma-kernel}
\sigma_1\bigg[\frac{a-b}{1-t}\bigg]
=\frac{(b;t)_{\infty}}{(a;t)_{\infty}}.
\end{equation}

\subsection{Macdonald polynomials}

In this section we work with the ring of symmetric functions $\La$ over
$\mathbb{F}=\mathbb{Q}(q,t)$.

Let $\langle \cdot,\cdot\rangle:\La\times\La\to\mathbb{F}$ be 
the $q,t$-Hall scalar product on $\La$ given by \cite[p.~306]{Macdonald95}
\[
\langle p_{\la},p_{\mu}\rangle:=\delta_{\la,\mu}z_\la
\prod_{i\geq 1}\frac{1-q^{\la_i}}{1-t^{\la_i}},
\]
where $z_\la:=\prod_{i\geq 1}i^{m_i}m_i!$.
The Macdonald polynomials $P_{\la}=P_{\la}(q,t)=P_{\la}(X;q,t)$ are the 
unique symmetric functions such that
\[
\langle P_{\la},P_{\mu}\rangle=0\quad\text{if}\quad\la\neq\mu
\]
and
\[
P_{\la}=m_{\la}+\sum_{\mu<\la}u_{\la\mu} m_{\mu},
\quad u_{\la\mu}\in\mathbb{F},
\]
with respect to the usual dominance order on partitions.
Like the Schur functions, $\{P_{\la}(X)\}$ and
$\{P_{\la}(X_n)\}_{l(\la)\leq n}$ are bases for $\La$ and $\La_n$
respectively, and $P_{\la}(X_n;q,t)=0$ if $l(\la)>n$. 

We also require the skew Macdonald polynomials defined by
\begin{equation}\label{Eq_Mac-skew}
P_{\la}([X+Y];q,t)=\sum_{\mu}P_{\la/\mu}(X;q,t)P_{\mu}(Y;q,t),
\end{equation}
where it should be noted that $P_{\la/0}=P_{\la}$. 
We also note that $P_{\la/\mu}$ is homogeneous of degree $\abs{\la/\mu}$
and vanishes unless $\mu\subseteq\la$ (cf.~\cite[p.~344]{Macdonald95}).

For $q=t$ the Macdonald polynomials reduce to the Schur functions
\eqref{Eq_Schur-def}.
More generally, if we set $t=q^\gamma$ and let $q$ tend to $1$ we obtain
the Jack polynomials~\cite{Stanley89}
\[
P_{\la}^{(1/\gamma)}(X)=\lim_{q\to 1}P_{\la}(X;q,q^\gamma).
\]

Setting $Q_{\la/\mu}(q,t):=b_\la(q,t)P_{\la/\mu}(q,t)/b_{\mu}(q,t)$,
the Macdonald polynomials satisfy the (skew) Cauchy identity
\cite[p.~345]{Macdonald95}
\begin{align}\label{Eq_Cauchy}
\sum_{\la} P_{\la}(X;q,t)Q_{\la/\mu}(Y;q,t)&=P_{\mu}(X;q,t)\prod_{i,j\geq 1}
\frac{(tx_iy_j;q)_{\infty}}{(x_iy_j;q)_{\infty}} \\
&=P_{\mu}(X;q,t)\sigma_1\bigg[\frac{1-t}{1-q}\,XY\bigg], \notag
\end{align}
where the second equality follows from \eqref{Eq_sigma-sum} and
\eqref{Eq_sigma-kernel}.

Let 
\[
{_2\phi_1}\bigg[\genfrac{}{}{0pt}{}{a,b}{c};q,z\bigg]
:=\sum_{k=0}^{\infty} \frac{(a;q)_k(b;q)_k}{(c;q)_k(q;q)_k}\, z^k
\]
denote the usual $q$-analogue of the $_2F_1$ Gauss hypergeometric 
function \cite{GR04}. 

\begin{lemma}
Let $x$ and $y$ be single-letter alphabets.
Then
\begin{equation}\label{Eq_Prxminy}
P_{(r)}([x-y];q,t)=x^r \, {_2\phi_1}
\bigg[\genfrac{}{}{0pt}{}{t^{-1},q^{-r}}{q^{1-r}t^{-1}};q,\frac{yq}{x}\bigg].
\end{equation}
\end{lemma}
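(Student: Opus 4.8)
The plan is to reduce the claim to the elementary one-row case and then expand plethystically using the generating function for $Q_{(r)}$. First I would observe that setting $\mu=0$ in the Cauchy identity \eqref{Eq_Cauchy} and specialising the alphabet $Y$ to a single variable $z$ yields
\[
\sum_{r\ge 0} b_{(r)}(q,t)\,P_{(r)}[X]\,z^r=\sigma_z\Big[\tfrac{1-t}{1-q}X\Big],
\]
since $Q_\lambda$ vanishes in one variable unless $\lambda=(r)$, in which case $Q_{(r)}(z)=b_{(r)}(q,t)\,z^r$ with the one-row factor $b_{(r)}(q,t)=(t;q)_r/(q;q)_r$ read off from \eqref{Eq_hook}. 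Because the $P_{(r)}$ are ordinary symmetric functions, the plethystic substitution $X\mapsto x-y$ can be performed term by term, so that the left-hand side becomes the generating function for $b_{(r)}(q,t)\,P_{(r)}[x-y]$.

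Next I would factorise the right-hand side. By the multiplicativity \eqref{Eq_sigma-sum} and the single-letter kernel \eqref{Eq_sigma-kernel} (with base $q$) this generating function equals $\tfrac{(txz;q)_\infty\,(yz;q)_\infty}{(xz;q)_\infty\,(tyz;q)_\infty}$. Expanding the two quotients by the $q$-binomial theorem \cite{GR04} and taking the Cauchy product, the coefficient of $z^r$ is a finite sum over $l$; dividing through by $b_{(r)}(q,t)$ gives
\[
P_{(r)}[x-y]=\frac{(q;q)_r}{(t;q)_r}\sum_{l=0}^{r}\frac{(t;q)_{r-l}}{(q;q)_{r-l}}\,\frac{(t^{-1};q)_l}{(q;q)_l}\,t^l\,x^{r-l}y^l.
\]

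Finally I would recast this as a terminating ${}_2\phi_1$. Pulling out $x^r$, the factor $(t^{-1};q)_l/(q;q)_l$ already matches the target, so it remains only to verify the single identity
\[
\frac{(q;q)_r}{(t;q)_r}\,\frac{(t;q)_{r-l}}{(q;q)_{r-l}}\,t^l=\frac{(q^{-r};q)_l}{(q^{1-r}t^{-1};q)_l}\,q^l,
\]
after which each summand becomes the general term of the asserted series with argument $yq/x$. This Pochhammer manipulation is the only genuine obstacle, and I expect to dispatch it using $(t;q)_{r-l}/(t;q)_r=1/(tq^{r-l};q)_l$, $(q;q)_r/(q;q)_{r-l}=(q^{r-l+1};q)_l$, and the reversal relations $(q^{-r};q)_l=(-1)^l q^{\binom{l}{2}-rl}(q;q)_r/(q;q)_{r-l}$ and $(q^{1-r}t^{-1};q)_l=(-1)^l t^{-l}q^{\binom{l+1}{2}-rl}(tq^{r-l};q)_l$; substituting these collapses both sides to $q^l\,(q^{-r};q)_l/(q^{1-r}t^{-1};q)_l$ via $\binom{l+1}{2}-\binom{l}{2}=l$, completing the proof.
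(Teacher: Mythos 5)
Your proposal is correct and follows essentially the same route as the paper: set $\mu=0$ in the Cauchy identity \eqref{Eq_Cauchy}, substitute $X\mapsto x-y$ and a single-letter $Y$, factor the kernel via \eqref{Eq_sigma-sum} and \eqref{Eq_sigma-kernel}, expand by the $q$-binomial theorem, extract the degree-$r$ coefficient, and rewrite the resulting sum as the terminating ${}_2\phi_1$. The only differences are cosmetic — you keep a generating variable $z$ where the paper sets $Y=1$ and equates homogeneous components, and you spell out the Pochhammer reversal (correctly, with the coefficient $b_{(r)}(q,t)=(t;q)_r/(q;q)_r$) that the paper leaves as "equivalent to \eqref{Eq_Prxminy}".
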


\begin{proof}
If we set $\mu=0$ in \eqref{Eq_Cauchy} and then replace 
$(X,Y)\mapsto (x-y,1)$ this yields
\[
\sum_{r\geq 0} \frac{(t;q)_r}{(t;t)_r}\,  P_{(r)}([x-y];q,t)
=\sigma_1\bigg[\frac{1-t}{1-q}\,(x-y)\bigg]
=\frac{(tx;q)_{\infty}(y;q)_{\infty}}{(x;q)_{\infty}(ty;q)_{\infty}}.
\]
Using the $q$-binomial theorem \cite[Equation~(II.3)]{GR04} to expand
the right-hand side as a power series in $x$ and $y$ leads to 
\[
\sum_{r\geq 0} \frac{(t;q)_r}{(t;t)_r}\,P_{(r)}([x-y];q,t)
=\sum_{k,\ell\geq 0} \frac{(t;q)_{\ell}(t^{-1};q)_k}
{(q;q)_{\ell}(q;q)_k}\, x^{\ell} (ty)^k.
\]
Equating terms of homogeneous degree $r$ in $x,y$ gives
\[
\frac{(t;q)_r}{(t;t)_r}\, P_{(r)}([x-y];q,t)
=\sum_{k=0}^r \frac{(t;q)_{r-k}(t^{-1};q)_k}
{(q;q)_{r-k}(q;q)_k} \, x^{r-k} (ty)^k,
\]
which is equivalent to \eqref{Eq_Prxminy}.
\end{proof}

For $\la\in\mathscr{P}_n$ define the spectral vector
\[
\spec{\la}_n=\spec{\la}_{n;q,t}:=(q^{\la_1}t^{n-1},q^{\la_2}t^{n-2},\dots,
q^{\la_{n-1}}t,q^{\la_n}t^0),
\]
which, depending on the context, we will also interpret plethystically as
\[
\spec{\la}_n=q^{\la_1}t^{n-1}+q^{\la_2}t^{n-2}+\dots+q^{\la_{n-1}}t+
q^{\la_n}t^0.
\]
Then the principal specialisation formula for Macdonald polynomials
\cite[p.~337]{Macdonald95} can be written as 
\begin{equation}\label{Eq_princ-spec}
P_{\la}[\spec{0}_n]=P_{\la}\bigg[\frac{1-t^n}{1-t}\bigg]
=\frac{t^{n(\la)}(t^n;q,t)_\la}{c_{\la}(q,t)},
\end{equation}
where from hereon we mostly suppress the dependence of the Macdonald
polynomials on $q$ and $t$.
By a polynomial argument \eqref{Eq_princ-spec} is equivalent to 
\cite[p.~338]{Macdonald95}
\begin{equation}\label{Eq_a-spec}
P_{\la}\bigg[\frac{1-a}{1-t}\bigg]=
\frac{t^{n(\la)}(a;q,t)_{\la}}{c_{\la}(q,t)}.
\end{equation}

Replacing $(a,t)\mapsto (q^{z \gamma},q^{\gamma})$ and letting
$q$ tend to one in \eqref{Eq_a-spec} yields the following expression
for the Jack polynomial evaluated at a binomial element $z$:
\begin{equation}\label{Eq_Jack-z}
P_{\la}^{(1/\gamma)}[z]=
\frac{(z \gamma;\gamma)_{\la}}{(k\gamma;\gamma)_{\la}}
\prod_{1\leq i<j\leq k}\frac{((j-i+1)\gamma)_{\la_i-\la_j}}
{((j-i)\gamma)_{\la_i-\la_j}},
\end{equation}
where $k$ is an arbitrary integer such that $k\geq l(\la)$.
For $\gamma=1$ and $z=n$, with $n$ a nonnegative integer
this reduces to \eqref{Eq_Schur-spec}.

The evaluation symmetry of the Macdonald polynomials 
\cite[p.~332]{Macdonald95} may expressed in terms of spectral vectors as 
\begin{equation}\label{Eq_eval-sym}
P_{\mu}[\spec{0}_n]P_{\la}[\spec{\mu}_n]=
P_{\la}[\spec{0}_n]P_{\mu}[\spec{\la}_n],
\end{equation}
where $\la,\mu\in\mathscr{P}_n$.
We require a more general form of this symmetry, which is a nonsymmetric 
version of \cite[Proposition~2.1]{Warnaar10}.

\begin{lemma}\label{Lem_gen-eval-sym}
For $\la\in\mathscr{P}_n$ and $\mu\in\mathscr{P}_m$,
\begin{equation}\label{Eq_gen-eval-sym}
P_{\mu}\bigg[\frac{1-a}{1-t}\bigg]P_{\la}\bigg[at^{-m}\spec{\mu}_m
+\frac{1-at^{-m}}{1-t}\bigg]
=P_{\la}\bigg[\frac{1-a}{1-t}\bigg]P_{\mu}\bigg[at^{-n}\spec{\la}_n
+\frac{1-at^{-n}}{1-t}\bigg].
\end{equation}
\end{lemma}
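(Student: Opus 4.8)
The plan is to derive the generalised evaluation symmetry \eqref{Eq_gen-eval-sym} from the ordinary evaluation symmetry \eqref{Eq_eval-sym} by a polynomiality-plus-specialisation argument, treating $a$ as a free parameter. First I would embed both $\la$ and $\mu$ into a common alphabet of size $N\geq\max\{n,m\}$, so that both may be regarded as elements of $\mathscr{P}_N$; note that $P_\nu[\spec{0}_N]$ and $P_\nu[\spec{\kappa}_N]$ are unchanged under padding a partition with zero parts, since the spectral vector \eqref{Eq_e-spec} merely acquires extra entries $t^{N-i}$ that correspond to the vanishing parts, and the principal specialisation \eqref{Eq_princ-spec} together with \eqref{Eq_a-spec} is insensitive to the choice of $N\geq l(\nu)$. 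With both partitions in $\mathscr{P}_N$, the symmetric form \eqref{Eq_eval-sym} reads $P_{\mu}[\spec{0}_N]P_{\la}[\spec{\mu}_N]=P_{\la}[\spec{0}_N]P_{\mu}[\spec{\la}_N]$.

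The key step is to recognise the plethystic arguments in \eqref{Eq_gen-eval-sym} as the $a=t^N$ (equivalently $\spec{0}_N$) specialisation of a more flexible expression. Using \eqref{Eq_a-spec} one has
\[
P_{\mu}\bigg[\frac{1-a}{1-t}\bigg]=\frac{t^{n(\mu)}(a;q,t)_{\mu}}{c_{\mu}(q,t)},
\]
so the left-hand side of \eqref{Eq_eval-sym}, after the substitution $a\mapsto t^N$, becomes exactly $P_{\mu}[(1-a)/(1-t)]$ evaluated at $a=t^N$. For the second factor I would observe that
\[
\spec{\mu}_N=q^{\mu_1}t^{N-1}+\dots+q^{\mu_m}t^{N-m}+t^{N-m-1}+\dots+t^0
=t^{N-m}\spec{\mu}_m+\frac{1-t^{N-m}}{1-t},
\]
where the first $m$ entries carry the parts of $\mu$ and the remaining $N-m$ entries are the geometric string coming from the padding zeros. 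Setting $a=t^N$ turns $t^{N-m}\spec{\mu}_m$ into $a t^{-m}\spec{\mu}_m$ and $(1-t^{N-m})/(1-t)$ into $(1-at^{-m})/(1-t)$, so $P_\la[\spec{\mu}_N]=P_\la[at^{-m}\spec{\mu}_m+(1-at^{-m})/(1-t)]$ at $a=t^N$. An identical rewriting handles $P_\mu[\spec{\la}_N]=P_\mu[at^{-n}\spec{\la}_n+(1-at^{-n})/(1-t)]$. Thus both sides of \eqref{Eq_gen-eval-sym} agree with the two sides of \eqref{Eq_eval-sym} whenever $a=t^N$.

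The final step is to upgrade this from the discrete values $a=t^N$ ($N\in\mathbb{N}$) to all complex (or indeterminate) $a$ by a polynomiality argument. Each side of \eqref{Eq_gen-eval-sym}, via the expansion $P_\nu[b\,\spec{\kappa}_{\bullet}+(1-b)/(1-t)]$ as a polynomial in $b=at^{-m}$ or $b=at^{-n}$ (the Macdonald polynomial is a polynomial in its power sums, and each $p_k$ of these alphabets is a Laurent polynomial in $a$), is a Laurent polynomial in $a$ over $\mathbb{Q}(q,t)$ of bounded degree. Since the two sides agree at the infinitely many points $a=t^N$, $N\geq\max\{n,m\}$ — which are distinct as formal monomials in $t$ — they must coincide identically as elements of $\mathbb{Q}(q,t)[a,a^{-1}]$. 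I expect the main obstacle to be purely bookkeeping: verifying cleanly that the padding of $\la$ and $\mu$ to the same length $N$ produces precisely the geometric tail $(1-t^{N-m})/(1-t)$ in the spectral vector and that all the $t^{N-\bullet}$ shifts reorganise into the stated $at^{-m}$ and $at^{-n}$ factors, so that no spurious dependence on $N$ survives. Once this rewriting is pinned down, the polynomial-interpolation conclusion is routine.
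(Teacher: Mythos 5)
Your proposal is correct, but it takes a different route from the paper. The paper's proof imports \cite[Proposition~2.1]{Warnaar10} --- which is already the general-$a$, $m=n$ case \eqref{Eq_gen-princ-spec-proof} --- as a black box, then reduces general $m\leq n$ to it via the decomposition $a\spec{\mu}_n=at^{n-m}\spec{\mu}_m+(a-at^{n-m})/(1-t)$, rescales $a\mapsto at^{-n}$, and drops the constraint $m\leq n$ by symmetry. You instead start from the classical evaluation symmetry \eqref{Eq_eval-sym} alone, pad both partitions to a common length $N$, use the analogous decomposition $\spec{\mu}_N=t^{N-m}\spec{\mu}_m+(1-t^{N-m})/(1-t)$ to identify both sides of \eqref{Eq_gen-eval-sym} at $a=t^N$ with the two sides of \eqref{Eq_eval-sym}, and then upgrade to indeterminate $a$ by polynomial interpolation (both sides are polynomials in $a$ over $\mathbb{Q}(q,t)$, since \eqref{Eq_a-spec} and the power-sum expansion make each factor polynomial in $a$, and the points $t^N$ are pairwise distinct in $\mathbb{Q}(q,t)$). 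The spectral-vector bookkeeping you flag as the main risk is exactly right and checks out. What your approach buys is self-containment --- you need only \cite[p.~332]{Macdonald95} rather than the external general-$a$ result --- at the cost of an interpolation step; what the paper's approach buys is brevity, since the hard part (general $a$) is already done in the cited proposition. The two are complementary: your argument is essentially a proof of that cited proposition as well.
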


\begin{proof}
For $m=n$ and $a\mapsto at^n$ this is \cite[Proposition~2.1]{Warnaar10}
\begin{equation}\label{Eq_gen-princ-spec-proof}
P_{\mu}\bigg[\frac{1-at^n}{1-t}\bigg]
P_{\la}\bigg[a\spec{\mu}_n+\frac{1-a}{1-t}\bigg]=
P_{\la}\bigg[\frac{1-at^n}{1-t}\bigg]
P_{\mu}\bigg[a\spec{\la}_n+\frac{1-a}{1-t}\bigg].
\end{equation}
Fixing an integer $m$ such that $l(\mu)\leq m\leq n$ we have
\[
a\spec{\mu}_n=at^{n-m}\spec{\mu}_m+\frac{a-at^{n-m}}{1-t}.
\]
Therefore \eqref{Eq_gen-princ-spec-proof} becomes
\[
P_{\mu}\bigg[\frac{1-at^n}{1-t}\bigg]
P_{\la}\bigg[at^{n-m}\spec{\mu}_m+\frac{1-at^{n-m}}{1-t}\bigg]=
P_{\la}\bigg[\frac{1-at^n}{1-t}\bigg]
P_{\mu}\bigg[a\spec{\la}_n+\frac{1-a}{1-t}\bigg].
\]
Scaling $a\mapsto at^{-n}$ results in \eqref{Eq_gen-eval-sym}.
Since this is symmetric in $m$ and $n$ the restriction $m\leq n$ may
be dropped.
\end{proof}

\section{Cauchy-type identities}\label{Sec_Cauchy}

An important special case of the Cauchy identity \eqref{Eq_Cauchy} ---
obtained by taking $\mu=0$, making the plethystic substitution
$Y\mapsto (1-a)/(1-t)$ and using the specialisation formula
\eqref{Eq_a-spec} --- is the Kaneko--Macdonald $q$-binomial theorem 
\cite{Kaneko96,Macdonald13}
\[
\sum_{\la}\frac{t^{n(\la)}(a;q,t)_{\la}}{c_{\la}'(q,t)}\,P_{\la}(X;q,t)
=\prod_{i\geq 1}\frac{(ax_i;q)_\infty}{(x_i;q)_\infty}.
\]
An $\A_n$ analogue of this formula is given in
\cite[Theorem~3.2]{Warnaar09} and then applied to prove the $\nu=0$ case of 
Theorem~\ref{Thm_An-AFLT}.
In order to prove Theorem~\ref{Thm_An-AFLT} in full we require an 
$\A_n$ Cauchy-type identity which simultaneously generalises 
\eqref{Eq_Cauchy} and \cite[Theorem~3.2]{Warnaar09}.
This is the content of Theorem~\ref{Thm_An-Cauchy} below.
 
\subsection{Identities for skew Macdonald polynomials}

For partitions $\la,\mu$ and $k\in\mathbb{N}$, 
$\ell\in\mathbb{N}\cup\{\infty\}$ such that
$k\geq l(\la)$ and $\ell\geq l(\mu)$, define \cite{Warnaar10}\footnote{For 
the relationship between $f_{\la,\mu}^{k,\ell}(a;q,t)$ and Nekrasov-type
functions, see \cite[Equation (B.4)]{IMP16}.}
\[
f_{\la,\mu}^{k,\ell}(a;q,t)
:=t^{-k\abs{\mu}} \prod_{i=1}^k\prod_{j=1}^\ell 
\frac{(aqt^{j-i-1};q)_{\la_i-\mu_j}}{(aqt^{j-i};q)_{\la_i-\mu_j}}.
\]
By $(a;q)_{-n}/(b;q)_{-n}=(b/a)^n (q/b;q)_n/(q/a;q)_n$
it follows that
\[
f_{\la,\mu}^{k,\ell}(a;q,t)=f_{\mu,\la}^{\ell,k}(t/aq;q,t)
\]
provided $\ell$ is finite.
For infinite $\ell$, let $t^{\ell}:=0$.
Our proof of Theorem~\ref{Thm_An-Cauchy}, given in 
Section~\ref{Sec_Cauchy-type-proof}, hinges on the following summation
formula for skew Macdonald polynomials \cite[Theorem~3.4]{Warnaar10}.

\begin{proposition}\label{Prop_skew-sum}
For partitions $\la$ and $\mu$, we have
\begin{equation}\label{Eq_skew-sum}
\sum_{\nu}t^{-\abs{\nu}}P_{\mu/\nu}\bigg[\frac{1-1/a}{1-t}\bigg]
Q_{\la/\nu}\bigg[\frac{1-aq/t}{1-t}\bigg]
=P_{\mu}\bigg[\frac{1-t^k/a}{1-t}\bigg]
Q_\la\bigg[\frac{1-aqt^{\ell-1}}{1-t}\bigg]
f_{\la,\mu}^{k,\ell}(a;q,t),
\end{equation}
where $k\in\mathbb{N}$ and $\ell\in\mathbb{N}\cup\{\infty\}$ may be chosen
arbitrarily, provided that $k\geq l(\la)$ and $\ell\geq l(\mu)$.
\end{proposition}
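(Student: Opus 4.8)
The plan is to determine the left-hand side of \eqref{Eq_skew-sum} by computing its Macdonald generating function and then matching coefficients. Write $A:=(1-1/a)/(1-t)$ and $B:=(1-aq/t)/(1-t)$ for the two single-parameter alphabets on the left, and let $S_{\la}$ denote the sum on the left of \eqref{Eq_skew-sum}. Introducing an auxiliary alphabet $Z$, I would study $\sum_{\la}P_{\la}(Z)\,S_{\la}$: after interchanging the order of summation I would use the skew Cauchy identity \eqref{Eq_Cauchy} in the form $\sum_{\la}P_{\la}(Z)Q_{\la/\nu}[B]=P_{\nu}(Z)\,\sigma_1[\tfrac{1-t}{1-q}ZB]$ to perform the $\la$-sum. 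Pulling the homogeneity factor into the alphabet via $t^{-|\nu|}P_{\nu}(Z)=P_{\nu}[Z/t]$ and resumming over $\nu$ with the coproduct rule \eqref{Eq_Mac-skew} then collapses everything to the clean product
\[
\sum_{\la}P_{\la}(Z)\,S_{\la}=\sigma_1\Big[\tfrac{1-aq/t}{1-q}\,Z\Big]\,P_{\mu}\big[A+Z/t\big],
\]
where the Cauchy kernel $\sigma_1[\tfrac{1-t}{1-q}ZB]$ has been reduced to the $q$-binomial kernel using multiplicativity of plethysm on the product alphabet $B$, as in \eqref{Eq_sigma-sum}.

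By linear independence of the $P_{\la}(Z)$, the proposition is equivalent to showing that the right-hand side of \eqref{Eq_skew-sum}, which I call $R_{\la}$, reproduces this generating function. Since $R_{\la}=P_{\mu}[(1-t^{k}/a)/(1-t)]\,Q_{\la}[(1-aqt^{\ell-1})/(1-t)]\,f_{\la,\mu}^{k,\ell}(a)$ and the first factor does not depend on $\la$, the task reduces to a single ``cross-term'' summation, namely that $\sum_{\la}Q_{\la}[(1-aqt^{\ell-1})/(1-t)]\,f_{\la,\mu}^{k,\ell}(a)\,P_{\la}(Z)$ equals $\sigma_1[\tfrac{1-aq/t}{1-q}Z]\,P_{\mu}[A+Z/t]/P_{\mu}[(1-t^{k}/a)/(1-t)]$. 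As a sanity check, for $\mu=0$ the $j$-product in $f_{\la,0}^{k,\ell}(a)$ telescopes against $Q_{\la}[(1-aqt^{\ell-1})/(1-t)]$ to give exactly $Q_{\la}[B]$, so the sum is the Kaneko--Macdonald $q$-binomial theorem and both $k$ and $\ell$ disappear, as they must.

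The crux, and the step I expect to be the main obstacle, is this cross-term summation for general $\mu$. The factor $f_{\la,\mu}^{k,\ell}(a)$ encodes precisely the interaction between $\la$ and the spectral data of $\mu$, and my strategy is to use the principal specialization \eqref{Eq_a-spec} to rewrite $Q_{\la}[(1-aqt^{\ell-1})/(1-t)]\,f_{\la,\mu}^{k,\ell}(a)$, up to a factor independent of $\la$, as $Q_{\la}$ evaluated at an $a$-deformed spectral vector of $\mu$. The generalised evaluation symmetry of Lemma~\ref{Lem_gen-eval-sym} then lets me interchange the roles of $\la$ and $\mu$ inside this specialization, turning the $\la$-dependence back into an honest single-parameter principal specialization; the surviving $\la$-sum is once more resolved by the $q$-binomial theorem, and the remaining $\mu$-dependent factors must assemble into $P_{\mu}[A+Z/t]/P_{\mu}[(1-t^{k}/a)/(1-t)]$.

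Two points here are delicate. First, verifying the conversion of the $f$-factor into a spectral specialization hides substantial shifted-factorial cancellation, and I expect the symmetry $f_{\la,\mu}^{k,\ell}(a)=f_{\mu,\la}^{\ell,k}(t/aq)$ recorded before the proposition to be the cleanest handle on it. Second, one must check that the answer is genuinely independent of the auxiliary integers $k\geq l(\la)$ and $\ell\geq l(\mu)$, since the left-hand side of \eqref{Eq_skew-sum} does not see them at all. I would dispatch this independence at the outset by a direct Pochhammer computation showing $R_{\la}$ is invariant under $k\mapsto k+1$ and $\ell\mapsto\ell+1$; this also licenses the convenient choice $\ell=\infty$ (with the convention $t^{\ell}:=0$), which streamlines the bookkeeping in the main step.
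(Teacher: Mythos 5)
You should first be aware that the paper does not actually prove Proposition~\ref{Prop_skew-sum}: it is imported wholesale from \cite[Theorem~3.4]{Warnaar10}, and the only argument supplied in the text is the reduction of the general $(k,\ell)$ form to the $(l(\la),l(\mu))$ form --- exactly the Pochhammer computation you defer to the end, which is indeed routine. Your opening reduction is also correct: pairing the left-hand side against $P_{\la}(Z)$, performing the $\la$-sum with the skew Cauchy identity \eqref{Eq_Cauchy}, absorbing $t^{-\abs{\nu}}$ into $P_{\nu}[Z/t]$ and resumming via \eqref{Eq_Mac-skew} does give $\sigma_1[(1-aq/t)Z/(1-q)]\,P_{\mu}[A+Z/t]$, and your $\mu=0$ sanity check is right.

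The proposal nonetheless fails at the step you yourself identify as the crux. Splitting $(c;q)_{\la_i-\mu_j}=(c;q)_{\la_i}/(cq^{\la_i-\mu_j};q)_{\mu_j}$ and telescoping the $\la$-part over $j$ gives
\[
Q_{\la}\bigg[\frac{1-aqt^{\ell-1}}{1-t}\bigg]f_{\la,\mu}^{k,\ell}(a;q,t)
=Q_{\la}\bigg[\frac{1-aq/t}{1-t}\bigg]
\prod_{i=1}^{k}\prod_{j=1}^{\ell}
\frac{(q^{-\la_i}t^{i-j}/a;q)_{\mu_j}}{(q^{-\la_i}t^{i-j+1}/a;q)_{\mu_j}},
\]
a fully factored ``well-poised'' product whose parameters run over the spectral vector of $\la$. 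For $\mu\neq 0$ this is \emph{not} equal, up to a $\la$-independent factor, to $Q_{\la}[Y]$ for any fixed alphabet $Y$ (deformed spectral vector of $\mu$ or otherwise): the functional $\la\mapsto Q_{\la}[Y]$ extends to an algebra homomorphism on $\La$, and already for $\mu=(1)$, $k=2$ the putative values at $\la=(1),(2),(1,1)$ are inconsistent with any choice of $p_1[Y],p_2[Y]$ --- for instance at $q=2$, $t=3$, $a=5$ the values at $(1)$ and $(1,1)$ force $p_1[Y]=3/4$ and $p_2[Y]=23/16$, which predict $Q_{(2)}[Y]=73/24$, whereas the expression above gives $133/54$. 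The same obstruction is visible on the target side of your reduced identity: $\sigma_1[(1-aq/t)Z/(1-q)]\,P_{\mu}[A+Z/t]$ is not a constant multiple of a single Cauchy kernel in $Z$ when $\mu\neq0$, so the $\la$-sum cannot collapse to one application of the Kaneko--Macdonald $q$-binomial theorem. Consequently the conversion to a spectral specialisation, the subsequent appeal to Lemma~\ref{Lem_gen-eval-sym}, and the final $q$-binomial summation cannot be executed, and the cross-term identity you isolate still carries the entire content of the proposition; proving it requires a genuinely different input (a Cauchy-type identity for Cauchy-type interpolation functions, as in \cite{Warnaar10}), not the principal specialisation \eqref{Eq_a-spec}.
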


In \cite{Warnaar10} the right-hand side of \eqref{Eq_skew-sum} is stated
with $\ell=k$.
Of course, since the left-hand side does not depend on $k$ and $\ell$,
the above form of the identity is not actually more general.
Indeed,
\begin{align*}
&f_{\la,\mu}^{k,\ell}(a;q,t) \\
&\quad=t^{-(k-l(\la))\abs{\mu}} f_{\la,\mu}^{l(\la),l(\mu)}(a;q,t)
\prod_{i=1}^{l(\la)}\prod_{j=l(\mu)+1}^{\ell}
\frac{(aqt^{j-i-1};q)_{\la_i}}{(aqt^{j-i};q)_{\la_i}}
\prod_{i=l(\la)+1}^k \prod_{j=1}^{l(\mu)}
\frac{(aqt^{j-i-1};q)_{-\mu_j}}{(aqt^{j-i};q)_{-\mu_j}} \\
&\quad=f_{\la,\mu}^{l(\la),l(\mu)}(a;q,t)\,
\frac{(aqt^{l(\mu)-1};q,t)_{\la}}{(aqt^{\ell-1};q,t)_{\la}}\,
\frac{(t^{l(\la)}/a;q,t)_{\mu}}{(t^k/a;q,t)_{\mu}}.
\end{align*}
Substituting this in the right-hand side of \eqref{Eq_skew-sum} and
using \eqref{Eq_a-spec} yields the identity with $k$ and $\ell$
replaced by $l(\la)$ and $l(\mu)$.
For later use we note that from the above and \eqref{Eq_a-spec} it
follows that 
\begin{equation}\label{Eq_fQfQ}
f_{\la,\mu}^{k,\infty}(a;q,t) 
Q_{\la}\bigg[\frac{1}{1-t}\bigg]
=f_{\la,\mu}^{k,\ell}(a;q,t)
Q_{\la}\bigg[\frac{1-aqt^{\ell-1}}{1-t}\bigg],
\end{equation}
where $\ell$ is an arbitrary integer such that $\ell\geq l(\mu)$. 

For our next result we would like to specialise $a=t^{k-\ell}$
(for $k\leq \ell$ and $\ell$ finite) in \eqref{Eq_skew-sum}.
Potentially this could lead to problems with the double product on the
right, and the following lemma serves to show that such a specialisation
is in fact permitted provided the resulting double product is interpreted
correctly.

\begin{lemma}\label{Lem_f-limit}
Let $\la$ and $\mu$ be partitions and $k,\ell\in\mathbb{N}$ such that
$k\leq \ell$ and such that $k\geq l(\la)$ and $\ell\geq l(\mu)$.
Then
\begin{equation}\label{Eq_f-limit}
\lim_{b\to 1} f_{\la,\mu}^{k,\ell}(bt^{k-\ell};q,t)
\end{equation}
is well-defined. Furthermore a necessary and sufficient condition for
the nonvanishing of this limit is
\begin{equation}\label{Eq_inequalities}
\la_i\geq\mu_{i-k+\ell} \quad \text{for $1\leq i\leq k$}.
\end{equation}
\end{lemma}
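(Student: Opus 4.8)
The plan is to substitute $a=bt^{k-\ell}$ directly into the definition of $f_{\la,\mu}^{k,\ell}$ and to track, factor by factor, which of the $q$-shifted factorials can produce a zero or a pole as $b\to1$. Writing $e_{ij}:=k-\ell+j-i$, the $(i,j)$-factor is $(bqt^{e_{ij}-1};q)_{\la_i-\mu_j}/(bqt^{e_{ij}};q)_{\la_i-\mu_j}$, and throughout I would treat $q,t$ as generic (indeterminates), so that a factor $1-bq^ct^d$ vanishes at $b=1$ precisely when $c=d=0$. The overall prefactor $t^{-k\abs{\mu}}$ is a nonzero constant in $b$ and plays no role. First I would dispose of the entries with $\la_i\geq\mu_j$: there both shifted factorials are genuine products $\prod_{s=0}^{\la_i-\mu_j-1}(1-bq^{s+1}t^{\,\cdot\,})$ whose $q$-exponents $s+1\geq1$ never vanish, so these factors stay nonzero at $b=1$. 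Hence the only potentially singular contributions come from entries with $\la_i<\mu_j$, where both factorials have negative index and, via $(x;q)_{-N}=1/(xq^{-N};q)_N$, the $(i,j)$-factor becomes $\prod_{s=1}^{\mu_j-\la_i}(1-bq^{1-s}t^{e_{ij}})/(1-bq^{1-s}t^{e_{ij}-1})$.

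In this product the $q$-exponent $1-s$ vanishes only for $s=1$, so each such entry contributes at most one simple zero in $b$ (from the numerator, when $e_{ij}=0$) or one simple pole in $b$ (from the denominator, when $e_{ij}=1$), each linear in $b$ at $b=1$. Since $e_{ij}=0$ means $j=i+\ell-k$ and $e_{ij}=1$ means $j=i+\ell-k+1$, I would localise all zeros to the column $j=i+\ell-k$ and all poles to the adjacent column $j=i+\ell-k+1$, with the crucial boundary observation that for $i=k$ the pole-column $j=\ell+1$ lies outside the product. Setting $\epsilon_i:=1$ if $\la_i<\mu_{i+\ell-k}$ and $0$ otherwise, and $\eta_i:=1$ if $\la_i<\mu_{i+\ell-k+1}$ and $0$ otherwise, the net order of vanishing of $f_{\la,\mu}^{k,\ell}(bt^{k-\ell};q,t)$ at $b=1$ is
\[
\nu=\sum_{i=1}^{k}\epsilon_i-\sum_{i=1}^{k-1}\eta_i
=\epsilon_k+\sum_{i=1}^{k-1}(\epsilon_i-\eta_i).
\]
Because $\mu$ is weakly decreasing we have $\mu_{i+\ell-k+1}\leq\mu_{i+\ell-k}$ and hence $\eta_i\leq\epsilon_i$, so $\nu\geq0$. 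This shows the limit \eqref{Eq_f-limit} has no pole and is therefore well-defined; moreover all factors away from the two special columns (together with the $s\geq2$ factors) are manifestly nonzero at $b=1$, so the limit equals $0$ precisely when $\nu>0$ and is nonzero precisely when $\nu=0$.

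It then remains to prove $\nu=0\iff\eqref{Eq_inequalities}$, i.e.\ $\epsilon_i=0$ for all $i$. The implication ``$\Leftarrow$'' is immediate from $\eta_i\leq\epsilon_i$. For ``$\Rightarrow$'' I expect the genuine obstacle to lie in the monotonicity argument, which is the only place requiring care. Since every summand of $\nu=\epsilon_k+\sum_{i<k}(\epsilon_i-\eta_i)$ is nonnegative, $\nu=0$ forces $\epsilon_k=0$ and $\epsilon_i=\eta_i$ for $i<k$. Suppose some $\epsilon_i=1$ and let $i^\ast$ be the largest such index, necessarily $i^\ast<k$. Then $\epsilon_{i^\ast}=\eta_{i^\ast}=1$ gives $\la_{i^\ast}<\mu_{i^\ast+\ell-k+1}$, while maximality gives $\epsilon_{i^\ast+1}=0$, i.e.\ $\la_{i^\ast+1}\geq\mu_{i^\ast+\ell-k+1}$; combined with $\la_{i^\ast+1}\leq\la_{i^\ast}$ this yields $\mu_{i^\ast+\ell-k+1}\leq\la_{i^\ast+1}\leq\la_{i^\ast}<\mu_{i^\ast+\ell-k+1}$, a contradiction. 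Hence no such $i^\ast$ exists and \eqref{Eq_inequalities} holds. The main care throughout is the bookkeeping of the $i=k$ boundary column and the verification that the unaffected factors remain nonzero, both of which rest only on the genericity of $q$ and $t$.
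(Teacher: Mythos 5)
Your proof is correct and follows essentially the same route as the paper's: both isolate the factors on the two diagonals $j=i+\ell-k$ and $j=i+\ell-k+1$ (the only ones whose $t$-exponent vanishes), deduce well-definedness from a termwise domination that ultimately rests on $\mu$ being weakly decreasing, and establish the vanishing criterion by a largest-index contradiction using that $\la$ is weakly decreasing. The only difference is bookkeeping — you count the net order of vanishing $\nu$ with indicator functions, whereas the paper collects the same factors into the ratio $\frac{1}{(bq;q)_{\la_k-\mu_\ell}}\prod_i(bq^{1+\la_{i+k-\ell}-\mu_i};q)_{\mu_i-\mu_{i+1}}$ — and your version is, if anything, slightly cleaner at the step where the per-index nonvanishing conditions are chained together.
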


The inequalities \eqref{Eq_inequalities} may conveniently be visualised as:
\begin{equation}\label{Eq_inequalities-2}
\begin{array}{ccccccccccccccccl}
&&&&\la_1&\GEQ &\la_2 &\GEQ&\cdots&\GEQ&\la_k&\GEQ&\la_{k+1}&\GEQ 
&\cdots & \GEQ & 0 \\[-1mm]
&&&&\text{\begin{turn}{270}$\geq$\end{turn}}&&
\text{\begin{turn}{270}$\geq$\end{turn}} &&&&
\text{\begin{turn}{270}$\geq$\end{turn}} &&& \\[3mm]
\mu_1&\GEQ&\cdots&\GEQ &\mu_{\ell-k+1}&\GEQ &\mu_{\ell-k+2}&\GEQ&\cdots&\GEQ
&\mu_{\ell}&\GEQ & \mu_{\ell+1} & \GEQ & \cdots & \GEQ & 0.
\end{array}
\end{equation}

\begin{remark}
It is assumed in Lemma~\ref{Lem_f-limit} that $q$ and $t$ are generic.
For the Schur case $q=t$ the equation \eqref{Eq_inequalities} has to be
replaced by
\[
\lambda_i = \mu_{j_i}+i-j_i+\ell-k
\quad \text{for $1\leq i\leq k$},
\]
where $1\leq j_1<j_2<\dots<j_k\leq\ell$.
\end{remark}

\begin{proof}
To see that the limit is well-defined, note that for fixed
$i$ the powers of $t$ in \eqref{Eq_f-limit} are zero when 
$j=i-k+\ell+1$ in the numerator and $j=i-k+\ell$ in the denominator.
Since $j\leq \ell$ this yields $k-\ell\leq i\leq k-1$ for the numerator
and $k-\ell+1\leq i\leq k$ for the denominator, with both the lower bounds
automatically satisfied since $k\leq\ell$.
Therefore, taking the product of the $t$-independent $q$-shifted factorials 
in \eqref{Eq_f-limit} and making a shift in the indices yields
\begin{equation}\label{Eq_f-lim-proof}
\frac{\prod_{i=1}^{k-1}(bq;q)_{\la_i-\mu_{i-k+\ell+1}}}
{\prod_{i=1}^k(bq;q)_{\la_i-\mu_{i-k+\ell}}}
=\frac{1}{(bq;q)_{\la_k-\mu_\ell}}\prod_{i=\ell-k+1}^{\ell-1}
{(bq^{1+\la_{i+\ell-k}-\mu_{i}};q)_{\mu_{i}-\mu_{i+1}}}.
\end{equation}
Since $\mu$ is a partition, $\mu_i\geq\mu_{i+1}$, and hence the limit 
$b\to 1$ exists.

The vanishing of the limit \eqref{Eq_f-limit} is completely determined by
the vanishing of the right-hand side of \eqref{Eq_f-lim-proof} when $b\to 1$.
Clearly the term $1/(q;q)_{\la_k-\mu_\ell}$ will vanish unless
$\la_k\geq\mu_\ell$ by \eqref{Eq_q-Poch-vanishing}.
In order for \eqref{Eq_f-lim-proof} to be nonvanishing, one of 
\begin{subequations}
\begin{align}\label{Eq_f-lim-proof-1}
\la_{i+k-\ell}&\geq\mu_i, \\
\la_{i+k-\ell}&<\mu_i=\mu_{i+1},
\label{Eq_f-lim-proof-2}
\end{align}
\end{subequations}
must hold for each $i$ such that $\ell-k+1\leq i\leq \ell-1$.
Now assume that $\la_k\geq\mu_\ell$ and one of \eqref{Eq_f-lim-proof-1}
and \eqref{Eq_f-lim-proof-2} holds.
Consider the largest $i$ for which \eqref{Eq_f-lim-proof-2} holds but
\eqref{Eq_f-lim-proof-1} does not.
We cannot have $i=\ell-1$ as this would imply
\[
\la_k<\mu_{\ell-1}=\mu_\ell
\]
contradicting $\la_k\geq\mu_\ell$.
Similarly, no such maximal $i$ exists with $i\leq \ell-1$ as we then would get
\[
\la_{i+k-\ell}<\mu_i=\mu_{i+1}.
\]
However, as \eqref{Eq_f-lim-proof-1} must now hold with $i\mapsto i+1$, 
this would give $\la_{i+k-\ell}<\la_{i+k-\ell+1}$, contradicting that
$\la$ is a partition. 
Therefore we conclude that \eqref{Eq_f-lim-proof-1} must hold for all
$\ell-k+1\leq i\leq \ell-1$. 
This is equivalent to the desired conditions by a shift of indices,
and hence we are done. 
\end{proof}

By abuse of notation, we will write $f_{\la,\mu}^{k,\ell}(t^{k-\ell};q,t)$
instead of $\lim_{b\to 1} f_{\la,\mu}^{k,\ell}(bt^{k-\ell};q,t)$
in the following.

\begin{corollary}\label{Cor_skew-sum}
Let $k,\ell\in\mathbb{N}$ such that $k\leq\ell$.
Then, for partitions $\la,\mu$ such that $l(\la)\leq k$,
\[
\sum_{\nu}t^{-\abs{\nu}}P_{\mu/\nu}\bigg[\frac{1-t^{\ell-k}}{1-t}\bigg]
Q_{\la/\nu}\bigg[\frac{1-qt^{k-\ell-1}}{1-t}\bigg]
=P_{\mu}\bigg[\frac{1-t^{\ell}}{1-t}\bigg]
Q_{\la}\bigg[\frac{1-qt^{k-1}}{1-t}\bigg]
f_{\la,\mu}^{k,\ell}(t^{k-\ell};q,t).
\]
\end{corollary}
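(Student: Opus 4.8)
The plan is to recognise the identity as the specialisation $a\to t^{k-\ell}$ of Proposition~\ref{Prop_skew-sum}. First I would substitute $a=t^{k-\ell}$ into each of the four plethystic alphabets appearing in \eqref{Eq_skew-sum} and check that they transform into the alphabets claimed in the corollary: $\frac{1-1/a}{1-t}\mapsto\frac{1-t^{\ell-k}}{1-t}$, $\frac{1-aq/t}{1-t}\mapsto\frac{1-qt^{k-\ell-1}}{1-t}$, $\frac{1-t^k/a}{1-t}\mapsto\frac{1-t^\ell}{1-t}$, and $\frac{1-aqt^{\ell-1}}{1-t}\mapsto\frac{1-qt^{k-1}}{1-t}$. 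This matching is routine, and the hypothesis $l(\la)\leq k$ is precisely the condition $k\geq l(\la)$ required by the proposition (which one applies with its standing hypothesis $\ell\geq l(\mu)$).

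The one genuine subtlety is that a naive substitution $a=t^{k-\ell}$ into the factor $f_{\la,\mu}^{k,\ell}(a;q,t)$ produces individual $q$-shifted factorials of the indeterminate form $0/0$; this is exactly why Lemma~\ref{Lem_f-limit} was established. I would therefore run the argument through the one-parameter family $a=bt^{k-\ell}$ and let $b\to 1$. Every other ingredient is manifestly regular at $a=t^{k-\ell}$: the left-hand side is a \emph{finite} sum over $\nu\subseteq\la\cap\mu$ of products of skew Macdonald polynomials evaluated at the binomial elements $\frac{1-1/a}{1-t}$ and $\frac{1-aq/t}{1-t}$, both of which stay finite at $a=t^{k-\ell}$ (since $1/a=t^{\ell-k}$ is finite), and likewise the two prefactors $P_{\mu}\big[\frac{1-t^k/a}{1-t}\big]$ and $Q_{\la}\big[\frac{1-aqt^{\ell-1}}{1-t}\big]$ on the right are continuous there.

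Since Proposition~\ref{Prop_skew-sum} is an identity of rational functions in $a$ (for fixed generic $q,t$), I can pass to the limit $b\to 1$ on both sides. The continuous factors converge to their values at $a=t^{k-\ell}$, and by Lemma~\ref{Lem_f-limit} the factor $f_{\la,\mu}^{k,\ell}(bt^{k-\ell};q,t)$ converges to the well-defined limit which, by the convention adopted immediately after that lemma, we write as $f_{\la,\mu}^{k,\ell}(t^{k-\ell};q,t)$. Matching the limiting forms of the five factors then yields the corollary. I expect no hard step here: the only delicate point, namely the existence of the $f$-limit, has already been disposed of by Lemma~\ref{Lem_f-limit}, so the proof reduces to a short limiting argument applied to an already-established identity.
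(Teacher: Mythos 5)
Your approach is exactly the one the paper intends: specialise $a=bt^{k-\ell}$ in Proposition~\ref{Prop_skew-sum}, let $b\to 1$, and invoke Lemma~\ref{Lem_f-limit} to make sense of the limit of $f_{\la,\mu}^{k,\ell}$; the alphabet bookkeeping you describe is correct, and the observation that everything else in the identity is regular at $a=t^{k-\ell}$ is the right justification for passing to the limit.

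There is, however, one gap. You write that you apply the proposition ``with its standing hypothesis $\ell\geq l(\mu)$'', but Corollary~\ref{Cor_skew-sum} imposes \emph{no} condition relating $l(\mu)$ and $\ell$ --- the paper explicitly points out that the hypothesis $l(\mu)\leq\ell$ has been dropped relative to Proposition~\ref{Prop_skew-sum}. Your limiting argument therefore only proves the corollary when $l(\mu)\leq\ell$, and the case $l(\mu)>\ell$ requires a separate (short) argument showing that both sides vanish identically. For the left-hand side: the summand is nonzero only if $\nu\subseteq\la$, $\nu\subseteq\mu$ and, because $P_{\mu/\nu}$ is evaluated at $(1-t^{\ell-k})/(1-t)$ (an alphabet of cardinality $\ell-k$), only if $\mu_{i-k+\ell}\leq\nu_i\leq\la_i$ for all $i$; taking $i=k+1$ and using $l(\la)\leq k$ forces $\mu_{\ell+1}=0$, so every term vanishes when $l(\mu)>\ell$. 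For the right-hand side, the factor $P_{\mu}\big[\tfrac{1-t^{\ell}}{1-t}\big]$ vanishes when $l(\mu)>\ell$ (or one argues via the nonvanishing criterion \eqref{Eq_inequalities} of Lemma~\ref{Lem_f-limit}). With this case added, your proof is complete and coincides with the paper's.
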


The above corollary is essentially \cite[Theorem~4.1, $u=0$]{Warnaar08}.
It should be noted that the condition $l(\mu)\leq \ell$ has been dropped
in comparison with Proposition~\ref{Prop_skew-sum} and
\cite[Theorem~4.1]{Warnaar08}, since both sides identically vanish when
$l(\mu)>\ell$. 
To see this, note that the summand vanishes unless
$\nu\subseteq\la$, $\nu\subseteq\mu$ and $\mu_{i-k+\ell}\leq\nu_i$ for all
$i\geq 1$.
This in particular implies that the summand vanishes unless
$\mu_{i-k+\ell}\leq \la_i$ for all $i\geq 1$, in accordance 
with \eqref{Eq_inequalities}.
When $i=k+1$ this yields $\mu_{\ell+1}\leq\la_{k+1}$.
Now, since $l(\la)\leq k$, $\la_{k+1}=0$ so that $\mu_{\ell+1}=0$,
i.e., $l(\mu)\leq \ell$.
Clearly, the right-hand side has this same vanishing property.

\subsection{$\boldsymbol{\A_n}$ Cauchy-type identities}

As we will see below, the Cauchy identity for Macdonald polynomials may 
be viewed as a discrete analogue of the AFLT integral.
For the purpose of generalisation it is convenient to think of the Cauchy
identity \eqref{Eq_Cauchy} as an identity for the root system $\A_1$ in 
which the two alphabets $X$ and $Y$ are attached to the single vertex of 
the corresponding Dynkin diagram:
\medskip
\begin{center}
\begin{tikzpicture}[scale=1.1]
\draw[fill=blue] (0,0) circle (0.05cm);
\draw (0,0.4) node {$X$};
\draw (0,-0.4) node {$Y$};
\end{tikzpicture}
\end{center}
Extending this to $\A_n$, we consider sums of the form
\begin{equation}\label{Eq_An-Cauchy-lhs}
\sum_{\lar{1},\dots,\lar{n}}\prod_{r=1}^n
P_{\lar{r}}\big[X^{(r)}\big]Q_{\lar{r}}\big[Y^{(r)}\big]
\prod_{r=1}^{n-1}f_{\lar{r},\lar{r+1}}^{k_r,k_{r+1}}(a_r;q,t),
\end{equation}
where the functions $f_{\lar{r},\lar{r+1}}^{k_r,k_{r+1}}$ represent
the edges of the $\A_n$ Dynkin diagram:
\medskip
\begin{center}
\begin{tikzpicture}[scale=1.1]
\draw (0,0)--(3,0); \draw (4,0)--(6,0); \draw (7,0)--(10,0);
\draw[dashed] (3,0)--(4,0); \draw[dashed] (6,0)--(7,0);
\draw[fill=blue] (0,0) circle (0.05cm);
\draw[fill=blue] (2,0) circle (0.05cm); 
\draw[fill=blue] (5,0) circle (0.05cm);
\draw[fill=blue] (8,0) circle (0.05cm);
\draw[fill=blue] (10,0) circle (0.05cm);
\draw (0,0.4) node {$X^{(1)}$};
\draw (2,0.4) node {$X^{(2)}$};
\draw (5,0.4) node {$X^{(r-1)}$};
\draw (8,0.4) node {$X^{(n-1)}$};
\draw (10,0.4) node {$X^{(n)}$};
\draw (0,-0.4) node {$Y^{(1)}$};
\draw (2,-0.4) node {$Y^{(2)}$};
\draw (5,-0.4) node {$Y^{(r-1)}$};
\draw (8,-0.4) node {$Y^{(n-1)}$};
\draw (10,-0.4) node {$Y^{(n)}$};
\end{tikzpicture}
\end{center}
In \eqref{Eq_An-Cauchy-lhs} we choose $k_1\leq k_2\leq\cdots\leq k_{n-1}$
to be nonnegative integers and $k_n\in\mathbb{N}\cup\{\infty\}$. 
Also, $a_r$ for $1\leq r\leq n-2$ will be fixed as $a_r=t^{k_r-k_{r+1}}$,
whereas $a_{n-1}$ is an indeterminate.
In the sum \eqref{Eq_An-Cauchy-lhs} we also specialise the alphabets 
$X^{(2)},\dots,X^{(n)}$ and $Y^{(1)},\dots,Y^{(n-1)}$ as
\begin{equation}\label{Eq_XY}
X^{(r+1)}=\frac{1-t^{k_r}/a_r}{1-t},\qquad
Y^{(r)}=z_r \frac{t-a_rqt^{k_{r+1}}}{1-t}\qquad\text{for $1\leq r\leq n-1$},
\end{equation}
and fix the cardinalities of $X^{(1)}$ and $Y^{(n)}$ to be 
\[
\abs{X^{(1)}}=k_1,\qquad \abs{Y^{(n)}}=k_n.
\]
It should be noted that since $a_r=t^{k_r-k_{r+1}}$ for $r\neq n-1$ 
we have
\begin{equation}\label{Eq_Xr-finite}
X^{(r)}=\frac{1-t^{k_r}}{1-t} \qquad \text{for $2\leq r\leq n-1$}
\end{equation}
so that $\abs{X^{(r)}}=k_r$ for all $1\leq r\leq n-1$.

Recall the convention that $t^k:=0$ if $k=\infty$.
Our first $\A_n$ Cauchy-type formula may then be stated as follows.  

\begin{theorem}[$\A_n$ Cauchy-type formula I]\label{Thm_An-Cauchy}
Let $k_1\leq k_2\leq\cdots\leq k_{n-1}$ be nonnegative integers and
$k_n\in\mathbb{N}\cup\{\infty\}$.
Then for $a_{n-1}$ an indeterminate, $a_r:=t^{k_r-k_{r+1}}$ for 
$1\leq r\leq n-2$, $X^{(1)}=x_1+\dots+x_{k_1}$,
$Y^{(n)}=y_1+\dots+y_{k_n}$ and
$X^{(2)},\dots,X^{(n)},Y^{(1)},\dots,Y^{(n-1)}$ as in \eqref{Eq_XY}, and
\[
W:=z_1\cdots z_{n-1}X^{(1)} +
\sum_{r=1}^{n-1}z_{r+1}\cdots z_{n-1}\,\frac{1-1/a_r}{1-t},
\]
we have
\begin{align}\label{Eq_An-Cauchy-1}
&\sum_{\lar{1},\dots,\lar{n}} 
\prod_{r=1}^n P_{\lar{r}}\big(\big[X^{(r)}\big];q,t\big)
Q_{\lar{r}/\mur{r}}\big(\big[Y^{(r)}\big];q,t\big)
\prod_{r=1}^{n-1} f_{\lar{r},\lar{r+1}}^{k_r,k_{r+1}}(a_r;q,t) \\
&\quad\quad=P_{\mur{n}}\big([W];q,t\big)\prod_{r=1}^{n-1}\bigg(
\prod_{i=1}^{k_1}\frac{(a_rqz_1\cdots z_r x_i;q)_{\infty}}
{(tz_1\cdots z_r x_i;q)_{\infty}} 
\prod_{j=1}^{k_n} \frac{(z_{r+1}\cdots z_{n-1}y_j/a_r;q)_{\infty}}
{(z_{r+1}\cdots z_{n-1}y_j;q)_{\infty}}\bigg) \notag \\
&\quad\quad\quad\times \prod_{i=1}^{k_1} \prod_{j=1}^{k_n} 
\frac{(tz_1\cdots z_{n-1}x_iy_j;q)_{\infty}}
{(z_1\cdots z_{n-1}x_iy_j;q)_{\infty}}
\prod_{1\leq r<s\leq n-1} \prod_{i=1}^{k_{r+1}-k_r} 
\frac{(a_sqt^{i-1}z_{r+1}\cdots z_s;q)_{\infty}}
{(t^iz_{r+1}\cdots z_s;q)_{\infty}}, \notag
\end{align}
where $\mur{1},\dots,\mur{n-1}:=0$ and $\mur{n}$ is an arbitrary partition.
\end{theorem}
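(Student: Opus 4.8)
The plan is to establish \eqref{Eq_An-Cauchy-1} by induction on $n$, peeling off the leftmost node of the $\A_n$ diagram at each step. For the base case $n=1$ there are no edge factors and no specialised alphabets, and the right-hand side collapses to $P_{\mur{1}}[X^{(1)}]\prod_{i=1}^{k_1}\prod_{j=1}^{k_1}(tx_iy_j;q)_\infty/(x_iy_j;q)_\infty$; since $Y^{(1)}=Y^{(n)}=y_1+\dots+y_{k_1}$ and $W=X^{(1)}$, this is precisely the skew Cauchy identity \eqref{Eq_Cauchy} with $(X,Y,\mu)=(X^{(1)},Y^{(1)},\mur{1})$, after evaluating the kernel by \eqref{Eq_sigma-kernel}.

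For the inductive step I would isolate the factors depending on $\lar{1}$, namely $P_{\lar{1}}[X^{(1)}]$, $Q_{\lar{1}}[Y^{(1)}]$ (recall $\mur{1}=0$) and $f_{\lar{1},\lar{2}}^{k_1,k_2}(a_1)$ with $a_1=t^{k_1-k_2}$. Because $\abs{X^{(1)}}=k_1$ forces $l(\lar{1})\le k_1$, Corollary~\ref{Cor_skew-sum} applies with $(\la,\mu,k,\ell)=(\lar{1},\lar{2},k_1,k_2)$; combined with the homogeneity relation $Q_{\lar{1}}[Y^{(1)}]=(z_1t)^{\abs{\lar{1}}}Q_{\lar{1}}\big[\tfrac{1-qt^{k_1-1}}{1-t}\big]$, which holds since $Y^{(1)}=z_1t\,\tfrac{1-qt^{k_1-1}}{1-t}$ by \eqref{Eq_XY}, this rewrites the product $P_{\lar{2}}[X^{(2)}]\,Q_{\lar{1}}[Y^{(1)}]\,f_{\lar{1},\lar{2}}^{k_1,k_2}(a_1)$ as a sum over a fresh partition $\nu$. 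Summing over $\lar{1}$ against $P_{\lar{1}}[X^{(1)}]$ by \eqref{Eq_Cauchy} produces the $\nu$-free prefactor $\prod_{i=1}^{k_1}(a_1qz_1x_i;q)_\infty/(tz_1x_i;q)_\infty$ (evaluating $\sigma_1$ via \eqref{Eq_sigma-kernel}) together with $P_\nu[X^{(1)}]$; the leftover scalings recombine through homogeneity and the factor $t^{-\abs{\nu}}$ into $z_1^{\abs{\nu}}P_\nu[X^{(1)}]=P_\nu[z_1X^{(1)}]$, and the $\nu$-sum then collapses by the coproduct \eqref{Eq_Mac-skew} to $P_{\lar{2}}[\widetilde X]$, where $\widetilde X:=z_1X^{(1)}+\tfrac{1-t^{k_2-k_1}}{1-t}$.

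What remains is the left-hand side of the $\A_{n-1}$ case of \eqref{Eq_An-Cauchy-1} on the nodes $2,\dots,n$, with data $(k_2,\dots,k_n)$ and $(a_2,\dots,a_{n-1})$ --- still of the required form (last edge parameter free, earlier ones specialised), so the induction hypothesis applies --- and with its first alphabet specialised to $\widetilde X$, whose letters are $\{z_1x_i\}_{i=1}^{k_1}\cup\{t^{i-1}\}_{i=1}^{k_2-k_1}$. Applying the induction hypothesis, I would check that the pieces reassemble into the $\A_n$ right-hand side: the plethystic argument of $P_{\mur{n}}$ matches because $W'=z_1\cdots z_{n-1}X^{(1)}+z_2\cdots z_{n-1}\tfrac{1-1/a_1}{1-t}+\sum_{r=2}^{n-1}z_{r+1}\cdots z_{n-1}\tfrac{1-1/a_r}{1-t}=W$ (using $1/a_1=t^{k_2-k_1}$); the letters $z_1x_i$ of $\widetilde X$ reproduce, with the prefactor absorbing the $r=1$ slice, the entire first product and the third product; and the letters $t^{i-1}$ reproduce the $r=1$ slices of the last two products.

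The step I expect to be the real work is this final matching, and in particular the appearance of $a_1$ in the $r=1$ slice $\prod_{j=1}^{k_n}(z_2\cdots z_{n-1}y_j/a_1;q)_\infty/(z_2\cdots z_{n-1}y_j;q)_\infty$ of the second product, since $a_1$ is no longer part of the $\A_{n-1}$ data. The mechanism is that the $\A_{n-1}$ Cauchy kernel $\prod_{i,j}(tz_2\cdots z_{n-1}x_i'y_j;q)_\infty/(z_2\cdots z_{n-1}x_i'y_j;q)_\infty$, evaluated on the geometric letters $x_i'=t^{i-1}$ ($1\le i\le k_2-k_1$), telescopes in $i$ to $\prod_j(t^{k_2-k_1}z_2\cdots z_{n-1}y_j;q)_\infty/(z_2\cdots z_{n-1}y_j;q)_\infty$, and $t^{k_2-k_1}=1/a_1$. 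The remaining products match slice by slice, so the only genuine bookkeeping is keeping the $z_1$- and $t$-scalings consistent and legitimising the specialisation $a_1=t^{k_1-k_2}$, the latter guaranteed by Lemma~\ref{Lem_f-limit}.
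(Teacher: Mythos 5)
Your strategy is essentially the paper's proof reorganised as an induction on $n$: peeling off the leftmost node by combining Corollary~\ref{Cor_skew-sum}, a Cauchy summation over $\lar{1}$ and the coproduct \eqref{Eq_Mac-skew} is exactly the recursion $g_m=g_{m+1}$ of Lemma~\ref{Lem_g-rec}, unrolled one step at a time. Your bookkeeping for reassembling the right-hand side is correct --- in particular the telescoping of the Cauchy kernel over the geometric letters $t^{i-1}$ of $\widetilde X$, which produces the missing $r=1$ slice of the second product via $t^{k_2-k_1}=1/a_1$, and the verification that $W'=W$.

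There is, however, a genuine hole at the bottom of the induction. Your inductive step explicitly sets $a_1=t^{k_1-k_2}$ and invokes Corollary~\ref{Cor_skew-sum}, but this is only legitimate for $n\geq 3$: when $n=2$ the sole edge parameter is $a_{n-1}=a_1$, which by hypothesis is an \emph{indeterminate}, so that $X^{(2)}=\frac{1-t^{k_1}/a_1}{1-t}$ is not of the form $\frac{1-t^{k_2}}{1-t}$ and the Corollary does not apply. Consequently your chain of reductions $\A_n\to\A_{n-1}\to\cdots$ terminates at an unproven $\A_2$ identity, and the free parameter $a_{n-1}$ is never actually processed anywhere in your argument --- you only ever use the specialised skew summation. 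The repair is to handle the last edge with Proposition~\ref{Prop_skew-sum} for general $a$: with $(a,\la,\mu,k,\ell)\mapsto(a_1,\lar{1},\lar{2},k_1,k_2)$ and the homogeneity $Q_{\lar{1}}[Y^{(1)}]=(z_1t)^{\abs{\lar{1}}}Q_{\lar{1}}\big[\frac{1-a_1qt^{k_2-1}}{1-t}\big]$ the same manipulations go through, the new first alphabet becoming $W=z_1X^{(1)}+\frac{1-1/a_1}{1-t}$; the remaining single-node sum is then the skew Cauchy identity \eqref{Eq_Cauchy} applied to this \emph{arbitrary} alphabet (note the $n=1$ case of the theorem as literally stated, with a finite first alphabet, does not cover this, but \eqref{Eq_Cauchy} itself does). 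This is precisely how the paper avoids the issue: it applies Proposition~\ref{Prop_skew-sum} to the edge $r=n-1$ and Corollary~\ref{Cor_skew-sum} only to the edges $r\leq n-2$ before telescoping.
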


For $n=1$ the theorem reduces to \eqref{Eq_Cauchy} with
$(X,Y,\mu)\mapsto (X^{(1)},Y^{(1)},\mur{n})$, and for $n=2$, $k_2$
finite and $\mur{n}=0$ it coincides with~\cite[Theorem~1.2]{Warnaar10}.
When $n\geq 2$ there is some mild redundancy in \eqref{Eq_An-Cauchy-1} 
since the substitution $X^{(1)}\mapsto z_1^{-1} X^{(1)}$ eliminates any 
reference to $z_1$. 
We further remark that we do not know how to evaluate the left-hand side
of \eqref{Eq_An-Cauchy-1} in closed-form if one (or more) of the $a_r$
for $1\leq r\leq n-2$ is an indeterminate. 
Since $\abs{X^{(r)}}=k_r$ for $1\leq r\leq n-1$ the summand vanishes
unless $l(\lar{r})\leq k_r$ for this same range of $r$.
If $a_r$ for some $r\leq n-2$ is an indeterminate, then $\lar{r+1}$ can
have an arbitrarily large length, which prevents us from applying
Proposition~\ref{Prop_skew-sum} in our proof.
Requiring $a_r=t^{k_r-k_{r+1}}$ for $1\leq r\leq n-1$ allows us to use
Corollary~\ref{Cor_skew-sum} in place of the proposition.
We are, however, allowed to keep $a_{n-1}$ an indeterminate since
$Y^{(n)}$ is either a finite alphabet of cardinality $k_n$, or countably
infinite, permitting us to apply Proposition~\ref{Prop_skew-sum}.
For more details we refer to the proof of the theorem contained in the next
section.

There is a second, closely related, Cauchy-type identity, in which $k_n$ 
is finite and no longer corresponds to the cardinality of $Y^{(n)}$.

\begin{corollary}[$\A_n$ Cauchy-type formula II]\label{Cor_An-Cauchy-2}
Let $k_1\leq k_2\leq\cdots\leq k_n$ be nonnegative integers.
Then for $a_r:=t^{k_r-k_{r+1}}$ for $1\leq r\leq n-1$, 
$X^{(1)}=x_1+\dots+x_{k_1}$, $Y^{(n)}=y_1+y_2+\cdots$ and 
$X^{(2)},\dots,X^{(n)},Y^{(1)},\dots,Y^{(n-1)}$ as in \eqref{Eq_XY}, and
\[
W:=z_1\cdots z_{n-1}X^{(1)} +
\sum_{r=1}^{n-1}z_{r+1}\cdots z_{n-1}\,\frac{1-t^{k_r-k_{r+1}}}{1-t},
\]
we have
\begin{align}\label{Eq_An-Cauchy-2}
&\sum_{\lar{1},\dots,\lar{n}} 
\prod_{r=1}^n P_{\lar{r}}\big(\big[X^{(r)}\big];q,t\big)
Q_{\lar{r}/\mur{r}}\big(\big[Y^{(r)}\big];q,t\big)
\prod_{r=1}^{n-1} f_{\lar{r},\lar{r+1}}^{k_r,k_{r+1}}(a_r;q,t) \\
&\quad=P_{\mur{n}}\big([W];q,t\big)\prod_{r=1}^{n-1}\bigg(
\prod_{i=1}^{k_1}\frac{(qt^{k_r-k_{r+1}}z_1\cdots z_r x_i;q)_{\infty}}
{(tz_1\cdots z_r x_i;q)_{\infty}} 
\prod_{j\geq 1} \frac{(t^{k_{r+1}-k_r}z_{r+1}\cdots z_{n-1}y_j;q)_{\infty}}
{(z_{r+1}\cdots z_{n-1}y_j;q)_{\infty}}\bigg) \notag \\
&\quad\quad\quad\times \prod_{i=1}^{k_1} \prod_{j\geq 1} 
\frac{(tz_1\cdots z_{n-1}x_iy_j;q)_{\infty}}
{(z_1\cdots z_{n-1}x_iy_j;q)_{\infty}}
\prod_{1\leq r<s\leq n-1} \prod_{i=1}^{k_{r+1}-k_r} 
\frac{(qt^{i+k_s-k_{s+1}-1}z_{r+1}\cdots z_s;q)_{\infty}}
{(t^iz_{r+1}\cdots z_s;q)_{\infty}}, \notag
\end{align}
where $\mur{1},\dots,\mur{n-1}:=0$ and $\mur{n}$ is an arbitrary partition.
\end{corollary}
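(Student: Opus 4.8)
The plan is to deduce Corollary~\ref{Cor_An-Cauchy-2} from Theorem~\ref{Thm_An-Cauchy} by first passing to an infinite terminal alphabet and then re-truncating it. I would begin by applying Theorem~\ref{Thm_An-Cauchy} in the instance $k_n=\infty$, which is allowed since the theorem is stated for $k_n\in\mathbb{N}\cup\{\infty\}$. In this instance $Y^{(n)}=y_1+y_2+\cdots$ is already the infinite alphabet of Corollary~\ref{Cor_An-Cauchy-2}, the terminal edge weight is $f^{k_{n-1},\infty}_{\lar{n-1},\lar{n}}(a_{n-1})$, and, since $t^{k_n}=0$, the alphabet in \eqref{Eq_XY} collapses to $Y^{(n-1)}=z_{n-1}t/(1-t)$, so that every product $\prod_{j=1}^{k_n}$ on the right-hand side becomes $\prod_{j\geq1}$. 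Only the parameter $a_{n-1}$ remains free.

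The second step reintroduces a finite integer $M$ (the value of $k_n$ in Corollary~\ref{Cor_An-Cauchy-2}) by specialising $a_{n-1}=t^{k_{n-1}-M}$, the resulting $f^{k_{n-1},M}_{\lar{n-1},\lar{n}}(t^{k_{n-1}-M})$ being read off from Lemma~\ref{Lem_f-limit}. Under this specialisation $X^{(n)}=(1-t^{k_{n-1}}/a_{n-1})/(1-t)$ becomes the finite alphabet $(1-t^M)/(1-t)$, so $P_{\lar{n}}[X^{(n)}]$ annihilates every term with $l(\lar{n})>M$. On the surviving terms I would invoke \eqref{Eq_fQfQ}: writing $Q_{\lar{n-1}}[Y^{(n-1)}]=(z_{n-1}t)^{\abs{\lar{n-1}}}Q_{\lar{n-1}}[1/(1-t)]$ by homogeneity, \eqref{Eq_fQfQ} with $\ell=M$ converts $f^{k_{n-1},\infty}(a_{n-1})\,Q_{\lar{n-1}}[1/(1-t)]$ into $f^{k_{n-1},M}(a_{n-1})\,Q_{\lar{n-1}}[(1-a_{n-1}qt^{M-1})/(1-t)]$, and at $a_{n-1}=t^{k_{n-1}-M}$ the new alphabet equals $(1-qt^{k_{n-1}-1})/(1-t)$, which after restoring the scalar $(z_{n-1}t)^{\abs{\lar{n-1}}}$ is exactly the $Y^{(n-1)}$ of Corollary~\ref{Cor_An-Cauchy-2}. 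The left-hand side of \eqref{Eq_An-Cauchy-1} then agrees with that of \eqref{Eq_An-Cauchy-2} factor by factor, while the right-hand side reduces to the claimed form upon substituting $a_r=t^{k_r-k_{r+1}}$ throughout (whence $a_rq\mapsto qt^{k_r-k_{r+1}}$, $1/a_r\mapsto t^{k_{r+1}-k_r}$, and $a_sqt^{i-1}\mapsto qt^{i+k_s-k_{s+1}-1}$), which is a direct comparison of $q$-shifted-factorial products.

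The step I expect to be the main obstacle is the legitimacy of invoking \eqref{Eq_fQfQ} term by term. That identity requires $\ell=M\geq l(\lar{n})$, whereas in the $k_n=\infty$ instance of Theorem~\ref{Thm_An-Cauchy} the index $\lar{n}$ is summed with no length bound. The point is that the specialisation $a_{n-1}=t^{k_{n-1}-M}$ must be carried out \emph{before} \eqref{Eq_fQfQ} is applied: once $X^{(n)}$ has become $(1-t^M)/(1-t)$, the factor $P_{\lar{n}}[X^{(n)}]$ already confines the sum to the range $l(\lar{n})\leq M$ in which \eqref{Eq_fQfQ} is valid, so that no interchange of limit and summation is needed beyond the continuity that lets Lemma~\ref{Lem_f-limit} evaluate $f^{k_{n-1},M}$ at $a_{n-1}=t^{k_{n-1}-M}$. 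Checking that this truncation is respected simultaneously by the edge weight $f^{k_{n-1},\infty}(t^{k_{n-1}-M})$ and by the interlacing conditions \eqref{Eq_inequalities} of Lemma~\ref{Lem_f-limit} is the only genuinely delicate verification; the rest is the product bookkeeping indicated above.
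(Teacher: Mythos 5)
Your proposal is correct and is essentially the paper's own proof: both arguments take Theorem~\ref{Thm_An-Cauchy} with $k_n=\infty$ and then recover a finite $k_n$ by specialising $a_{n-1}=t^{k_{n-1}-k_n}$ and invoking \eqref{Eq_fQfQ} to convert $f^{k_{n-1},\infty}_{\lar{n-1},\lar{n}}$ and the terminal alphabet $Y^{(n-1)}$ into their truncated forms. The delicate point you flag — that the specialisation must force $l(\lar{n})\leq k_n$ via $P_{\lar{n}}[X^{(n)}]$ before \eqref{Eq_fQfQ} (which needs $\ell\geq l(\lar{n})$) can be applied termwise — is precisely the issue the paper raises and resolves in the same way.
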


We note that in the above corollary the range for which \eqref{Eq_Xr-finite}
holds includes $r=n$.
In particular $\abs{X^{(r)}}=k_r$ for all $1\leq r\leq n$.
The corollary simplifies to the $\A_n$ $q$-binomial theorem
\cite[Theorem~3.2]{Warnaar09} if we replace $Y^{(n)}\mapsto
z_nt^{k_{n-1}}(1-a)/(1-t)$ and 
$z_r\mapsto z_r t^{k_{r-1}-1}$ for all $1\leq r\leq n-1$ where $k_0:=0$.

\begin{proof}[Proof of Corollary~\ref{Cor_An-Cauchy-2}]
We take Theorem~\ref{Thm_An-Cauchy} with $k_n=\infty$.
Then $Y^{(n)}=y_1+y_2+\cdots$ in accordance with the corollary.
Moreover, by \eqref{Eq_fQfQ} and the fact that $\mur{n-1}=0$,
\begin{align*}
&Q_{\lar{n-1}/\mur{n-1}}\big[Y^{(n-1)}\big]
f_{\lar{n-1},\lar{n}}^{k_{n-1},k_n}(a_{n-1};q,t) \\
&\qquad=Q_{\lar{n-1}}\bigg[\frac{z_nt}{1-t}\bigg]
f_{\lar{n-1},\lar{n}}^{k_{n-1},\infty}(a_{n-1};q,t) \\
&\qquad=Q_{\lar{n-1}}\bigg[z_n\frac{t-a_{n-1}qt^{\hat{k}_n}}{1-t}\bigg]
f_{\lar{n-1},\lar{n}}^{k_{n-1},\hat{k}_n}(a_{n-1};q,t) \\
&\qquad=Q_{\lar{n-1}/\mur{n-1}}[\hat{Y}^{(n-1)}]
f_{\lar{n-1},\lar{n}}^{k_{n-1},\hat{k}_n}(a_{n-1};q,t).
\end{align*} 
Here $\hat{k}_n$ is an arbitrary integer such that
$\hat{k}_n\geq l(\lar{n})$ and 
$\hat{Y}^{(n-1)}:=z_n(t-a_{n-1}qt^{\hat{k}_n})/(1-t)$,
so that $\hat{Y}^{(n-1)}$ corresponds to $Y^{(n-1)}$ in \eqref{Eq_XY}
except that $k_n$ has been replaced by $\hat{k}_n$.
Of course, since we are summing over all partitions $\lar{n}$ there
exists no integer $\hat{k}_n$ such that $\hat{k}_n\geq l(\lar{n})$ for
all $\lar{n}$.
To get around this problem we specialise $a_{n-1}=q^{k_{n-1}-\hat{k}_n}$.
Then $X^{(n)}=(1-t^{\hat{k}_n})/(1-t)$ of cardinality $\abs{\hat{k}_n}$
so that without loss of generality we may assume that
$l(\lar{n})\leq \hat{k}_n$.
Finally replacing $\hat{k}_n$ by $k_n$ completes the proof.
\end{proof}

The proof of Theorem~\ref{Thm_An-AFLT} actually requires a plethystically
substituted version of the $\mur{n}=0$ instance of \eqref{Eq_An-Cauchy-2}
obtained by replacing $Y^{(n)}\mapsto Y^{(n)}+(c-d)/(1-t)$.
This substitution can easily be carried out noting that the right-hand
side of \eqref{Eq_An-Cauchy-2} without $P_{\mur{n}}[W]$ is expressible 
in terms of $\sigma_1$ as
\begin{multline*}
\sigma_1\Bigg[\sum_{r=1}^{n-1}
\bigg(tz_1\cdots z_r \frac{1-qt^{k_r-k_{r+1}-1}}{1-q}\,X^{(1)}+
z_{r+1}\cdots z_{n-1} \frac{1-t^{k_{r+1}-k_r}}{1-q}\,Y^{(n)}\bigg)+\\
z_1\cdots z_{n-1} \frac{1-t}{1-q}\,X^{(1)}Y^{(n)}+
\sum_{1\leq r<s\leq n-1}
tz_{r+1}\cdots z_s \frac{(1-qt^{k_s-k_{s+1}-1})(1-t^{k_{r+1}-k_r})}
{(1-q)(1-t)}\Bigg].
\end{multline*}

\begin{corollary}\label{Cor_An-Cauchy}
With the same conditions as in Corollary~\ref{Cor_An-Cauchy-2},
\begin{align*}
&\sum_{\lar{1},\dots,\lar{n}} 
P_{\lar{n}}\big(\big[X^{(n)}\big];q,t) 
Q_{\lar{n}}\bigg(\bigg[Y^{(n)}+\frac{c-d}{1-t}\bigg];q,t\bigg) \\
&\qquad\qquad \times
\prod_{r=1}^{n-1} \Big( P_{\lar{r}}\big(\big[X^{(r)}\big];q,t\big)
Q_{\lar{r}}\big(\big[Y^{(r)}\big];q,t\big)
f_{\lar{r},\lar{r+1}}^{k_r,k_{r+1}}(a_r;q,t)\Big) \\
&\quad\quad=\prod_{r=1}^{n-1}\bigg(
\prod_{i=1}^{k_1}\frac{(qt^{k_r-k_{r+1}}z_1\cdots z_r x_i;q)_{\infty}}
{(tz_1\cdots z_r x_i;q)_{\infty}} 
\prod_{j\geq 1} \frac{(t^{k_{r+1}-k_r}z_{r+1}\cdots z_{n-1}y_j;q)_{\infty}}
{(z_{r+1}\cdots z_{n-1}y_j;q)_{\infty}}\bigg) \\
&\quad\quad\quad\times \prod_{i=1}^{k_1}\prod_{j\geq 1} 
\frac{(tz_1\cdots z_{n-1}x_iy_j;q)_{\infty}}
{(z_1\cdots z_{n-1}x_iy_j;q)_{\infty}}
\prod_{i=1}^{k_1} \frac{(dz_1\cdots z_{n-1} x_i;q)_\infty}
{(cz_1\cdots z_{n-1}x_i;q)_\infty} \\
&\quad\quad\quad \times
\prod_{r=1}^{n-1} \prod_{i=1}^{k_{r+1}-k_r}
\frac{(dz_{r+1}\cdots z_{n-1}t^{i-1};q)_{\infty}}
{(cz_{r+1}\cdots z_{n-1}t^{i-1};q)_{\infty}} 
\prod_{1\leq r<s\leq n-1} \prod_{i=1}^{k_{r+1}-k_r} 
\frac{(qt^{i+k_s-k_{s+1}-1}z_{r+1}\cdots z_s;q)_{\infty}}
{(t^iz_{r+1}\cdots z_s;q)_{\infty}}. 
\end{align*}
\end{corollary}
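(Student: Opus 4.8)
The plan is to obtain Corollary~\ref{Cor_An-Cauchy} as the image of the $\mur{n}=0$ case of Corollary~\ref{Cor_An-Cauchy-2} under the plethystic shift $Y^{(n)}\mapsto Y^{(n)}+(c-d)/(1-t)$. When $\mur{n}=0$ all of the $\mur{r}$ vanish, so that every $Q_{\lar{r}/\mur{r}}$ reduces to $Q_{\lar{r}}$ and the prefactor $P_{\mur{n}}[W]=P_0[W]=1$ drops out of the right-hand side of \eqref{Eq_An-Cauchy-2}. On the left-hand side, only the single factor $Q_{\lar{n}}[Y^{(n)}]$ (the $r=n$ term) depends on $Y^{(n)}$, so the shift turns the left-hand side of \eqref{Eq_An-Cauchy-2} into that of the corollary.

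The substitution is most transparent in the $\sigma_1$-form of the right-hand side of \eqref{Eq_An-Cauchy-2} recorded just above the statement. Writing that right-hand side as $\sigma_1[E]$, the essential observation is that the argument $E$ is \emph{affine} in the alphabet $Y^{(n)}$, say $E=E_0+c_Y\,Y^{(n)}$, with $E_0$ free of $Y^{(n)}$ and
\[
c_Y=z_1\cdots z_{n-1}\,\frac{1-t}{1-q}\,X^{(1)}
+\sum_{r=1}^{n-1}z_{r+1}\cdots z_{n-1}\,\frac{1-t^{k_{r+1}-k_r}}{1-q}
\]
collecting the two summands of $E$ that carry a factor $Y^{(n)}$. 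Since the Cartesian product distributes over sums of alphabets and $\sigma_1$ is multiplicative under \eqref{Eq_sigma-sum}, the shift sends $\sigma_1[E]$ to $\sigma_1[E]\cdot\sigma_1\!\big[c_Y(c-d)/(1-t)\big]$. The first factor is unchanged and reproduces verbatim all the products already on the right of \eqref{Eq_An-Cauchy-2}, so the whole corollary reduces to evaluating the extra factor $\sigma_1\!\big[c_Y(c-d)/(1-t)\big]$.

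To evaluate this factor I would split $c_Y$ into its two pieces and apply the kernel formula \eqref{Eq_sigma-kernel} termwise. In the $X^{(1)}$-piece the factor $1-t$ cancels, leaving $z_1\cdots z_{n-1}(c-d)X^{(1)}/(1-q)$; summing over the letters $x_i$ of $X^{(1)}=x_1+\dots+x_{k_1}$ gives $\prod_{i=1}^{k_1}(dz_1\cdots z_{n-1}x_i;q)_\infty/(cz_1\cdots z_{n-1}x_i;q)_\infty$. For the remaining piece one writes $(1-t^{k_{r+1}-k_r})/(1-t)=1+t+\dots+t^{k_{r+1}-k_r-1}$, so that the alphabet becomes a finite union of the single letters $t^{i-1}z_{r+1}\cdots z_{n-1}$ with $1\leq i\leq k_{r+1}-k_r$; applying \eqref{Eq_sigma-kernel} to each and taking the product over $r$ produces $\prod_{r=1}^{n-1}\prod_{i=1}^{k_{r+1}-k_r}(dz_{r+1}\cdots z_{n-1}t^{i-1};q)_\infty/(cz_{r+1}\cdots z_{n-1}t^{i-1};q)_\infty$. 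These are precisely the two new products appearing in the corollary.

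I do not anticipate a genuine obstacle: the argument is an essentially mechanical plethystic computation whose only delicate point is bookkeeping. One must correctly identify $c_Y$ — noting that both the cross-term $z_1\cdots z_{n-1}\frac{1-t}{1-q}X^{(1)}Y^{(n)}$ and the linear term $\sum_r z_{r+1}\cdots z_{n-1}\frac{1-t^{k_{r+1}-k_r}}{1-q}Y^{(n)}$ contribute, while the pure $X^{(1)}$ and double-sum terms of $E$ do not — and then keep track of the index shift when turning the geometric-series alphabet into individual letters.
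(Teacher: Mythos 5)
Your proposal is correct and is essentially the paper's own argument: the paper records the right-hand side of \eqref{Eq_An-Cauchy-2} (without $P_{\mur{n}}[W]$) in the $\sigma_1$ form precisely so that the shift $Y^{(n)}\mapsto Y^{(n)}+(c-d)/(1-t)$ can be performed via the multiplicativity \eqref{Eq_sigma-sum} and the kernel formula \eqref{Eq_sigma-kernel}, exactly as you do. Your identification of the coefficient $c_Y$ of $Y^{(n)}$ and the termwise evaluation of $\sigma_1[c_Y(c-d)/(1-t)]$ into the two extra products are both accurate.
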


\subsection{Proof of Theorem~\ref{Thm_An-Cauchy}}
\label{Sec_Cauchy-type-proof}

We define two families of auxiliary alphabets 
$\{X^{(r,m)}\}_{0\leq m<r\leq n}$ and $\{Z^{(r)}\}_{r=1}^n$ as
\begin{align*}
X^{(r,m)}&:=\begin{cases} \displaystyle
z_1\cdots z_m X^{(1)}
+\sum_{u=1}^m z_{u+1}\cdots z_m\,\frac{1-1/a_u}{1-t} & \text{if $r=m+1$},\\
\displaystyle \frac{1-1/a_{r-1}}{1-t} & \text{otherwise},
\end{cases}
\intertext{and}
Z^{(r)}&:=\begin{cases} \displaystyle
Y^{(n)} & \text{if $r=n$}, \\[2mm]
\displaystyle  z_r\frac{t-a_rq}{1-t} & \text{otherwise}.
\end{cases}
\end{align*}
The first family satisfies the simple recursion
\begin{equation}\label{Eq_X-rec}
z_{m+1}X^{(m+1,m)}+X^{(m+2,m)}=X^{(m+2,m+1)}.
\end{equation}

\begin{lemma}\label{Lem_g-rec}
For $n,m$ integers such that $0\leq m\leq n-1$, and $\nur{n}$ a
partition, define
\begin{align}\label{Eq_gdef}
g_m&:=\prod_{r=1}^m\prod_{i=1}^{k_1}
\frac{(a_r qz_1\cdots z_rx_i;q)_{\infty}}{(tz_1\cdots z_rx_i;q)_{\infty}}
\prod_{1\leq r<s\leq m}\prod_{i=1}^{k_{r+1}-k_r}
\frac{(a_sqt^{i-1}z_{r+1}\cdots z_s;q)_{\infty}}
{(t^i z_{r+1}\cdots z_s;q)_{\infty}} \\
&\quad \times\sum_{\nur{m+1},\dots,\nur{n-1}}
\prod_{r=m+1}^n \bigg(z_r^{\abs{\nur{r}}}
\sum_{\lar{r}}P_{\lar{r}/\nur{r-1}}\big(\big[X^{(r,m)}\big];q,t\big)
Q_{\lar{r}/\nur{r}}\big(\big[Z^{(r)}\big];q,t\big)\bigg),\notag
\end{align}
where $\nur{m}:=0$.
Then $g_m=g_{m+1}$ for $0\leq m\leq n-2$.
\end{lemma}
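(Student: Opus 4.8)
The plan is to prove $g_m=g_{m+1}$ by performing, inside the definition \eqref{Eq_gdef} of $g_m$, the two innermost summations in the variables $\lar{m+1}$ and $\nur{m+1}$, and showing that the outcome reproduces $g_{m+1}$ together with exactly the extra prefactors distinguishing the two. The crucial preliminary observation is that only the product terms indexed by $r=m+1$ and $r=m+2$ are affected: since $X^{(r,m)}=(1-1/a_{r-1})/(1-t)=X^{(r,m+1)}$ for every $r\notin\{m+1,m+2\}$, all factors with $r\geq m+3$, together with the remaining summations over $\nur{m+2},\dots,\nur{n-1}$, are carried along untouched.

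First I would isolate the $r=m+1$ factor. Because $\nur{m}=0$, it reads $z_{m+1}^{\abs{\nur{m+1}}}\sum_{\lar{m+1}}P_{\lar{m+1}}\big[X^{(m+1,m)}\big]Q_{\lar{m+1}/\nur{m+1}}\big[Z^{(m+1)}\big]$, and the sum over $\lar{m+1}$ is evaluated by the skew Cauchy identity \eqref{Eq_Cauchy}, yielding
\[
P_{\nur{m+1}}\big[X^{(m+1,m)}\big]\,
\sigma_1\Big[\tfrac{1-t}{1-q}X^{(m+1,m)}Z^{(m+1)}\Big].
\]
The kernel $\sigma_1[\cdots]$ is independent of all remaining summation indices, so I set it aside as a prefactor. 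Next, using homogeneity of $P_{\nur{m+1}}$ to write $z_{m+1}^{\abs{\nur{m+1}}}P_{\nur{m+1}}[X^{(m+1,m)}]=P_{\nur{m+1}}[z_{m+1}X^{(m+1,m)}]$, I would carry out the sum over $\nur{m+1}$ against the factor $P_{\lar{m+2}/\nur{m+1}}[X^{(m+2,m)}]$ coming from the $r=m+2$ term. The coproduct \eqref{Eq_Mac-skew} collapses this to $P_{\lar{m+2}}\big[z_{m+1}X^{(m+1,m)}+X^{(m+2,m)}\big]$, which by the recursion \eqref{Eq_X-rec} equals $P_{\lar{m+2}}[X^{(m+2,m+1)}]$. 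Since $\nur{m+1}=0$ in $g_{m+1}$, this is precisely the $r=m+2$ factor of $g_{m+1}$, so the surviving sum already has the shape of the summand of $g_{m+1}$.

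It then remains to identify the leftover kernel with the ratio of prefactors of $g_{m+1}$ and $g_m$. Because $Z^{(m+1)}=z_{m+1}(t-a_{m+1}q)/(1-t)$, the factor $(1-t)$ cancels and the kernel becomes $\sigma_1\big[\tfrac{z_{m+1}(t-a_{m+1}q)}{1-q}X^{(m+1,m)}\big]$. Splitting $X^{(m+1,m)}$ into its $X^{(1)}$-part and its binomial parts $z_{u+1}\cdots z_m(1-1/a_u)/(1-t)$, each of which (since $a_u=t^{k_u-k_{u+1}}$) is the finite alphabet $1+t+\dots+t^{k_{u+1}-k_u-1}$, and applying the single-letter evaluation $\sigma_1\big[\tfrac{t-a_{m+1}q}{1-q}w\big]=(a_{m+1}qw;q)_\infty/(tw;q)_\infty$ that follows from \eqref{Eq_sigma-sum} and \eqref{Eq_sigma-kernel}, the kernel factorises as
\[
\prod_{i=1}^{k_1}\frac{(a_{m+1}qz_1\cdots z_{m+1}x_i;q)_\infty}
{(tz_1\cdots z_{m+1}x_i;q)_\infty}
\prod_{u=1}^{m}\prod_{i=1}^{k_{u+1}-k_u}
\frac{(a_{m+1}qt^{i-1}z_{u+1}\cdots z_{m+1};q)_\infty}
{(t^iz_{u+1}\cdots z_{m+1};q)_\infty}.
\]
These are exactly the new $r=m+1$ factor of the first product and the new $s=m+1$ factors of the second product in $g_{m+1}$, completing the identification. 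The main obstacle I anticipate is precisely this bookkeeping: one must track the shifts $z_{u+1}\cdots z_m\mapsto z_{u+1}\cdots z_{m+1}$ and $t\cdot t^{i-1}=t^i$ so that the kernel lands exactly on the missing prefactors. It is also worth stressing that the whole argument is a formal identity of symmetric functions, hence remains valid in the last step $m=n-2$, where $a_{m+1}=a_{n-1}$ is an indeterminate and $Z^{(n-1)}$ is not a finite alphabet.
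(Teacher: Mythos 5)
Your proposal is correct and follows essentially the same route as the paper's proof: sum over $\lar{m+1}$ with the Cauchy identity \eqref{Eq_Cauchy}, evaluate the resulting kernel $\sigma_1\big[\tfrac{1-t}{1-q}X^{(m+1,m)}Z^{(m+1)}\big]$ via \eqref{Eq_sigma-sum} and \eqref{Eq_sigma-kernel} to produce exactly the new prefactors, and collapse the sum over $\nur{m+1}$ with \eqref{Eq_Mac-skew} and the recursion \eqref{Eq_X-rec}. The bookkeeping you flag as the main obstacle is carried out identically in the paper, including the restriction $m\leq n-2$ guaranteeing $Z^{(m+1)}=z_{m+1}(t-a_{m+1}q)/(1-t)$.
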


\begin{proof}
Since $\nur{m}:=0$, the sum over $\lar{m+1}$ in \eqref{Eq_gdef} is of
the form \eqref{Eq_Cauchy} with
\[
(X,Y,\la,\mu)\mapsto \big(X^{(m+1,m)},Z^{(m+1)},\lar{m+1},\nur{m+1}\big)
\]
and hence equates to
\[
P_{\nur{m+1}}\big[X^{(m+1,m)}\big]
\sigma_1\bigg[\frac{1-t}{1-q}\,X^{(m+1,m)}Z^{(m+1)}\bigg].
\]
Since $0\leq m\leq n-2$, it follows that
\begin{align*}
&\sigma_1\bigg[\frac{1-t}{1-q}\,X^{(m+1,m)}Z^{(m+1)}\bigg] \\
&\quad=\sigma_1\bigg[z_1\cdots z_{m+1} \frac{t-a_{m+1}q}{1-q}\,X^{(1)}
+\sum_{r=1}^m z_{r+1}\cdots z_{m+1}\,
\frac{(1-t^{k_{r+1}-k_r})(t-a_{m+1}q)}{(1-t)(1-q)}\bigg] \\
&\quad=\prod_{i=1}^{k_1}\sigma_{z_1\cdots z_{m+1}}
\bigg[\frac{t-a_{m+1}q}{1-q}\,x_i\bigg]
\prod_{r=1}^m \prod_{i=1}^{k_{r+1}-k_r}\sigma_{z_{r+1}\cdots z_{m+1}}
\bigg[\frac{t^{i-1}(t-a_{m+1}q)}{1-q}\bigg] \\
&\quad=\prod_{i=1}^{k_1}\frac{(a_{m+1}qz_1\cdots z_{m+1}x_i;q)_{\infty}}
{(tz_1\cdots z_{m+1}x_i;q)_{\infty}}
\prod_{r=1}^m \prod_{i=1}^{k_{r+1}-k_r}
\frac{(a_{m+1}qt^{i-1}z_{r+1}\cdots z_{m+1};q)_{\infty}}
{(t^iz_{r+1}\cdots z_{m+1};q)_{\infty}},
\end{align*}
where the second equality follows from \eqref{Eq_sigma-sum} and the last 
equality from \eqref{Eq_sigma-kernel}.
As a result,
\begin{align*}
g_m&=\prod_{r=1}^{m+1}\prod_{i=1}^{k_1}
\frac{(a_r qz_1\cdots z_rx_i;q)_{\infty}}{(tz_1\cdots z_rx_i;q)_{\infty}}
\prod_{1\leq r<s\leq m+1}\prod_{i=1}^{k_{r+1}-k_r}
\frac{(a_sqt^{i-1}z_{r+1}\cdots z_s;q)_{\infty}}
{(t^i z_{r+1}\cdots z_s;q)_{\infty}} \\
&\quad\times \sum_{\nur{m+1},\dots,\nur{n-1}}
\bigg\{P_{\nur{m+1}}\big[z_{m+1}X^{(m+1,m)}\big] \\
&\qquad\qquad\qquad\qquad\quad\times
\prod_{r=m+2}^n\bigg(z_r^{\abs{\nur{r}}}\sum_{\lar{r}}
P_{\lar{r}/\nur{r-1}}\big[X^{(r,m)}\big]
Q_{\lar{r}/\nur{r}}\big[Z^{(r)}\big]\bigg)\bigg\}.
\end{align*}
After interchanging the order of the sum over $\nur{m+1}$ with those over
the $\lar{r}$, the former can be summed using \eqref{Eq_Mac-skew} with
\[
(X,Y,\la,\mu)\mapsto 
\big(X^{(m+2,m)},z_{m+1}X^{(m+1,m)},\lar{m+2},\nur{m+1}\big).
\]
Thanks to the recursion \eqref{Eq_X-rec} this yields
$P_{\lar{m+2}}[X^{(m+2,m+1)}]$, resulting in $g_m=g_{m+1}$.
\end{proof}

We are now ready to prove Theorem~\ref{Thm_An-Cauchy}.
As a first step we eliminate 
$f_{\lar{r},\lar{r+1}}^{k_r,k_{r+1}}(a_r;q,t)$ from the summand in
\eqref{Eq_An-Cauchy-1}
by applying Corollary~\ref{Cor_skew-sum} with
\[
(\la,\mu,\nu,k,\ell)\mapsto (\lar{r},\lar{r+1},\nur{r},k_r,k_{r+1})
\quad \text{for $1\leq r\leq n-2$}
\]
and Proposition~\ref{Prop_skew-sum} with
\[
(a,\la,\mu,\nu,k,\ell)\mapsto 
(a_{n-1},\lar{n-1},\lar{n},\nur{n-1},k_{n-1},k_n)\quad
\text{for $r=n-1$}.
\]
Then
\begin{align*}
&\sum_{\lar{1},\dots,\lar{n}} 
\prod_{r=1}^n P_{\lar{r}}\big[X^{(r)}\big]
Q_{\lar{r}/\mur{r}}\big[Y^{(r)}\big]
\prod_{r=1}^{n-1} f_{\lar{r},\lar{r+1}}^{k_r,k_{r+1}}(a_r;q,t) \\
&\qquad\quad =\sum_{\lar{1},\dots,\lar{n}}
\sum_{\nur{1},\dots,\nur{n-1}}
P_{\lar{1}}\big[X^{(1)}\big]Q_{\lar{n}/\mur{n}}\big[Y^{(n)}\big] \\
&\qquad\qquad \qquad \qquad \qquad \qquad \quad \times \prod_{r=1}^{n-1}
z_r^{\abs{\lar{r}}}Q_{\lar{r}/\nur{r}}\bigg[\frac{t-a_rq}{1-t}\bigg]
P_{\lar{r+1}/\nur{r}}\bigg[\frac{1-1/a_r}{1-t}\bigg] \\
&\qquad \quad =\sum_{\nur{1},\dots,\nur{r-1}}\prod_{r=1}^n
\bigg(z_r^{\abs{\nur{r}}}\sum_{\lar{r}}
P_{\lar{r}/\nur{r-1}}[X^{(r,0)}]
Q_{\lar{r}/\nur{r}}\big[Z^{(r)}\big]\bigg)
\bigg|_{z_n=1,\,\nur{n}=\mur{n}} \\[2mm]
& \qquad \quad =g_0\big|_{z_n=1,\,\nur{n}=\mur{n}},
\end{align*}
where in the fourth line $\nur{0}:=0$, and where $g_0$ is defined in
\eqref{Eq_gdef}.
Using Lemma~\ref{Lem_g-rec} we may replace $g_0$ by $g_{n-1}$, leading to
\begin{align*}
\sum_{\lar{1},\dots,\lar{n}} &
\prod_{r=1}^n P_{\lar{r}}\big[X^{(r)}\big]
Q_{\lar{r}/\mur{r}}\big[Y^{(r)}\big]
\prod_{r=1}^{n-1} f_{\lar{r},\lar{r+1}}^{k_r,k_{r+1}}(a_r;q,t) \\
&=\prod_{r=1}^{n-1}\prod_{i=1}^{k_1}
\frac{(a_rqz_1\cdots z_rx_i;q)_{\infty}}
{(tz_1\cdots z_rx_i;q)_{\infty}}
\prod_{1\leq r<s\leq n-1}\prod_{i=1}^{k_{r+1}-k_r}
\frac{(a_sqt^{i-1}z_{r+1}\cdots z_s;q)_{\infty}}
{(t^iz_{r+1}\cdots z_s;q)_{\infty}} \\[1mm]
&\quad\times\sum_{\lar{n}}P_{\lar{n}}\big[X^{(n,n-1)}\big]
Q_{\lar{n}/\mur{n}}\big[Y^{(n)}\big].
\end{align*}
The final sum on the right can be carried out by \eqref{Eq_Cauchy} with
\[
(X,Y,\la,\mu)\mapsto \big(X^{(n,n-1)},Y^{(n)},\lar{n},\mur{n}\big).
\]
Since $W=X^{(n,n-1)}$ and
\begin{align*}
\sigma_1\bigg[\frac{1-t}{1-q}\,X^{(n,n-1)}Y^{(n)}\bigg] 
&=\sigma_1\bigg[z_1\cdots z_{n-1}\,\frac{1-t}{1-q}\,X^{(1)}Y^{(n)}+
\sum_{r=1}^{n-1}z_{r+1}\cdots z_{n-1}\,\frac{1-1/a_r}{1-q}\,Y^{(n)}\bigg] \\
&=\prod_{i=1}^{k_1}\prod_{j=1}^{k_n}
\frac{(tz_1\cdots z_{n-1}x_iy_j;q)_{\infty}}
{(z_1\cdots z_{n-1}x_iy_j;q)_{\infty}}
\prod_{r=1}^{n-1}\prod_{j=1}^{k_n}
\frac{(z_{r+1}\cdots z_{n-1}y_j/a_r;q)_{\infty}}
{(z_{r+1}\cdots z_{n-1}y_j;q)_{\infty}},
\end{align*}
the right-hand side of the theorem results.

\section{The $\A_n$ AFLT integral}\label{Sec_AFLT}

In this section we first give a description of the domain on integration 
of the $\A_n$ AFLT integral \eqref{Eq_An-AFLT} and then
apply the $\A_n$ Cauchy identity of Corollary~\ref{Cor_An-Cauchy} 
to prove this integral.
At the end of the section we give a companion to the AFLT integral based on 
Theorem~\ref{Thm_An-Cauchy}

\subsection{The domain $\boldsymbol{C_{\gamma}^{k_1,\dots,k_n}[0,1]}$}\label{Sec_chain}

The domain of integration $C_{\gamma}^{k_1,\dots,k_n}[0,1]$ of the 
integral \eqref{Eq_I-An} takes the form of a chain in the usual sense
of algebraic topology.
Since this chain is the same as that of the $\A_n$ Selberg integral
of \cite[Theorem~1.2]{Warnaar09} (see also \cite{TV03} where this
chain first appeared in the case of $\A_2$), we refer to \cite{Warnaar09} 
for the details of exactly how it arises in the course of the proof
presented in Section~\ref{Sec_An-AFLT-proof}.

For $n=1$ the domain $C_{\gamma}^{k_1}[0,1]$ is the $k_1$-simplex given 
in \eqref{Eq_simplex}.
In order to describe $C_{\gamma}^{k_1,\dots,k_n}[0,1]$ for $n\geq 2$, 
we first consider $D^{k_1,\dots,k_n}[0,1]\subseteq[0,1]^{k_1+\dots+k_n}$
as the set of points
\[
\big(\tar{1},\tar{2},\dots,\tar{n}\big)=
\big(\tar{1}_1,\dots,\tar{1}_{k_1},
\tar{2}_1,\dots,\tar{2}_{k_2},\dots,\tar{n}_1,
\dots,\tar{n}_{k_n}\big)\in[0,1]^{k_1+\dots+k_n}
\]
subject to
\[
0<\tar{r}_1<\dots<\tar{r}_{k_r}<1 \qquad
\text{for $1\leq r\leq n$},
\]
and
\[
\tar{r}_i<\tar{r+1}_{i-k_r+k_{r+1}} \qquad
\text{for $1\leq i\leq k_r$, $1\leq r\leq n-1$}.
\]
Following \eqref{Eq_inequalities-2} this may be visualised as
\[
\begin{array}{ccccccccccccccccl}
&&&&0&\LE&\tar{r}_1&\LE&\tar{r}_2 &\LE&\cdots&\LE&
\tar{r}_{k_r}&\LE&1 \\[-1mm]
&&&&&&\text{\begin{turn}{270}$<$\end{turn}}&&
\text{\begin{turn}{270}$<$\end{turn}} &&&&
\text{\begin{turn}{270}$<$\end{turn}} &&& \\[3mm]
0&\LE&\tar{r+1}_1&\LE&\cdots&\LE&\tar{r+1}_{k_{r+1}-k_r+1}&\LE &
\tar{r+1}_{k_{r+1}-k_r+2}&\LE&\cdots&\LE
&\tar{r+1}_{k_{r+1}}&\LE& 1.
\end{array}
\]
We need to consider all possible total orderings between the integration
variables $\tar{r}$ and $\tar{r+1}$ consistent with the above partial order.
Each such total ordering may be described by a map
\[
M_r : \{1,\dots,k_r\}\longrightarrow \{1,\dots,k_{r+1}\}
\]
such that $M_r(i)\leq M_r(i+1)$ and $1\leq M_r(i)\leq i+k_{r+1}-k_r$, so that
\begin{equation}\label{Eq_chain-an}
\tar{r+1}_{M_r(i)-1}<\tar{r}_i<\tar{r+1}_{M_r(i)},
\end{equation}
where $\tar{r+1}_0:=0$.
In view of this we define the sets
\[
D_{M_1,\dots,M_{n-1}}^{k_1,\dots,k_n}\subseteq D^{k_1,\dots,k_n}[0,1]
\]
by requiring that \eqref{Eq_chain-an} holds for fixed admissible maps
$M_1,\dots,M_{n-1}$.
Then $D^{k_1,\dots,k_n}$ can be written as the chain
\[
D^{k_1,\dots,k_n}[0,1]=\sum_{M_1,\dots,M_{n-1}}
D_{M_1,\dots,M_{n-1}}^{k_1,\dots,k_n}[0,1],
\]
where the sum is over all admissible maps $M_1,\dots,M_{n-1}$.
Analytically continuing the weight function 
\begin{equation}\label{Eq_weight}
F_{M_1,\dots,M_{n-1}}^{k_1,\dots,k_n}(\gamma)
:=\prod_{r=1}^{n-1}\prod_{i=1}^{k_r}
\frac{\sin(\pi(i+k_{r+1}-k_r-M_r(i)+1)\gamma)}
{\sin(\pi(i+k_{r+1}-k_r)\gamma)}
\quad\text{for $\gamma\in\mathbb{C}\setminus\mathbb{Z}$},
\end{equation}
to include $\gamma=0$,
the chain ${C}_\gamma^{k_1,\dots,k_n}[0,1]$ is defined as
\begin{equation}\label{Eq_chain-Cn-def}
{C_\gamma^{k_1,\dots,k_n}[0,1]}
:=\sum_{M_1,\dots,M_{n-1}}F_{M_1,\dots,M_{n-1}}^{k_1,\dots,k_n}(\gamma)
D_{M_1,\dots,M_{n-1}}^{k_1,\dots,k_n}[0,1].
\end{equation}
Note that it follows from the above that
\[
C_\gamma^{0,k_2,\dots,k_n}[0,1]=C_\gamma^{k_2,\dots,k_n}[0,1].
\]

\subsection{Proof of Theorem~\ref{Thm_An-AFLT}}\label{Sec_An-AFLT-proof}

We begin with the identity of Corollary~\ref{Cor_An-Cauchy}, where we note that
the alphabets $X^{(1)}$ and $Y^{(n)}$ contain $k_1$ variables and countably 
many variables respectively.
We now fix a nonnegative integer $m$ and set $Y^{(n)}_i=0$ for $i>m$. 
Next we fix a pair of partitions $\la\in\mathscr{P}_{k_1}$ 
and $\mu\in\mathscr{P}_m$, and carry out the specialisation
\[
\big(X^{(1)},Y^{(n)},c,d\big)\mapsto
\big(\spec{\la}_{k_1},bz_nt^{1-m}\spec{\mu}_m,z_nt,bz_nt^{1-m}\big).
\]
Also replacing $\lar{1},\dots,\lar{n}$ by $\nur{1},\dots,\nur{n}$,
this leads to the identity
\begin{align*}
&\sum_{\nur{1},\dots,\nur{n}} P_{\nur{1}}[\spec{\la}_{k_1}]
Q_{\nur{n}}\bigg[bt^{-m}\spec{\mu}_m+\frac{1-bt^{-m}}{1-t}\bigg]\\
&\qquad\qquad\times
\prod_{r=1}^n(tz_r)^{\abs{\nur{r}}}
\prod_{r=1}^{n-1} \bigg( P_{\nur{r+1}}\bigg[\frac{1-t^{k_{r+1}}}{1-t}\bigg]
Q_{\nur{r}}\bigg[\frac{1-qt^{k_r-1}}{1-t}\bigg]
f_{\nur{r},\nur{r+1}}^{k_r,k_{r+1}}(t^{k_r-k_{r+1}};q,t) \bigg) \\
&\quad=\prod_{r=1}^{n-1}\bigg( \prod_{i=1}^{k_1}
\frac{(z_1\cdots z_r q^{\la_i+1}t^{k_1+k_r-k_{r+1}-i};q)_{\infty}}
{(z_1\cdots z_r q^{\la_i}t^{k_1-i+1};q)_{\infty}} 
\prod_{j=1}^m
\frac{(bz_{r+1}\cdots z_n q^{\mu_j}t^{k_{r+1}-k_r-j};q)_{\infty}}
{(bz_{r+1}\cdots z_n q^{\mu_j}t^{-j};q)_{\infty}} \bigg)\\
&\quad\quad\times \prod_{i=1}^{k_1} \prod_{j=1}^m
\frac{(bz_1\cdots z_n q^{\la_i+\mu_j}t^{k_1+1-i-j};q)_{\infty}}
{(bz_1\cdots z_n q^{\la_i+\mu_j}t^{k_1-i-j};q)_{\infty}}
\prod_{i=1}^{k_1}
\frac{(bz_1\cdots z_{n-1} q^{\la_i}t^{k_1-m-i};q)_\infty}
{(z_1\cdots z_{n-1} q^{\la_i}t^{k_1-i};q)_\infty} \\
&\quad\quad \times
\prod_{r=1}^{n-1}\prod_{i=1}^{k_{r+1}-k_r}
\frac{(bz_{r+1}\cdots z_{n-1}t^{i-m-1};q)_{\infty}}
{(z_{r+1}\cdots z_{n-1}t^{i-1};q)_{\infty}} 
\prod_{1\leq r<s\leq n-1} \prod_{i=1}^{k_{r+1}-k_r} 
\frac{(qt^{k_s-k_{s+1}+i-1}z_{r+1}\cdots z_s;q)_{\infty}}
{(t^iz_{r+1}\cdots z_s;q)_{\infty}},
\end{align*}
where we have dropped $a_r$ in favour of $t^{k_r-k_{r+1}}$ in comparison
with Corollary~\ref{Cor_An-Cauchy}.
By virtue of the evaluation symmetry \eqref{Eq_eval-sym} and the
generalised evaluation symmetry of Lemma~\ref{Lem_gen-eval-sym}, we have
\begin{align*}
P_{\nur{1}}\big[\spec{\la}_{k_1}\big]
&=\frac{P_{\nur{1}}\big[\frac{1-t^{k_1}}{1-t}\big]}
{P_{\la}\big[\frac{1-t^{k_1}}{1-t}\big]}
P_{\la}\big[\spec{\nur{1}}_{k_1}\big]
\intertext{and}
Q_{\nur{n}}\bigg[bt^{-m}\spec{\mu}_m+\frac{1-bt^{-m}}{1-t}\bigg]
&=\frac{Q_{\nur{n}}\big[\frac{1-b}{1-t}\big]}
{P_{\mu}\big[\frac{1-b}{1-t}\big]}\,
P_{\mu}\bigg[bt^{-k_n}\spec{\nur{n}}_{k_n}+\frac{1-bt^{-k_n}}{1-t}\bigg],
\end{align*}
effectively allowing us to interchange the roles of $\nur{1},\nur{n}$ and
$\la,\mu$ in the summand.
Carrying this out and multiplying both sides by 
\begin{equation}\label{Eq_PmuPnu}
P_{\la}\bigg[\frac{1-t^{k_1}}{1-t}\bigg]
P_{\mu}\bigg[\frac{1-b}{1-t}\bigg],
\end{equation}
the left-hand side of the above identity becomes
\begin{multline*}
\sum_{\nur{1},\dots,\nur{n}} P_{\la}\big[\spec{\nur{1}}_{k_1}]
P_{\mu}\bigg[bt^{-k_n}\spec{\nur{n}}_{k_n}+\frac{1-bt^{-k_n}}{1-t}\bigg]
(b;q,t)_{\nur{n}} \\
\times \prod_{r=1}^n \frac{(tz_r)^{\abs{\nur{r}}} t^{2n(\nur{r})}
(t^{k_r};q,t)_{\nur{r}}}{c_{\nur{r}}(q,t)c'_{\nur{r}}(q,t)}
\prod_{r=1}^{n-1} (qt^{k_r-1};q,t)_{\nur{r}}
f_{\nur{r},\nur{r+1}}^{k_r,k_{r+1}}(t^{k_r-k_{r+1}};q,t), 
\end{multline*}
where we have also used the specialisation formulas \eqref{Eq_princ-spec}
and \eqref{Eq_a-spec}.
The corresponding right-hand side is as before, except for the additional
factor \eqref{Eq_PmuPnu}.
Next we use \eqref{Eq_qt-Poch-def} and \eqref{Eq_hook} in the summand,
make the further substitutions
\[
b\mapsto q^{\beta+(k_n-1)\gamma},\quad t\mapsto q^\gamma\quad\text{and}
\quad z_r\mapsto q^{\alpha_r-\gamma} \quad \text{for $1\leq r\leq n$},
\]
and introduce auxiliary variables $(\tar{1},\dots,\tar{n})$ as
$\tar{r}_i:=q^{\nur{r}_i+(k_r-i)\gamma}$. 
To more simply express the resulting identity we introduce some
additional notation, and for alphabets $t=(t_1,\dots,t_k)$, 
$s=(s_1,\dots,s_{\ell})$ define
\[
\Delta_{\gamma}(t;q):=\prod_{1\leq i<j\leq k}
t_j^{2\gamma}\big(1-t_i/t_j\big)
\big(q^{1-\gamma}t_i/t_j;q\big)_{2\gamma-1}
\]
and
\[
\Delta_\gamma(t,s;q):=\prod_{i=1}^k\prod_{j=1}^{\ell}s_j^{-\gamma}
\big(qt_i/s_j;q\big)_{-\gamma}.
\]
After multiplying through by $(1-q)^{k_1+\dots+k_n}$, this yields
\begin{align*}
&(1-q)^{k_1+\dots+k_n}\sum_{\nur{1},\dots,\nur{n}}
P_{\la}\big(\tar{1};q,q^{\gamma}\big)
P_{\mu}\bigg(\bigg[q^{\beta-\gamma}\tar{n}
+\frac{1-q^{\beta-\gamma}}{1-q^\gamma}\bigg];q,q^{\gamma}\bigg)\\
&\qquad\qquad\qquad\qquad\qquad\times
\prod_{r=1}^n\Big( \Delta_\gamma\big(\tar{r};q\big)
\prod_{i=1}^{k_r}\big(\tar{r}_i\big)^{\alpha_r}
\big(q\tar{r}_i;q\big)_{\beta_r-1}\Big)
\prod_{r=1}^{n-1}\Delta_\gamma\big(\tar{r},\tar{r+1};q\big) \\
&=q^{\gamma\sum_{r=1}^n \big(\alpha_r\binom{k_r}{2}+2\gamma\binom{k_r}{3}\big)
-\gamma^2\sum_{r=1}^{n-1}k_r\binom{k_{r+1}}{2}} \\
&\quad\times
P_{\la}\bigg(\bigg[\frac{1-q^{k_1\gamma}}
{1-q^\gamma}\bigg];q,q^\gamma\bigg)
P_{\mu}\bigg(\bigg[\frac{1-q^{\beta+(k_n-1)\gamma}}{1-q^\gamma}\bigg];
q,q^{\gamma}\bigg)\\[1mm]
&\quad\times \prod_{r=1}^{n-1} \prod_{i=1}^{k_r}
\frac{\Gamma_q(1+(i-k_{r+1}-1)\gamma)\Gamma_q(i\gamma)}{\Gamma_q(\gamma)}
\prod_{i=1}^{k_n}
\frac{\Gamma_q(\beta+(i-1)\gamma)\Gamma_q(i\gamma)}{\Gamma_q(\gamma)}\\
&\quad\times \prod_{1\leq r<s\leq n-1} \prod_{i=1}^{k_{r+1}-k_r} 
\frac{\Gamma_q(\alpha_{r+1}+\dots+\alpha_s+(r-s+i)\gamma)}
{\Gamma_q(1+\alpha_{r+1}+\dots+\alpha_s+(k_s-k_{s+1}+i+r-s-1)\gamma)}\\
&\quad\times \prod_{r=1}^{n-1}\bigg(\prod_{i=1}^{k_{r+1}-k_r}
\frac{\Gamma_q(\alpha_{r+1}+\dots+\alpha_n+(r-n+i)\gamma)}
{\Gamma_q(\alpha_{r+1}+\dots+\alpha_n+\beta+(k_n-m+r-n+i-1)\gamma)}\\
&\quad\quad\qquad\;\times
\prod_{i=1}^{k_1}
\frac{\Gamma_q(\alpha_1+\dots+\alpha_r+(k_1-r-i+1)\gamma+\la_i)}
{\Gamma_q(1+\alpha_1+\dots+\alpha_r+(k_1+k_r-k_{r+1}-r-i)\gamma+\la_i)} \\
&\quad\quad\qquad\;\times \prod_{j=1}^m
\frac{\Gamma_q(\alpha_{r+1}+\dots+\alpha_n+\beta+(k_n+r-n-j)\gamma+\mu_j)}
{\Gamma_q(\alpha_{r+1}+\dots+\alpha_n+\beta+
(k_{r+1}-k_r+k_n+r-n-j)\gamma+\mu_j)}\bigg)\\
&\quad\times \prod_{i=1}^{k_1}
\frac{\Gamma_q(\alpha_1+\dots+\alpha_n+(k_1-n-i+1)\gamma+\la_i)}
{\Gamma_q(\alpha_1+\dots+\alpha_n+\beta+(k_1+k_n-m-n-i)\gamma+\la_i)} \\
&\quad\times \prod_{i=1}^{k_1} \prod_{j=1}^m
\frac{\Gamma_q(\alpha_1+\dots+\alpha_n
+\beta+(k_1+k_n-n-i-j)\gamma+\la_i+\mu_j)}
{\Gamma_q(\alpha_1+\dots+\alpha_n+\beta+
(k_1+k_n-n-i-j+1)\gamma+\la_i+\mu_j)},
\end{align*}
where $\beta_1=\dots=\beta_{n-1}:=1$ and $\beta_n:=\beta$.
The above is a restricted $q$-integral over the domain 
$D^{k_1,\dots,k_n}[0,1]$.
To complete the proof we divide the above identity by its
$\la=\mu=0$ case and then take the $q\to 1^{-}$ limit.
In this limit $(1-q^z)/(1-q^{\gamma})$ becomes the binomial 
element $z/\gamma$ and the domain of integration becomes
$C_{\gamma}^{k_1,\dots,k_n}[0,1]$, exactly as in the proof of the $\A_n$
Selberg integral (cf.~\cite[\S 5]{Warnaar09}).
The resulting $\A_n$ Selberg average is the $\ell=k_1$ case of
Theorem~\ref{Thm_An-AFLT}.
It is a trivial exercise to verify that the right-hand side of
\eqref{Eq_An-AFLT} is independent of the choice of $\ell$, as long
as $\ell\geq l(\la)$.

\subsection{A companion to the $\A_n$ AFLT integral}

In \cite{Warnaar10} the $n=2$ case of Theorem~\ref{Thm_An-Cauchy} with 
$\mur{2}=0$ is employed to prove an $\A_2$ Selberg integral with two Jack 
polynomials in the integrand, but in a different form to that of the $\A_2$
AFLT integral (Theorem~\ref{Thm_An-AFLT} for $n=2$).
The two integrals differ in that the argument of the second Jack polynomial 
in \cite[Theorem~3.1]{Warnaar10} is simply the alphabet $\tar{2}$ with 
cardinality $k_2$ and there is 
an additional parameter $\beta_1$ subject to $\beta_1+\beta_2=\gamma+1$ (here
$\beta_2$ is the $\beta$ of the $\A_2$ AFLT integral).
By the rank-$n$ case of Theorem~\ref{Thm_An-Cauchy} we obtain
an $\A_n$ analogue of \cite[Theorem~3.1]{Warnaar10} described below.

For $\alpha_1,\dots,\alpha_n,\beta_{n-1},\beta_n,\gamma\in\mathbb{C}$ 
such that
\begin{subequations}\label{Eq_conditions-alt}
\begin{equation}
\beta_{n-1}+\beta_n=\gamma+1,
\end{equation}
\begin{equation}\label{Eq_betanmin1}
\beta_{n-1}+(i-k_n-1)\gamma\not\in\mathbb{Z} 
\quad \text{for $1\leq i \leq\min\{k_{n-1},k_n\}$},
\end{equation}
and
\begin{align}
&\Re(\gamma)>-\frac{1}{\max\{k_{n-1},k_n\}}, \quad
\Re\big(\beta_r+(i-k_{r+1}-1)\gamma\big)>0
\quad \text{for $1\leq r\leq n$ and $1\leq i\leq k_r$}, \\
&\Re\big(\alpha_r+\dots+\alpha_s+(r-s+i-1)\gamma\big)>0 \\
&\qquad\qquad \text{for } \notag
\begin{cases}
\text{$1\leq r\leq s\leq n-1$ and $1\leq i\leq k_r-k_{r-1}$}, \\[1mm]
\text{$1\leq r\leq n-1$, $s=n$ and 
$1\leq i\leq \min\{k_n,k_r-k_{r-1}\}$}, \\[1mm]
\text{$r=s=n$ and $1\leq i\leq k_n$},
\end{cases}
\end{align}
where $k_{n+1}:=0$ and $\beta_1=\dots=\beta_{n-2}:=1$,
\end{subequations}
we modify the $\A_n$ Selberg average \eqref{Eq_An-Selberg-average} to
\begin{equation}\label{Eq_average-alt}
\big\langle \mathscr{O}\big\rangle_
{\alpha_1,\dots,\alpha_n,\beta_{n-1},\beta_n;\gamma}^{k_1,\dots,k_n}:=
\frac{I^{\A_n}_{k_1,\dots,k_n}
(\mathscr{O};\alpha_1,\dots,\alpha_n,\beta_{n-1},\beta_n;\gamma)}
{I^{\A_n}_{k_1,\dots,k_n}
(1;\alpha_1,\dots,\alpha_n,\beta_{n-1},\beta_n;\gamma)}.
\end{equation}
Here
\begin{align*}
I^{\A_n}_{k_1,\dots,k_n}&
(\mathscr{O};\alpha_1,\dots,\alpha_n,\beta_{n-1},\beta_n;\gamma) \\
&:=\Int_{C^{k_1,\dots,k_n}_{\beta_{n-1},\gamma}[0,1]}
\mathscr{O}\big(\tar{1},\dots,\tar{n}\big)
\prod_{r=1}^n \prod_{i=1}^{k_r}\big(\tar{r}_i\big)^{\alpha_r-1}
\big(1-\tar{r}_i\big)^{\beta_r-1} \\[-1mm]
&\qquad\qquad\qquad\quad\times
\prod_{r=1}^n \Abs{\Delta\big(\tar{r}\big)}^{2\gamma}
\prod_{r=1}^{n-1} \Abs{\Delta\big(\tar{r},\tar{r+1}\big)}^{-\gamma}\, 
\dup\tar{1}\cdots\dup\tar{n}
\end{align*}
and $C^{k_1,\dots,k_n}_{\beta;\gamma}[0,1]$ is the following
$\beta$-deformation of the chain defined in Section~\ref{Sec_chain}. 
Let 
\[
E^{k_1,\dots,k_n}[0,1]\subseteq [0,1]^{k_1+\dots+k_n}
\]
be the set of points
\[
\big(\tar{1},\dots,\tar{n}\big)\in[0,1]^{k_1+\dots+k_n}
\]
such that
\[
0<\tar{r}_1<\dots<\tar{r}_{k_r}<1 \quad \text{for $1\leq r\leq n$}
\]
and
\[
\tar{r}_i<\tar{r+1}_{i-k_r+k_{r+1}} \quad 
\text{for $1\leq i\leq k_r$, $1\leq r\leq n-2$}.
\]
$E^{k_1,\dots,k_n}[0,1]$ differs from the set $D^{k_1,\dots,k_n}[0,1]$
only in that the relative ordering between the variables $\tar{n-1}$ and 
$\tar{n}$ has been removed. 
Accordingly we replace the sum over the maps $M_{n-1}$ by a sum over maps 
\[
M_{n-1}':\{1,\dots,k_n\}\longrightarrow\{1,\dots,k_{n+1}+1\}
\]
subject to $M_{n-1}'(i)\leq M_{n-1}'(i+1)$ for $1\leq i\leq k_{n-1}$
such that
\begin{equation}\label{Eq_chain-alt}
\tar{n}_{M_{n-1}'(i)-1}<\tar{n-1}_i<\tar{n}_{M_{n-1}'(i)},
\end{equation}
where $\tar{n}_0:=0$ and $\tar{n}_{k_n+1}:=1$.
We then define 
$E_{M_1,\dots,M_{n-2};M_{n-1}'}^{k_1,\dots,k_n}[0,1]\subseteq
E^{k_1,\dots,k_n}[0,1]$ by requiring that \eqref{Eq_chain-an} holds for
$M_1,\dots,M_{n-2}$ and \eqref{Eq_chain-alt} holds for $M_{n-1}$.
Hence
\[
E^{k_1,\dots,k_n}[0,1]=\sum_{M_1,\dots,M_{n-2},M'_{n-1}}
E_{M_1,\dots,M_{n-2};M_{n-1}'}^{k_1,\dots,k_n}[0,1].
\]
We also replace the weight function \eqref{Eq_weight} by
\[
G_{M_1,\dots,M_{n-2},M_{n-1}'}^{k_1,\dots,k_n}(\gamma)
:=F_{M_1,\dots,M_{n-2}}^{k_1,\dots,k_{n-1}}(\gamma)
\prod_{i=1}^{k_{n-1}}
\frac{\sin(\pi(\beta-(i+k_n-k_{n-1}-M_{n-1}'(i)+1)\gamma))}
{\sin(\pi(\beta-(i+k_n-k_{n-1})\gamma))}.
\]
Note that the condition \eqref{Eq_betanmin1} is necessary for this
weight function to be free of poles.
The new chain is then defined as
\[
C^{k_1,\dots,k_n}_{\beta;\gamma}[0,1]
:=\sum_{M_1,\dots,M_{n-2},M_{n-1}'}
G_{M_1,\dots,M_{n-2},M_{n-1}'}^{k_1,\dots,k_n}(\gamma)
D_{M_1,\dots,M_{n-1}}^{k_1,\dots,k_n}[0,1].
\]

We are now ready to state the counterpart to Theorem~\ref{Thm_An-AFLT}.

\begin{theorem}\label{Thm_An-alt}
For $n\geq 2$, let $k_1,\dots,k_n$ be nonnegative integers 
such that $k_1\leq\cdots\leq k_{n-1}$.
Then for $\alpha_1,\dots,\alpha_n,\beta_{n-1},\beta_n,\gamma\in\mathbb{C}$
such that \eqref{Eq_conditions-alt} holds and 
$\la,\mu\in\mathscr{P}$, we have
\begin{align}\label{Eq_An-AFLT-alt}
&\Big\langle P_{\la}^{(1/\gamma)}\big(\tar{1}\big)
P_{\mu}^{(1/\gamma)}\big(\tar{n}\big)
\Big\rangle_{\alpha_1,\dots,\alpha_n,\beta_{n-1},\beta_n;\gamma}
^{k_1,\dots,k_n} \\[1mm]
&\qquad =P_{\la}^{(1/\gamma)}[k_1] P_{\mu}^{(1/\gamma)}[k_n] \,\notag 
\prod_{r=1}^{n-1}
\frac{(\alpha_1+\dots+\alpha_r+(k_1-r)\gamma;\gamma)_{\la}}
{(\alpha_1+\dots+\alpha_r+\beta_r+
(k_1+k_r-k_{r+1}-r-1)\gamma;\gamma)_{\la}} \notag \\[1mm]
&\qquad\quad\times\prod_{r=2}^n
\frac{(\alpha_r+\dots+\alpha_n+(k_n+r-n-1)\gamma;\gamma)_{\mu}}
{(1+\alpha_r+\dots+\alpha_n-\beta_{r-1}+
(k_r-k_{r-1}+k_n+r-n-1)\gamma;\gamma)_{\mu}}\notag \\
&\qquad\quad\times
\prod_{i=1}^{k_1}\prod_{j=1}^{k_n}
\frac{(\alpha_1+\dots+\alpha_n+(k_1+k_n-n-i-j+1)\gamma)_{\la_i+\mu_j}}
{(\alpha_1+\dots+\alpha_n+(k_1+k_n-n-i-j+2)\gamma)_{\la_i+\mu_j}} \notag
\end{align}
and
\begin{align}\label{Eq_An-alt}
&I^{\A_n}_{k_1,\dots,k_n}
(1;\alpha_1,\dots,\alpha_n,\beta_{n-1},\beta_n;\gamma) \\
&\quad=\prod_{r=1}^n\prod_{i=1}^{k_r}
\frac{\Gamma(\beta_r+(i-k_{r+1}-1)\gamma)\Gamma(i\gamma)}
{\Gamma(\gamma)} \notag \\
&\qquad\times \prod_{1\leq r\leq s\leq n-1}\prod_{i=1}^{k_r-k_{r-1}}
\frac{\Gamma(\alpha_r+\dots+\alpha_s+(r-s+i-1)\gamma)}
{\Gamma(\alpha_r+\dots+\alpha_s+\beta_s+(k_s-k_{s+1}+r-s+i-2)\gamma)} 
\notag \\
&\qquad\times\prod_{r=1}^n
\prod_{i=1}^{k_n}
\frac{\Gamma(\alpha_r+\dots+\alpha_n+(r-n+i-1)\gamma)}
{\Gamma(1+\alpha_r+\dots+\alpha_n-\beta_{r-1}+
(k_r-k_{r-1}+r-n+i-1)\gamma)}, \notag
\end{align}
where $k_0=k_{n+1}:=0$ and $\beta_0=\dots=\beta_{n-2}:=1$.
\end{theorem}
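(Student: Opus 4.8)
The plan is to follow the proof of Theorem~\ref{Thm_An-AFLT} in Section~\ref{Sec_An-AFLT-proof} almost verbatim, but to start from the full $\A_n$ Cauchy-type formula of Theorem~\ref{Thm_An-Cauchy} rather than from its specialisation Corollary~\ref{Cor_An-Cauchy}. Concretely, I take Theorem~\ref{Thm_An-Cauchy} with $\mur{n}=0$, with $k_n$ finite, and crucially with the parameter $a_{n-1}$ kept an indeterminate, so that $Y^{(n)}=y_1+\dots+y_{k_n}$ is a genuine alphabet of cardinality $k_n$ and $X^{(n)}=(1-t^{k_{n-1}}/a_{n-1})/(1-t)$ carries the free parameter; in particular no $(c-d)/(1-t)$ plethystic shift is introduced. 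After relabelling the summation partitions as $\nur{1},\dots,\nur{n}$, I specialise $X^{(1)}\mapsto\spec{\la}_{k_1}$ and $Y^{(n)}\mapsto\spec{\mu}_{k_n}$, which forces $\la\in\mathscr{P}_{k_1}$ and $\mu\in\mathscr{P}_{k_n}$ without loss of generality. The one structural departure from Section~\ref{Sec_An-AFLT-proof} is that, since $Y^{(n)}$ now carries exactly $k_n$ letters, I apply the plain evaluation symmetry \eqref{Eq_eval-sym} to $\nur{n}$ in place of the generalised form of Lemma~\ref{Lem_gen-eval-sym}; together with \eqref{Eq_eval-sym} applied to $\nur{1}$ this interchanges $\nur{1},\nur{n}$ with $\la,\mu$ and produces $P_{\la}[\spec{\nur{1}}_{k_1}]$ and $P_{\mu}[\spec{\nur{n}}_{k_n}]$ in the summand, the latter free of the plethystic shift $\beta/\gamma-1$ appearing in the AFLT integrand. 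This is exactly why the argument of the second Jack polynomial in \eqref{Eq_An-AFLT-alt} is the bare alphabet $\tar{n}$.

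Inserting the specialisation and hook-polynomial formulas \eqref{Eq_princ-spec}, \eqref{Eq_a-spec}, \eqref{Eq_qt-Poch-def} and \eqref{Eq_hook}, I then perform the substitutions $t\mapsto q^{\gamma}$, $z_r\mapsto q^{\alpha_r-\gamma}$ and $a_{n-1}\mapsto q^{\beta_{n-1}-1}t^{k_{n-1}-k_n}$, the last reducing to the choice $a_{n-1}=t^{k_{n-1}-k_n}$ of Corollary~\ref{Cor_An-Cauchy} precisely when $\beta_{n-1}=1$. With the integration variables $\tar{r}_i:=q^{\nur{r}_i+(k_r-i)\gamma}$ the sum becomes a restricted $q$-integral, and the mechanism producing the new parameter is transparent: the single free $a_{n-1}$ enters the weight on $\tar{n}$ through $P_{\nur{n}}[X^{(n)}]$ with governing argument $t^{k_{n-1}}/a_{n-1}$, and the weight on $\tar{n-1}$ through $Q_{\nur{n-1}}[Y^{(n-1)}]$ with governing argument $a_{n-1}qt^{k_n-1}$. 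Since these two arguments multiply to the $a_{n-1}$-independent quantity $qt^{k_{n-1}+k_n-1}$, the resulting exponents are forced to satisfy $\beta_{n-1}+\beta_n=\gamma+1$, and both factors $(1-\tar{n-1}_i)^{\beta_{n-1}-1}$ and $(1-\tar{n}_i)^{\beta_n-1}$ survive into the limiting integrand.

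From here I divide by the $\la=\mu=0$ instance and let $q\to 1^{-}$; the $\la=\mu=0$ case yields the normalisation \eqref{Eq_An-alt}, while the general quotient yields the average \eqref{Eq_An-AFLT-alt}, after a routine simplification of $q$-gamma products and the standard check that the right-hand side is independent of $\ell\ge l(\la)$ and $m\ge l(\mu)$. I expect the genuine obstacle to lie in the geometry of the limiting domain. Because $a_{n-1}$ is not set equal to $t^{k_{n-1}-k_n}$, the relative order of $\tar{n-1}$ and $\tar{n}$ is never prescribed; consequently the hypothesis $k_{n-1}\le k_n$ may be dropped, and the summation over $\lar{n}$ is governed by Proposition~\ref{Prop_skew-sum} rather than Corollary~\ref{Cor_skew-sum}. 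Tracking the ordering of the poles through the $q\to 1$ limit, as in \cite[\S5]{Warnaar09} and the $n=2$ case \cite[Theorem~3.1]{Warnaar10}, one must show that the simplicial chain $D^{k_1,\dots,k_n}[0,1]$ is replaced by $E^{k_1,\dots,k_n}[0,1]$ and that the sine-weight $F_{M_1,\dots,M_{n-1}}^{k_1,\dots,k_n}(\gamma)$ is replaced by $G_{M_1,\dots,M_{n-2},M_{n-1}'}^{k_1,\dots,k_n}(\gamma)$, the extra product of ratios of sines in $\beta_{n-1}-(\dots)\gamma$ arising precisely from analytically continuing the contribution of the now-unconstrained ordering between $\tar{n-1}$ and $\tar{n}$. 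The hypothesis \eqref{Eq_betanmin1} is exactly what keeps this weight $G$ free of poles, so the delicate point is to verify the correct emergence of $G$ and of the $\beta$-deformed chain $C^{k_1,\dots,k_n}_{\beta_{n-1};\gamma}[0,1]$; once these are identified, matching both sides of \eqref{Eq_An-AFLT-alt} and \eqref{Eq_An-alt} to the limit is mechanical.
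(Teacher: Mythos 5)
Your proposal is correct and follows essentially the same route as the paper: the paper likewise proves Theorem~\ref{Thm_An-alt} by taking \eqref{Eq_An-Cauchy-1} with $k_n$ finite, $\mur{n}=0$ and $a_{n-1}$ kept generic, substituting $X^{(1)}\mapsto\spec{\la}_{k_1}$, $Y^{(n)}\mapsto q^{\alpha_n}\spec{\mu}_{k_n}$, $a_{n-1}\mapsto q^{\beta_{n-1}+(k_{n-1}-k_n)\gamma}$, $t\mapsto q^{\gamma}$, $z_r\mapsto q^{\alpha_r-\gamma}$, and then converting the sum to a $q$-integral exactly as in Section~\ref{Sec_An-AFLT-proof}. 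The only details worth flagging are that $\alpha_n$ must enter through the prefactor $q^{\alpha_n}$ in the specialisation of $Y^{(n)}$ (there is no $z_n$ in Theorem~\ref{Thm_An-Cauchy}), and that your exponent for $a_{n-1}$ differs from the paper's by a factor of $q$; your identification of the plain evaluation symmetry \eqref{Eq_eval-sym}, the balancing condition $\beta_{n-1}+\beta_n=\gamma+1$, and the emergence of the $\beta$-deformed chain as the essential new points is exactly right.
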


It should be noted that the final product in \eqref{Eq_An-alt} may 
alternatively be expressed as
\[
\prod_{r=1}^{n-1} \prod_{i=1}^{k_r-k_{r-1}}
\frac{\Gamma(\alpha_r+\dots+\alpha_n+(r-n+i-1)\gamma)}
{\Gamma(\alpha_r+\dots+\alpha_n+(k_n+r-n+i-1)\gamma)}
\prod_{i=1}^{k_n}
\frac{\Gamma(\alpha_n+(i-1)\gamma)}
{\Gamma(\alpha_n+\beta_n+(k_n-k_{n-1}+i-2)\gamma)}.
\]
When $\beta=\gamma$ in \eqref{Eq_An-AFLT} and 
$(\beta_{n-1},\beta_n)=(1,\gamma)$ in \eqref{Eq_An-AFLT-alt}
both integral evaluations coincide.
For $k_n=0$ equation \eqref{Eq_An-AFLT-alt} simplifies to the
$\A_{n-1}$ analogue of Kadell's integral \cite[Theorem~6.1]{Warnaar09}.

\begin{proof}
We start with \eqref{Eq_An-Cauchy-1} with $k_n$ finite and, 
for $\la,\mu\in\mathscr{P}$ with $l(\la)\leq k_1$ and $l(\mu)\leq k_n$, 
make the substitutions
\[
\big(X^{(1)},Y^{(n)},a_{n-1},t,\mur{n}\big)\mapsto
\big(\spec{\la}_{k_1},q^{\alpha_n}\spec{\mu}_{k_n},
q^{\beta_{n-1}+(k_{n-1}-k_n)\gamma},q^\gamma,0\big)
\]
and
\[
z_r\mapsto q^{\alpha_r-\gamma} \qquad \text{for $1\leq r\leq n-1$}.
\]
Then the resulting sum may be turned into an integral following the steps 
outlined in Section~\ref{Sec_An-AFLT-proof}.
For $\la=\mu=0$ this yields \eqref{Eq_An-alt} and for general $\la$ and $\mu$
it gives 
\[
\Big\langle P_{\la}^{(1/\gamma)}\big(\tar{1}\big)
P_{\mu}^{(1/\gamma)}\big(\tar{n}\big)
\Big\rangle_{\alpha_1,\dots,\alpha_n,\beta_{n-1},\beta_n;\gamma}
^{k_1,\dots,k_n} \times 
I^{\A_n}_{k_1,\dots,k_n}
(1;\alpha_1,\dots,\alpha_n,\beta_{n-1},\beta_n;\gamma). \qedhere
\]
\end{proof}

\section{Complex Schur functions and Selberg integrals}
\label{Sec_An-AFLT-Schur}

In this section we introduce a complex analogue of the
Schur function and show how complex Schur functions may be
utilised to prove $\A_n$ Selberg-type integrals, such as
Theorem~\ref{Thm_nplusone}.
We should remark that Kadell already observed in \cite{Kadell00} that 
for $\gamma=1$ the evaluations of the 
Kadell and Hua--Kadell integrals remain valid if one replaces
the Schur functions in the integrand by Schur functions indexed
by sequences of complex numbers.
His paper does not, however, provide the necessary tools to attack
integrals such as \eqref{Eq_nplusone}, and we will not use any
of his results for complex Schur functions, such as Pieri and
branching rules.

\subsection{Complex Schur functions}

In the following we fix the principal branch of the complex
logarithm with cut along the negative real axis and argument in
$(-\pi,\pi]$.
Accordingly we denote the cut or slit complex plane 
$\abs{\Im(\log(x))}<\pi$ by $\Omega$.

For $x=(x_1,\dots,x_n)\in\Omega^n$ and $z=(z_1,\dots,z_n)\in\mathbb{C}^n$,
we define the complex Schur function\footnote{Our definition differs
slightly from that of Kadell \cite{Kadell00}.}
\[
S^{(n)}(x;z):=\frac{\det_{1\leq i,j\leq n}\big(x_i^{z_j}\big)}{\Delta(x)},
\]
where $\Delta(x)$ is the Vandermonde product \eqref{Eq_Vandermonde}.
Clearly, for $\la\in\mathscr{P}_n$,
\begin{equation}\label{Eq_stoS}
s_{\la}(x_1,\dots,x_n)=
S^{(n)}(x_1,\dots,x_n;\la_1+n-1,\la_2+n-2,\dots,\la_n).
\end{equation}
By removing the singularities at $x_i=x_j$ we extend $S^{(n)}$ to a 
holomorphic function on $\Omega^n$. 
Since the complex Schur function is symmetric in $x$, we may employ the usual 
plethystic notation and we will sometimes write $S^{(n)}([x];z)$ instead of 
$S^{(n)}(x;z)$, where of course $x$ should always be an alphabet of 
cardinality $n$.

It follows immediately from the determinantal structure of
$S^{(n)}$ that the following expansion holds.

\begin{lemma}\label{Lem_StoSS}
For any $0\leq m\leq n$, we have
\begin{multline*}
S^{(n)}(x_1,\dots,x_n;z_1,\dots,z_n) \\
=\sum_{\substack{I\subseteq \{1,\dots,n\} \\[1pt] \abs{I}=m}}
\frac{S^{(m)}\big(\big[\sum_{i\in I} x_i\big];z_1,\dots,z_m\big)
S^{(n-m)}\big(\big[\sum_{i\notin I}x_i\big];z_{m+1},\dots,z_n\big)}
{\prod_{i\in I}\prod_{j\notin I}(x_i-x_j)}.
\end{multline*}
\end{lemma}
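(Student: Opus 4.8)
The plan is to prove the identity by a generalised Laplace expansion of the defining determinant $\det_{1\le i,j\le n}(x_i^{z_j})$ along its first $m$ columns, and then to match the resulting minors and Vandermonde factors against the right-hand side. Throughout I would work at points with the $x_i$ pairwise distinct, where the ratio of alternants defining $S^{(n)}$ is literally valid; since both sides are the holomorphic extensions obtained by removing the removable singularities at $x_i=x_j$, the identity then propagates to all of $\Omega^n$ by continuity.

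First I would expand $\det_{1\le i,j\le n}(x_i^{z_j})$ by Laplace along the column set $J=\{1,\dots,m\}$, writing it as a sum over $m$-subsets $I\subseteq\{1,\dots,n\}$ of the terms $(-1)^{\sigma(I)}\det(x_i^{z_j})_{i\in I,\,1\le j\le m}\cdot\det(x_i^{z_j})_{i\notin I,\,m+1\le j\le n}$, with Laplace sign $\sigma(I)=\binom{m+1}{2}+\sum_{i\in I} i$. Each of the two minors is again an alternant of the same shape, so by the definition of $S^{(\cdot)}$ it factors as a Vandermonde times a complex Schur function: the first minor equals $\Delta(x_I)\,S^{(m)}\big(\big[\sum_{i\in I} x_i\big];z_1,\dots,z_m\big)$ and the second equals $\Delta(x_{I^c})\,S^{(n-m)}\big(\big[\sum_{i\notin I} x_i\big];z_{m+1},\dots,z_n\big)$, where $x_I$ (resp.\ $x_{I^c}$) lists the $x_i$ with $i\in I$ (resp.\ $i\notin I$) in increasing order of the index.

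Next I would divide by $\Delta(x)$ and use the factorisation $\Delta(x)=\Delta(x_I)\,\Delta(x_{I^c})\prod_{i<j,\ \mathrm{mixed}}(x_i-x_j)$, where ``mixed'' means exactly one index lies in $I$. This collapses the Vandermonde factors, leaving $\big(\prod_{i<j,\ \mathrm{mixed}}(x_i-x_j)\big)^{-1}$ in the denominator of the $I$-th term. To reach the denominator $\prod_{i\in I}\prod_{j\notin I}(x_i-x_j)$ stated in the lemma, I rewrite each mixed factor $(x_a-x_b)$ with $a<b$: it already has the correct orientation when $a\in I$, but picks up a sign $-1$ when $a\notin I$ and $b\in I$. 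Hence $\prod_{i\in I}\prod_{j\notin I}(x_i-x_j)=(-1)^{\mathrm{inv}(I)}\prod_{i<j,\ \mathrm{mixed}}(x_i-x_j)$, where $\mathrm{inv}(I)=\#\{(a,b):a<b,\ a\notin I,\ b\in I\}$.

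The main point to nail down is that the two signs cancel, so that every term enters with coefficient $+1$. This is where I would be most careful: $\mathrm{inv}(I)$ is precisely the length of the shuffle permutation sending $(1,\dots,n)$ to $(i_1,\dots,i_m,i_{m+1},\dots,i_n)$, where $I=\{i_1<\dots<i_m\}$ and $I^c=\{i_{m+1}<\dots<i_n\}$, and a standard inversion count gives $\mathrm{inv}(I)=\sum_{i\in I} i-\binom{m+1}{2}$. Therefore $\sigma(I)+\mathrm{inv}(I)=2\sum_{i\in I} i\equiv 0\pmod 2$, so $(-1)^{\sigma(I)}(-1)^{\mathrm{inv}(I)}=1$ for every $I$. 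Assembling the pieces then yields exactly the claimed sum, completing the proof.
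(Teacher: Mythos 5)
Your proof is correct and is exactly the argument the paper has in mind: the paper offers no written proof, asserting only that the lemma ``follows immediately from the determinantal structure of $S^{(n)}$'', which is precisely the Laplace expansion along the first $m$ columns that you carry out, including the sign bookkeeping $\sigma(I)+\mathrm{inv}(I)=2\sum_{i\in I}i$ showing all terms enter with coefficient $+1$.
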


Like the ordinary Schur function, the complex Schur function satisfies a 
simple specialisation formula.

\begin{lemma}
We have
\begin{equation}\label{Eq_Sn-PS}
S^{(n)}(\underbrace{1,\dots,1}_{\text{$n$ times}};z_1,\dots,z_n)=
S^{(n)}\big([n];z_1,\dots,z_n\big)=
\prod_{1\leq i<j\leq n}\frac{z_i-z_j}{j-i}.
\end{equation}
\end{lemma}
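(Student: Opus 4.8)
The plan is to evaluate $S^{(n)}$ at the diagonal point $(1,\dots,1)$ as a \emph{confluent limit}. Since the excerpt extends $S^{(n)}$ to a holomorphic, and in particular continuous, function on $\Omega^n$, its value at $(1,\dots,1)$ equals $\lim_{x\to(1,\dots,1)}S^{(n)}(x;z)$ computed along any convenient path in $\Omega^n$ through distinct points. I would fix distinct complex numbers $c_1,\dots,c_n$ and approach along $x_i=\eup^{\epsilon c_i}$ as $\epsilon\to 0$; for small $\epsilon\neq 0$ the $x_i$ are distinct and lie in $\Omega$, so the ratio of alternants defining $S^{(n)}$ is meaningful before the removal of singularities.

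With this substitution the entries become $x_i^{z_j}=\eup^{\epsilon c_i z_j}$, and the key step is to extract the leading order in $\epsilon$ of the numerator $\det_{1\leq i,j\leq n}(\eup^{\epsilon c_i z_j})$. Expanding each exponential as $\eup^{\epsilon c_i z_j}=\sum_{k\geq 0}(\epsilon c_i)^k z_j^k/k!$ exhibits the matrix as a product $AB$ with $A_{ik}=(\epsilon c_i)^k/k!$ and $B_{kj}=z_j^k$ (for $k\geq 0$), so by the Cauchy--Binet formula the determinant is a sum over $n$-subsets $K=\{k_1<\dots<k_n\}\subseteq\mathbb{N}$ of the quantity $\epsilon^{|K|}(\prod_l k_l!)^{-1}\det_{i,l}(c_i^{k_l})\det_{l,j}(z_j^{k_l})$. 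The minimal exponent $|K|=\binom{n}{2}$ is attained \emph{only} at $K=\{0,1,\dots,n-1\}$, for which the two determinants are the Vandermonde products $\det(c_i^{l-1})=\prod_{i<j}(c_j-c_i)$ and $\det(z_j^{l-1})=\prod_{i<j}(z_j-z_i)$ and $\prod_l k_l!=\prod_{l=0}^{n-1}l!$; every other $K$ contributes a strictly higher power of $\epsilon$.

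It then remains to compare this against the denominator $\Delta(x)=\prod_{i<j}(\eup^{\epsilon c_i}-\eup^{\epsilon c_j})$, whose leading term is $\epsilon^{\binom{n}{2}}\prod_{i<j}(c_i-c_j)$. Dividing numerator by denominator, the powers of $\epsilon$ and the entire dependence on the auxiliary $c_i$ cancel (the ratio $\prod_{i<j}(c_j-c_i)/\prod_{i<j}(c_i-c_j)$ supplying the sign $(-1)^{\binom{n}{2}}$ that turns $\prod_{i<j}(z_j-z_i)$ into $\prod_{i<j}(z_i-z_j)$), leaving $\prod_{i<j}(z_i-z_j)\big/\prod_{l=0}^{n-1}l!$. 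Finally I would invoke the elementary superfactorial identity $\prod_{l=0}^{n-1}l!=\prod_{1\leq i<j\leq n}(j-i)$ (both sides equal $\prod_{d=1}^{n-1}d^{\,n-d}$) to recover the claimed product $\prod_{1\leq i<j\leq n}(z_i-z_j)/(j-i)$.

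I expect the main obstacle to be bookkeeping rather than conceptual: carefully justifying that the confluent limit is path-independent (this is precisely where the holomorphic extension on $\Omega^n$ is used), that no lower-order terms survive the division, and that the signs in the two Vandermonde determinants are tracked correctly. A purely algebraic alternative avoiding limits would be to induct on $n$ using the splitting Lemma~\ref{Lem_StoSS} with $m=1$, but each summand there is again of the indeterminate form $0/0$ at $x=(1,\dots,1)$, so some confluent argument seems unavoidable and the Cauchy--Binet route above is the most direct.
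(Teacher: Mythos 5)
Your argument is correct, but it is genuinely different from the paper's. The paper's proof is a two-line interpolation argument: it asserts that $S^{(n)}([n];z)$ is a polynomial in $z$, notes via \eqref{Eq_stoS} that $S^{(n)}([n];\la_1+n-1,\dots,\la_n)=s_{\la}[n]$ for every $\la\in\mathscr{P}_n$, and then reads off the claim from the principal specialisation formula \eqref{Eq_Schur-spec}, since two polynomials agreeing on the Zariski-dense set of strictly decreasing integer vectors must coincide. Your confluent-limit computation via Cauchy--Binet reaches the same answer by brute force: the extraction of the minimal power $\epsilon^{\binom{n}{2}}$ at $K=\{0,1,\dots,n-1\}$, the cancellation of the auxiliary Vandermonde in the $c_i$, the sign $(-1)^{\binom{n}{2}}$, and the superfactorial identity $\prod_{l=0}^{n-1}l!=\prod_{i<j}(j-i)$ are all handled correctly, and the Cauchy--Binet expansion converges absolutely so the interchange of limit and sum is unproblematic. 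What your route buys is self-containedness: it actually \emph{proves} that the diagonal value is a polynomial in $z$ of the stated form, whereas the paper takes polynomiality of $S^{(n)}([n];z)$ for granted (it does follow from the removal-of-singularities construction, but is not spelled out). What it costs is length and bookkeeping; the paper's argument also makes transparent \emph{why} the answer is the natural complex interpolation of $s_{\la}(1,\dots,1)$. Your closing remark that the splitting of Lemma~\ref{Lem_StoSS} does not help here is accurate, though you could have avoided the confluent analysis entirely by following the paper's interpolation route.
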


\begin{proof}
Since $S^{(n)}([n];z_1,\dots,z_n)$ is a polynomial in $z$, the
claim follows from the specialisation formula \eqref{Eq_Schur-spec}
and the fact that for arbitrary $\la\in\mathscr{P}_n$,
\[
S^{(n)}\big([n];\la_1+n-1,\la_2+n-2,\dots,\la_n\big)=s_{\la}[n]. \qedhere
\]
\end{proof}

For $0<\theta<\pi$ and $r>0$, let $C_{\theta,r}$ denote the contour 
in $\mathbb{C}$ going counterclockwise around the border of the angular
sector $\abs{x}\leq r$, 
$\abs{\textup{Im}(\log(x))}\leq\theta$
as shown below.

\medskip

\tikzset{arrowright/.pic={
\filldraw (0,-0.05) -- (0.15,0) -- (0,0.05) -- cycle;}}
\tikzset{arrowup/.pic={
\filldraw (-0.05,0) -- (0,0.15) -- (0.05,0) -- cycle;}}

\begin{center}
\begin{tikzpicture}[scale=0.8]
\draw[thin] (-2,0)--(3,0);
\draw[thin] (0,-2.2)--(0,2.2);
\pic at (2.9,0) {arrowright};
\pic at (0,2.2) {arrowup};
\pic[rotate=30,blue] at (1.732,1) {arrowup};
\draw[blue,thick] (0,0) -- (-1,-1.732) arc (-120:120:2) -- cycle;
\draw[thin] (0.3,0) arc (0:120:0.3);
\draw (0.28,0.46) node {$\theta$};
\draw (2.15,-0.16) node {$r$};
\end{tikzpicture}
\end{center}

\noindent
Then the complex Schur function satisfies the following
integral identity.

\begin{theorem}\label{Thm_Schur}
For $\ell$ a nonnegative integer, let $y=(y_1,\dots,y_{\ell})
\in\Omega^{\ell}$, and let $0<\theta<\pi$, $r>0$ such that 
$y_i\in\interior(C_{\theta,r})$ for all $1\leq i\leq \ell$.
For $k$ a nonnegative integer, let $z=(z_1,\dots,z_k)\in\Omega^k$
such that $\Re(z_i)>-1$ for all $1\leq i\leq k$.
Then, for $\la\in\mathscr{P}$,
\begin{align}\label{Eq_S-integral}
\frac{1}{k!(2\pi\iup)^k} &
\Int_{C_{\theta,r}^k} S^{(k)}(x;z)s_{\la}[y-x] 
\prod_{1\leq i<j\leq k} (x_i-x_j)^2
\prod_{i=1}^k \prod_{j=1}^{\ell} (x_i-y_j)^{-1} \, \dup x_1\cdots \dup x_k 
\\[2mm]
&=\begin{cases}
(-1)^{\binom{k}{2}}
S^{(\ell)}\big(y;(z,\la_1+\ell-k-1,\dots,\la_{\ell-k-1}+1,\la_{\ell-k})\big)
& \text{if $l(\la)\leq\ell-k$}, \\[3mm]
0 & \text{otherwise},
\end{cases} \notag
\end{align}
where $C_{\theta,r}^k$ denotes the $k$-fold product
$C_{\theta,r}\times\dots\times C_{\theta,r}$.
\end{theorem}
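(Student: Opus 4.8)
The plan is to evaluate the left-hand side of \eqref{Eq_S-integral} by the residue theorem, one integration variable at a time, and then to recognise the resulting finite sum as an instance of the expansion in Lemma~\ref{Lem_StoSS}.

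First I would record that the integrand is, as a function of each $x_i$, meromorphic on the sector bounded by $C_{\theta,r}$ with its only singularities there being simple poles at $x_i=y_1,\dots,y_\ell$. Indeed, writing $S^{(k)}(x;z)\prod_{i<j}(x_i-x_j)^2=\det_{1\le i,j\le k}\big(x_i^{z_j}\big)\,\Delta(x)$, the alternant is holomorphic on $\Omega^k$ and $s_{\la}[y-x]$ is a polynomial in $x_1,\dots,x_k$, so the only poles come from $\prod_{i,j}(x_i-y_j)^{-1}$. The one delicate point is the vertex $x_i=0$ of the contour, where $\det(x_i^{z_j})$ has a branch point. I would handle it by indenting $C_{\theta,r}$ with a small circular arc of radius $\varepsilon$ around $0$ and using the hypothesis $\Re(z_i)>-1$ to bound the arc contribution by $O\big(\varepsilon^{\hspace{1pt}1+\min_j\Re(z_j)}\big)\to 0$. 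This both shows the integral is well defined and legitimises Cauchy's theorem on the indented contour.

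Next I would integrate the variables successively. Since each $y_j\in\interior(C_{\theta,r})$, integrating over $x_i$ replaces it by a residue at some $x_i=y_j$. The crucial mechanism is that $\Delta(x)^2$ carries the factor $\prod_{i'\ne i}(x_i-x_{i'})^2$, so if a variable is sent to a pole already occupied by an earlier one the resulting double zero kills the simple pole and the residue vanishes. Hence only injective assignments survive, and after dividing by $k!$ the iterated residues collapse to a sum over $k$-element subsets $J=\{j_1<\dots<j_k\}$ of $\{1,\dots,\ell\}$. In each term the numerator $\Delta(x)^2$ cancels the $J$-internal part $\prod_{p\ne q\in J}(y_p-y_q)=(-1)^{\binom{k}{2}}\Delta(y_J)^2$ of the residue denominators, while the plethystic substitution $y-x\mapsto y-\sum_{j\in J}y_j=\sum_{j\notin J}y_j$ turns $s_{\la}[y-x]$ into the ordinary Schur function $s_{\la}(y_{J^c})$ in the $\ell-k$ variables indexed by $J^c$. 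I expect the bookkeeping of signs and of the split of the residue denominators into their $J$-internal and $J$-external parts to be the main technical obstacle; carried out correctly it yields
\[
\text{LHS}=(-1)^{\binom{k}{2}}\sum_{J}\frac{S^{(k)}(y_J;z)\,s_{\la}(y_{J^c})}
{\prod_{a=1}^k\prod_{j\notin J}(y_{j_a}-y_j)}.
\]

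Finally, I would identify this sum. When $l(\la)\le\ell-k$, formula \eqref{Eq_stoS} lets me write $s_{\la}(y_{J^c})=S^{(\ell-k)}\big(y_{J^c};\la_1+\ell-k-1,\dots,\la_{\ell-k-1}+1,\la_{\ell-k}\big)$, and the displayed sum is then precisely Lemma~\ref{Lem_StoSS} applied to $S^{(\ell)}(y;\tilde z)$ with $n=\ell$, $m=k$ and $\tilde z=(z_1,\dots,z_k,\la_1+\ell-k-1,\dots,\la_{\ell-k})$, giving the claimed right-hand side. When $l(\la)>\ell-k$ each factor $s_{\la}(y_{J^c})$ is a Schur function of a partition with more parts than variables and hence vanishes, so the whole expression is $0$, matching the second case of \eqref{Eq_S-integral}.
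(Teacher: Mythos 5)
Your proposal is correct and follows essentially the same route as the paper's proof: iterated residue evaluation (with the vanishing of the contribution near the vertex $x=0$ justified by $\Re(z_i)>-1$) reduces the integral to the sum $(-1)^{\binom{k}{2}}\sum_{J}S^{(k)}(y_J;z)s_{\la}(y_{J^c})/\prod_{a}\prod_{j\notin J}(y_{j_a}-y_j)$, which is then identified via \eqref{Eq_stoS} and Lemma~\ref{Lem_StoSS}. The only point worth adding is that your residue computation presupposes distinct $y_i$, so one should note (as the paper does) that continuity of the integral in $y$ reduces the general case to this dense subset.
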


We repeat our remark from the introduction
that the above integral should be understood in the
sense of indefinite integrals since the integrand is not 
defined at $(0,\dots,0)\in\mathbb{C}^k$.

\begin{proof}
The integral in \eqref{Eq_S-integral} is continuous in $y$, 
so that it suffices to prove the claim on the dense subset of 
$\interior(C_{\theta,r}^k)$ for which the $y_i$ are all distinct.
In particular, the first three factors of the integrand are holomorphic
on $\Omega^k$, so that the integrand has only simple poles along the
divisors $x_i=y_j$.

To compute the integral we proceed recursively. 
For $m$ an integer such that $0\leq m\leq k$, define 
\begin{align*}
\mathscr{I}_m^{(k,\ell)}(y)&:=
\frac{(-1)^{\binom{m}{2}}m!}{k!}
\sum_{\substack{I\subseteq \{1,\dots,\ell\} \\[1pt] \abs{I}=m}} 
\Bigg(\; \prod_{i\in I}\prod_{j\notin I}(y_i-y_j)^{-1} \\
&\quad\times 
\frac{1}{(2\pi\iup)^{k-m}}
\Int_{C_{\theta,r}^{k-m}}
S^{(k)}\bigg(\bigg[\,\sum_{i\in I} y_i+\sum_{i=m+1}^k x_i\bigg];z\bigg) 
s_{\la}\bigg(\bigg[\,\sum_{i\notin I} y_i-\sum_{i=m+1}^k x_i\bigg]\bigg) \\
&\qquad\qquad\qquad\qquad\qquad
\times \prod_{m+1\leq i<j\leq k} (x_i-x_j)^2 
\prod_{i=m+1}^k\frac{\prod_{j\in I} (x_i-y_j)}
{\prod_{j\notin I} (x_i-y_j)} \; \dup x_{m+1}\cdots \dup x_k\Bigg),
\end{align*}
where we have suppressed the dependence on $z$ and $\la$.
Observe that $\mathscr{I}_0^{(k,\ell)}(y)$ coincides with the
left-hand side of \eqref{Eq_S-integral} and that 
$\mathscr{I}_m^{(k,\ell)}(y)=0$ if $m>\ell$.

Now consider $\mathscr{I}_m^{(k,\ell)}(y)$ for some fixed $0\leq m\leq k-1$.
For a given term in the summand indexed by $I$, we compute the integral 
over $x_{m+1}$ by shrinking the radius $r$ of the corresponding contour 
$C_{\theta,r}$ to $r_0$. 
Here $r_0$ is sufficiently small so that
$y_j\in\exterior(C_{\theta,r_0})$ for all $j\notin I$. 
As a result, the integral over $x_{m+1}$ is expressed as
a sum over the residues in $x_{m+1}$ at the $\ell-m$ (distinct) poles 
$x_{m+1}=y_j$ for $j\notin I$, plus a remainder $x_{m+1}$-integral over 
$C_{\theta,r_0}$.
Since the integrand grows slower than $1/\abs{x_{m+1}}$ as $x_{m+1}$ 
approaches $0$ in the angular sector, this remainder converges to $0$ as 
$r_0\to 0$, and hence (since it is independent of $r_0$) is identically
$0$.
Thus
\begin{align*}
\mathscr{I}_m^{(k,\ell)}(y)&=\frac{(-1)^{\binom{m+1}{2}}m!}{k!}
\sum_{\substack{I\subseteq \{1,\dots,\ell\} \\[1pt] \abs{I}=m}} 
\sum_{r\notin I}
\Bigg(\; \prod_{i\in I\cup\{r\}}\prod_{j\notin I\cup\{r\}}
(y_i-y_j)^{-1} \\
&\quad\times 
\frac{1}{(2\pi\iup)^{k-m-1}}
\Int_{C_{\theta,r}^{k-m-1}}
S^{(k)}\bigg(\bigg[\,\sum_{i\in I\cup\{r\}} y_i+
\sum_{i=m+2}^k x_i\bigg];z\bigg) 
s_{\la}\bigg(\bigg[\,\sum_{i\notin I\cup\{r\}} y_i-
\sum_{i=m+2}^k x_i\bigg]\bigg) \\
&\quad\qquad\quad\qquad\qquad\qquad\times 
\prod_{m+2\leq i<j\leq k} (x_i-x_j)^2 
\prod_{i=m+2}^k\frac{\prod_{j\in I\cup\{r\}} (x_i-y_j)}
{\prod_{j\notin I\setminus\{r\}} (x_i-y_j)} \; 
\dup x_{m+2}\cdots\dup x_k\Bigg),
\end{align*}
which of course vanishes for $m\geq\ell$. 
Vanishing or not, defining $J:=I\cup\{r\}$ and noting that there are
exactly $m+1$ ways an $(m+1)$-set $J$ can be decomposed into an $m$-set and
a singleton, we obtain
$\mathscr{I}_m^{(k,\ell)}(y)=\mathscr{I}_{m+1}^{(k,\ell)}(y)$.
By iteration this yields
$\mathscr{I}_0^{(k,\ell)}(y)=\mathscr{I}_k^{(k,\ell)}(y)$, so that
\[
\mathscr{I}_0^{(k,\ell)}(y)=(-1)^{\binom{k}{2}}
\sum_{\substack{I\subseteq \{1,\dots,\ell\} \\[1pt] \abs{I}=k}} 
\frac{S^{(k)}\big(\big[\sum_{i\in I} y_i\big];z\big) 
s_{\la}\big(\big[\sum_{i\notin I} y_i\big]\big)}
{\prod_{i\in I}\prod_{j\notin I}(y_i-y_j)}.
\]
If $l(\la)>\ell-k$ this vanishes and if
$l(\la)\leq \ell-k$ it follows from \eqref{Eq_stoS} and
Lemma~\ref{Lem_StoSS} that
\[
\mathscr{I}_0^{(k,\ell)}(y)=(-1)^{\binom{k}{2}}\,
S^{(\ell)}\big(y;(z,\la_1+\ell-k-1,\dots,\la_{\ell-k-1}+1,\la_{\ell-k})\big).
\qedhere
\]
\end{proof}

To prove Theorem~\ref{Thm_nplusone} we not only need
Theorem~\ref{Thm_Schur} but also a generalisation of the
$y=(1,\dots,1)\in\mathbb{C}^{\ell}$ (or, plethystically,
$y=\ell$) case of this theorem.
For this we require a variant of the classical beta integral.

\begin{lemma}\label{Lem_beta-integral}
Let $\alpha,\beta\in\mathbb{C}$ such that $\Re(\alpha)>0$.
Then, for $r>1$ and $0<\theta<\pi$,
\[
\frac{1}{2\pi\iup}\Int_{C_{\theta,r}} x^{\alpha-1}(x-1)^{\beta-1}\,\dup x =
\frac{\Gamma(\alpha)}{\Gamma(1-\beta)\Gamma(\alpha+\beta)}.
\]
\end{lemma}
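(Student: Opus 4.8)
The plan is to evaluate the integral by collapsing the contour $C_{\theta,r}$ onto the branch cut of the factor $(x-1)^{\beta-1}$ and reducing to the classical beta integral. With the principal branch fixed as in the statement, $x^{\alpha-1}$ is holomorphic on the closed sector bounded by $C_{\theta,r}$ (which avoids the negative real axis since $\theta<\pi$), while $(x-1)^{\beta-1}$ is holomorphic there except along the real segment $[0,1]$, where $\log(x-1)$ has its cut; note that $[0,1]$ lies in the sector precisely because $r>1$. Thus the integrand is holomorphic on the sector with the slit $[0,1]$ removed. I would first prove the identity under the additional assumption $\Re(\beta)>0$, where the collapse is clean, and then remove this hypothesis by analytic continuation in $\beta$.

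First I would deform $C_{\theta,r}$, by a homotopy fixing the vertex $x=0$ and staying off the slit, to the contour $\Gamma$ that runs along the lower side of $[0,1]$ from $0$ to $1$, around a small circle of radius $\delta$ about $x=1$, and back along the upper side of $[0,1]$ from $1$ to $0$; concretely, the lower ray, the outer arc, and the upper ray of $C_{\theta,r}$ deform respectively to these three pieces, and since the homotopy takes place within the domain of holomorphy one has $\Int_{C_{\theta,r}}=\Int_{\Gamma}$ by Cauchy's theorem (which also shows the answer is independent of $r$ and $\theta$). On the lower and upper sides of the cut, writing $x=t\mp\iup 0^+$ with $t\in(0,1)$, the boundary values are $(x-1)^{\beta-1}=(1-t)^{\beta-1}\eup^{\mp\iup\pi(\beta-1)}$ and $x^{\alpha-1}=t^{\alpha-1}$. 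The contribution of the small circle about $x=1$ is $O(\delta^{\Re(\beta)})$, hence vanishes as $\delta\to0$ because $\Re(\beta)>0$, while convergence at the vertex $x=0$ is exactly what the hypothesis $\Re(\alpha)>0$ guarantees.

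Combining the two sides of the cut then gives
\[
\frac{1}{2\pi\iup}\Int_{\Gamma} x^{\alpha-1}(x-1)^{\beta-1}\,\dup x
=\frac{\eup^{-\iup\pi(\beta-1)}-\eup^{\iup\pi(\beta-1)}}{2\pi\iup}
\Int_0^1 t^{\alpha-1}(1-t)^{\beta-1}\,\dup t
=\frac{\sin(\pi\beta)}{\pi}\,\frac{\Gamma(\alpha)\Gamma(\beta)}{\Gamma(\alpha+\beta)},
\]
where the last step uses the beta integral. Rewriting $\sin(\pi\beta)\Gamma(\beta)/\pi=1/\Gamma(1-\beta)$ via the reflection formula $\Gamma(\beta)\Gamma(1-\beta)=\pi/\sin(\pi\beta)$ yields the claimed value $\Gamma(\alpha)/\big(\Gamma(1-\beta)\Gamma(\alpha+\beta)\big)$. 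Finally, for fixed $\alpha$ with $\Re(\alpha)>0$, both sides are entire functions of $\beta$ — the left since the integrand is entire in $\beta$ and the contour stays a positive distance from $x=1$, the right since $1/\Gamma(1-\beta)$ and $1/\Gamma(\alpha+\beta)$ are entire — so the identity, once established for $\Re(\beta)>0$, persists for all $\beta\in\mathbb{C}$. The only genuinely delicate points are the rigorous justification of the collapse onto the slit, in particular the vanishing of the circle about $x=1$ and of the contributions near the vertex, and the careful orientation bookkeeping that fixes the sign in the factor $2\iup\sin(\pi\beta)$.
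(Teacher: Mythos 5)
Your proposal is correct and follows essentially the same route as the paper's proof: assume $\Re(\beta)>0$ and remove this by holomorphy in $\beta$, collapse the contour onto the branch cut $[0,1]$ of $(x-1)^{\beta-1}$ (the circle about $x=1$ vanishing because $\Re(\beta)>0$), pick up the phases $\eup^{\pm\iup\pi(\beta-1)}$ to produce the factor $\sin(\pi\beta)/\pi$ times the Euler beta integral, and finish with the reflection formula. Your sign and orientation bookkeeping checks out, so there is nothing to add.
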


\begin{proof}
By holomorphicity, we may assume without loss of generality that 
$\Re(\beta)>0$ in the following.

Recall that $r>1$, so that the branch point $x=1$ is contained in
the interior of $C_{\theta,r}$. 
Deforming this contour to $\overline{C}_{\theta,r}$ given by
\begin{center}
\begin{tikzpicture}[scale=0.8]
\draw[thick] (-2,0)--(1.2,0);
\draw[thin] (-2,0)--(3,0);
\draw[thin] (0,-2.2)--(0,2.2);
\pic at (2.9,0) {arrowright};
\pic at (0,2.2) {arrowup};
\pic[rotate=30,blue] at (1.732,1) {arrowup};
\draw[blue,thick] (0,-0.09) -- (-1,-1.732) arc (-120:120:2) -- (0,0.09)
-- (1.2,0.09) arc (90:-90:0.09) -- cycle;
\filldraw (1.18,0) circle (0.04);
\draw (1.2,-0.35) node {$1$};
\end{tikzpicture}
\end{center}
we get, by slight abuse of notation,
\begin{align*}
\Int_{C_{\theta,r}} x^{\alpha-1}(x-1)^{\beta-1}\,\dup x 
&= 
\Bigg(\:\Int_{\overline{C}_{\theta,r}}-\lim_{r_0\to 0} \Int_{C_{r_0}}\Bigg) 
x^{\alpha-1}(x-1)^{\beta-1}\,\dup x \\[1.5mm]
&\quad -\bigg(\eup^{\pi\iup (\beta-1)} \int_0^1{+}\,
\eup^{-\pi\iup (\beta-1)}\int_0^1 \bigg) 
x^{\alpha-1}\abs{x-1}^{\beta-1}\dup x,
\end{align*}
where $C_{r_0}$ is the positively oriented semicircle $1+r_0\eup^{\iup\phi}$,
$\phi\in(-\pi/2,\pi/2)$ of radius $r_0$ around $1$.
The integral over the deformed contour $\overline{C}_{\theta,r}$ trivially
vanishes. 
Since $\Re(\beta)>0$, so does the integral over $C_{r_0}$ in the $r_0\to 0$ 
limit. 
By the standard Euler beta integral \cite[p.~5]{AAR99}
\[
\int_0^1 x^{\alpha-1}(1-x)^{\beta-1}\,\dup x=
\frac{\Gamma(\alpha)\Gamma(\beta)}{\Gamma(\alpha+\beta)}
\]
for $\Re(\alpha),\Re(\beta)>0$, we thus find
\[
\frac{1}{2\pi\iup}
\Int_{C_{\theta,r}} x^{\alpha-1}(x-1)^{\beta-1}\,\dup x 
=\frac{\sin(\pi\beta)}{\pi}\:
\frac{\Gamma(\alpha)\Gamma(\beta)}{\Gamma(\alpha+\beta)}.
\]
The claim now follows by the reflection formula for the gamma function.
\end{proof}

\begin{theorem}\label{Thm_beta-Schur}
Let $\beta\in\mathbb{C}$ and $z=(z_1,\dots,z_k)\in\mathbb{C}^k$ such that 
$\Re(z_i)>-1$ for all $i$.
Then, for $r>1$, $0<\theta<\pi$ and $\la\in\mathscr{P}$,
\begin{align}\label{Eq_beta-Schur}
\frac{1}{(2\pi\iup)^k}
\Int_{C_{\theta,r}^k}& S^{(k)}(x;z)s_{\la}[1-\beta-x] 
\prod_{1\leq i<j\leq k} (x_i-x_j)^2
\prod_{i=1}^k (x_i-1)^{\beta-1} \, \dup x_1\cdots \dup x_k \\
&=(-1)^{\binom{k}{2}} S^{(k)}([k];z) s_{\la}[1-\beta-k] \notag \\[2mm]
& \quad \times
\prod_{i=1}^k \bigg(\frac{i!\,\Gamma(z_i+1)}
{\Gamma(2-i-\beta)\Gamma(z_i+\beta+k)}
\prod_{j\geq 1} \frac{z_i-\la_j+\beta+k+j-1}{z_i+\beta+k+j-1}\bigg). \notag
\end{align}
\end{theorem}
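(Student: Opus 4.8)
The plan is to collapse the $k$-dimensional integral onto a single determinant of one-variable beta integrals, and then to evaluate the resulting finite sum of determinants. Since $S^{(k)}(x;z)=\det_{1\leq i,j\leq k}(x_i^{z_j})/\Delta(x)$, the combination $S^{(k)}(x;z)\prod_{1\leq i<j\leq k}(x_i-x_j)^2$ equals $\det_{1\leq i,j\leq k}(x_i^{z_j})\,\Delta(x)$. For the second Schur factor I would use the coproduct for Schur functions together with the duality \eqref{Eq_Schur-duality} to write $s_\la[1-\beta-x]=\sum_{\mu}(-1)^{\abs\mu}s_{\la/\mu}[1-\beta]\,s_{\mu'}(x)$, a finite sum over $\mu\subseteq\la$. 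Because $s_{\mu'}(x_1,\dots,x_k)$ vanishes unless $l(\mu')\leq k$, and in that case $s_{\mu'}(x)\,\Delta(x)=\det_{1\leq i,j\leq k}(x_i^{\mu'_j+k-j})$, the integrand becomes a finite sum of products of two $k\times k$ alternants against the separable weight $\prod_i(x_i-1)^{\beta-1}$.

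Now Andr\'eief's (Heine's) identity applies term by term: the integral of a product of two alternants against a product weight equals $k!$ times the determinant of the pairwise one-variable integrals, and each such integral is evaluated by Lemma~\ref{Lem_beta-integral}. This yields
\[
\text{LHS}=\frac{k!}{\Gamma(1-\beta)^k}\sum_{\mu\subseteq\la}(-1)^{\abs\mu}s_{\la/\mu}[1-\beta]\,
\det_{1\leq j,j'\leq k}\!\left(\frac{\Gamma(z_j+\mu'_{j'}+k-j'+1)}{\Gamma(z_j+\mu'_{j'}+k-j'+\beta+1)}\right),
\]
where the use of Lemma~\ref{Lem_beta-integral} is legitimate since $\Re(z_j)>-1$ forces $\Re(z_j+\mu'_{j'}+k-j'+1)>0$ in every entry and $r>1$; no interchange issue arises because the $\mu$-sum is finite.

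The main obstacle is the evaluation of this sum of Gamma-ratio determinants. When $\la=0$ only $\mu=0$ survives and one is left with a Cauchy-type determinant $\det\big(1/(z_j+k-j'+1)_\beta\big)$; its evaluation produces the Vandermonde $\prod_{i<j}(z_i-z_j)$, which (via \eqref{Eq_Sn-PS}) accounts for $(-1)^{\binom{k}{2}}S^{(k)}([k];z)$, while the identity $\prod_{i=1}^k i!=k!\prod_{i=1}^k(k-i)!$ absorbs the prefactor $k!$ and converts the remaining Gamma quotients into $\prod_i i!\,\Gamma(z_i+1)/(\Gamma(2-i-\beta)\Gamma(z_i+\beta+k))$. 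For general $\la$ I would resum over $\mu$ by a Cauchy--Binet / Lindstr\"om--Gessel--Viennot argument: expressing $s_{\la/\mu}[1-\beta]$ in beta-set (Jacobi--Trudi) form turns the product of the skew Schur and the determinant into a composition of two determinants summed over the intermediate $k$-subset carried by $\mu'$, which telescopes into a single $k\times k$ determinant of Gauss-type sums. Evaluating those sums should produce exactly the two remaining factors $s_\la[1-\beta-k]$ and $\prod_{i,j}(z_i-\la_j+\beta+k+j-1)/(z_i+\beta+k+j-1)$; carrying this through cleanly, and tracking all signs, is where the real work lies.

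As a guide to the final answer and a check on the determinant evaluation, I would first confirm the identity at the integer points $1-\beta=\ell\in\mathbb{Z}$ with $\ell\geq k+l(\la)$. There $\prod_i(x_i-1)^{\beta-1}=\prod_i(x_i-1)^{-\ell}$, so the left-hand side equals $k!$ times the $y=(1,\dots,1)\in\mathbb{C}^\ell$ specialisation of the right-hand side of Theorem~\ref{Thm_Schur}; evaluating that specialisation by \eqref{Eq_Sn-PS} gives $(-1)^{\binom{k}{2}}\prod_{1\leq i<j\leq\ell}(w_i-w_j)/(j-i)$ with $w=(z_1,\dots,z_k,\la_1+\ell-k-1,\dots,\la_{\ell-k})$. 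Splitting this product according to whether both indices lie in the $z$-block, both in the $\la$-block, or one in each, reproduces $S^{(k)}([k];z)$, $s_\la[1-\beta-k]$ and the cross factors of the right-hand side respectively, which both fixes the precise form of the Gamma-ratio determinant evaluation and confirms the claim at infinitely many $\beta$.
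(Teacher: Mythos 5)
Your computations are individually correct and they are precisely the ingredients of the paper's own proof: the reduction of the integrand to a product of two alternants, the term-by-term application of Andr\'eief's identity together with Lemma~\ref{Lem_beta-integral}, and the verification at the integer points $\beta=1-\ell$ via Theorem~\ref{Thm_Schur} and \eqref{Eq_Sn-PS}. The gap lies in how you assemble them. The route you designate as the main proof --- evaluating the finite sum of Gamma-ratio determinants in closed form by a Cauchy--Binet or Lindstr\"om--Gessel--Viennot resummation --- is exactly the step you defer (``where the real work lies''), and it is genuinely hard: the paper arrives at the same sum (after first dualising $\la\mapsto\la'$), handles the $\la=0$ case by a determinant evaluation of Krattenthaler, and then explicitly remarks that for general $\la$ it does not seem easy to show by direct means that the resulting expression equals $1$. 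So as written the proposal is not a proof.

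The remedy is already in your hands, but you have demoted it to a ``check''. Factor $\Gamma(z_j+1)/\Gamma(z_j+\beta+1)$ out of row $j$ of your determinant: its entries become $(z_j+1)_{m}/(z_j+\beta+1)_{m}$ with $m=\mu'_{j'}+k-j'\in\mathbb{N}$, which are rational in $\beta$, while $s_{\la/\mu}[1-\beta]$ is polynomial in $\beta$ and the $\mu$-sum is finite. Hence the left-hand side multiplied by $\Gamma(1-\beta)^k\prod_i\Gamma(z_i+\beta+1)/\Gamma(z_i+1)$ is a rational function of $\beta$; the right-hand side, normalised identically, is also rational in $\beta$, since $\Gamma(1-\beta)/\Gamma(2-i-\beta)=(2-i-\beta)_{i-1}$, $\Gamma(z_i+\beta+1)/\Gamma(z_i+\beta+k)=1/(z_i+\beta+1)_{k-1}$, $s_{\la}[1-\beta-k]$ is a polynomial, and the product over $j$ is effectively finite. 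The ratio of the two sides is therefore a rational function of $\beta$ that your integer-point computation shows equals $1$ at the infinitely many points $\beta=1-\ell$ with $\ell\geq k+l(\la)$, so it is identically $1$. This is the paper's actual argument: the integer case is the substance of the proof, and the Andr\'eief expansion serves only to establish rationality, never to be evaluated.
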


\begin{proof}
First assume that $\beta=1-\ell$ with $\ell$ an integer.
Then \eqref{Eq_beta-Schur} simplifies to
\begin{align}\label{Eq_integer-case}
\frac{1}{(2\pi\iup)^k}
\Int_{C_{\theta,r}^k}& S^{(k)}(x;z)s_{\la}[\ell-x] 
\prod_{1\leq i<j\leq k} (x_i-x_j)^2
\prod_{i=1}^k (x_i-1)^{-\ell} \, \dup x_1\cdots \dup x_k \\
&=(-1)^{\binom{k}{2}} S^{(k)}([k];z) s_{\la}[\ell-k] \notag \\[2mm]
& \quad \times \prod_{i=1}^k \bigg( \frac{i!\,\Gamma(z_i+1)}
{\Gamma(\ell-i+1)\Gamma(z_i-\ell+k+1)}
\prod_{j\geq 1} \frac{z_i-\la_j-\ell+k+j}{z_i-\ell+k+j}\bigg) \notag \\
&=k! (-1)^{\binom{k}{2}} S^{(k)}([k];z) s_{\la}[\ell-k] 
\prod_{i=1}^k \prod_{j=1}^{\ell-k} \frac{z_i-\la_j-\ell+k+j}{k+j-i}.  \notag 
\end{align}
By the principal specialisation formulas \eqref{Eq_Schur-spec} and
\eqref{Eq_Sn-PS} it follows that
\begin{multline}\label{Eq_double-PS}
S^{(k)}([k];z)s_{\la}[\ell-k]
\prod_{i=1}^k \prod_{j=1}^{\ell-k} \frac{z_i-\la_j-\ell+k+j}{k+j-i} \\
=\begin{cases}
S^{(\ell)}\big([\ell];
(z,\la_1+\ell-k-1,\dots,\la_{\ell-k-1}+1,\la_{\ell-k})\big)
& \text{if $l(\la)\leq\ell-k$}, \\[3mm]
0 & \text{otherwise},
\end{cases}
\end{multline}
so that \eqref{Eq_integer-case} is precisely \eqref{Eq_S-integral} 
for $y=1+\dots+1=\ell$.

One possible approach to showing the general complex $\beta$ case would 
be to appeal to Carlson's theorem.
To avoid having to show the required bounded growth at infinity of the
left-hand side of \eqref{Eq_beta-Schur}, we will more simply show that
after appropriate normalisation both sides of \eqref{Eq_beta-Schur}
become rational functions.
As a first step towards proving this we make the substitution 
$\la\mapsto\la'$ using \eqref{Eq_Schur-duality} and
\[
\prod_{j\geq 1}\frac{w-\la'_j+j-1}{w+j-1}=
\prod_{j\geq 1}\frac{w-j}{w+\la_j-j}
\]
to obtain the equivalent identity
\begin{align}\label{Eq_dualize}
\frac{1}{(2\pi\iup)^k}
\Int_{C_{\theta,r}^k} & S^{(k)}(x;z)s_{\la}[x+\beta-1] 
\prod_{1\leq i<j\leq k} (x_i-x_j)^2
\prod_{i=1}^k (x_i-1)^{\beta-1} \, \dup x_1\cdots \dup x_k \\
&=(-1)^{\binom{k}{2}} S^{(k)}([k];z) s_{\la}[\beta+k-1] \notag \\[2mm]
&\quad\times 
\prod_{i=1}^k \frac{i!\,\Gamma(z_i+1)}{\Gamma(2-i-\beta)\Gamma(z_i+\beta+k)}
\prod_{i=1}^k \prod_{j\geq 1} \frac{z_i+\beta+k-j}{z_i+\la_j+\beta+k-j}. \notag
\end{align}
By \cite[p. 72]{Macdonald95}
\[
s_{\la}[X+Y]=\sum_{\mu}s_{\la/\mu}[X]s_{\mu}[Y]
\]
and the determinantal nature of both types of Schur functions in the 
integrand, the left hand side of \eqref{Eq_dualize} may also be written as
\begin{align*}
k! \sum_{\mu} & s_{\la/\mu}[\beta-1]\,
\frac{1}{(2\pi\iup)^k}
\Int_{C_{\theta,r}^k} \det_{1\leq i,j\leq k}
\big( x_i^{\mu_j+k-j}\big)
\prod_{i=1}^k x_i^{z_i} (x_i-1)^{\beta-1} \, \dup x_1\cdots \dup x_k \\
&=k! \sum_{\mu} s_{\la/\mu}[\beta-1]
\det_{1\leq i,j\leq k} \Bigg(\: \frac{1}{2\pi\iup}
\Int_{C_{\theta,r}} x^{z_i+\mu_j+k-j}(x-1)^{\beta-1}\, \dup x \Bigg) \\
&=k! \sum_{\mu} s_{\la/\mu}[\beta-1]
\det_{1\leq i,j\leq k} \bigg( \frac{\Gamma(z_i+\mu_j+k-j+1)}
{\Gamma(1-\beta)\Gamma(z_i+\mu_j+\beta+k-j+1)} \bigg) \\
&=k! \sum_{\mu} s_{\la/\mu}[\beta-1]
\prod_{i=1}^k \frac{\Gamma(z_i+1)}{\Gamma(1-\beta)\Gamma(z_i+\beta+1)} \:
\det_{1\leq i,j\leq k} \bigg( 
\frac{(z_i+1)_{\mu_j+k-j}}{(z_i+\beta+1)_{\mu_j+k-j}} \bigg),
\end{align*}
where the penultimate line follows from Lemma~\ref{Lem_beta-integral}.
This shows that the left-hand side of \eqref{Eq_dualize} divided by the
right-hand side is given by the following rational function:
\begin{multline*}
\frac{(-1)^{\binom{k}{2}}}{S^{(k)}([k];z)} 
\sum_{\mu} \frac{s_{\la/\mu}[\beta-1]}{s_{\la}[\beta+k-1]}
\prod_{i=1}^k\bigg(\frac{(z_i+\beta+1)_{k-1}}{(i-1)!(2-i-\beta)_{i-1}} 
\prod_{j\geq 1} \frac{z_i+\la_j+\beta+k-j}{z_i+\beta+k-j}\bigg) \\
\times \det_{1\leq i,j\leq k}
\bigg(\frac{(z_i+1)_{\mu_j+k-j}}{(z_i+\beta+1)_{\mu_j+k-j}}\bigg),
\end{multline*}
completing the proof.
We remark that for $\la=0$ the above collapses into
\[
\frac{(-1)^{\binom{k}{2}}}{S^{(k)}([k];z)} 
\prod_{i=1}^k\frac{(z_i+\beta+1)_{k-1}}{(i-1)!(2-i-\beta)_{i-1}}\:
\det_{1\leq i,j\leq k} 
\bigg(\frac{(z_i+1)_{k-j}}{(z_i+\beta+1)_{k-j}}\bigg).
\]
In this instance the determinant follows as a special case of a well-known
determinant evaluation due to Krattenthaler, see e.g., 
\cite[Lemma~3]{Krattenthaler99} with 
$(n,X_i,A_i,B_i)\mapsto (k,z_i,k-i+1,\beta+k-i+1)$.
For general $\la$ it does not seem so easy to show by direct means
that the above rational function is in fact equal to $1$.
\end{proof}

\subsection{Proof of Theorem~\ref{Thm_nplusone}}

Recall definitions \eqref{Eq_A-def} of $A_r$ and $A_{r,s}$,
\eqref{Eq_I-def} of the integral 
\[
I^{\A_n}_{k_1,\dots,k_n}(\mathscr{O};\alpha_1,\dots,\alpha_n,\beta)
\]
and \eqref{Eq_contour} of the integration contour $C^{k_1,\dots,k_n}$.
In our next theorem, which may be viewed as a complex analogue of
Theorem~\ref{Thm_nplusone}, 
we will slightly extend the definition of the above integral to allow for 
functions $\mathscr{O}(\tar{1},\dots,\tar{n})$ that are not polynomial in
$\tar{1}$.

\begin{theorem}\label{Thm_complex-An-AFLT}
For $n$ a positive integer, let $k_1,\dots,k_n$ be nonnegative
integers, $\alpha_1,\dots,\alpha_n,\beta\in\mathbb{C}$, $z\in\mathbb{C}^{k_1}$
and $\lar{2},\dots,\lar{n+1}\in\mathscr{P}$, such that
\begin{align*}
\Re(z_i+\alpha_1+\dots+\alpha_s)>s-1 & \quad
\text{for $1\leq s\leq n$ and $1\leq i\leq k_1$}, \\
\Re\big(\lar{r}_{k_r-k_{r-1}}+\alpha_r+\dots+\alpha_s\big)>s-r & \quad
\text{for $2\leq r\leq s\leq n$}.
\end{align*}
Further let
$\tar{1},\dots,\tar{n}$ be alphabets of cardinality $k_1,\dots,k_n$,
and set
$\lar{1}_i:=z_i-k_1+i$ for $1\leq i\leq k_1$ and $\tar{n+1}:=1-\beta$.
Then,
\begin{align}\label{Eq_complex-An-AFLT}
&I^{\A_n}_{k_1,\dots,k_n}
\bigg(S^{(k_1)}\big(\tar{1};z\big) 
\prod_{r=2}^{n+1} s_{\lar{r}}\big[\tar{r}-\tar{r-1}\big] 
;\alpha_1,\dots,\alpha_n,\beta\bigg) \\
&\quad=S^{(k_1)}([k_1];z)
\prod_{r=1}^n \bigg(
(-1)^{\binom{k_r}{2}} s_{\lar{r+1}}[k_{r+1}-k_r] 
\prod_{i=1}^{k_r} \frac{i!}{\Gamma(k_{r+1}-i+1)} \bigg) \notag \\
& \quad\quad \times 
\prod_{1\leq r<s\leq n} 
\prod_{i=1}^{k_r-k_{r-1}} \prod_{j=1}^{k_s-k_{s-1}} 
\big(\lar{r}_i-\lar{s}_j+A_{r,s}+j-i\big) \notag \\
& \quad\quad \times
\prod_{r=1}^n \prod_{i=1}^{k_r-k_{r-1}} \bigg(
\frac{\Gamma(\lar{r}_i+A_r-i-n)}{\Gamma(\lar{r}_i+A_{r,n+1}-i+1)} 
\prod_{j\geq 1} 
\frac{\lar{r}_i-\lar{n+1}_j+A_{r,n+1}+j-i}{\lar{r}_i+A_{r,n+1}+j-i}\bigg),
\notag
\end{align}
where $\beta_1=\dots=\beta_{n-1}:=1$, $\beta_n:=\beta$, $k_0:=0$ and 
$k_{n+1}:=1-\beta$.
\end{theorem}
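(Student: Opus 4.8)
The plan is to evaluate the left-hand side of \eqref{Eq_complex-An-AFLT} by integrating out the alphabets $\tar{1},\dots,\tar{n}$ recursively, in order of increasing $r$, using Theorem~\ref{Thm_Schur} at each of the first $n-1$ steps and Theorem~\ref{Thm_beta-Schur} at the last. At the start of step $r$ the integrand carries exactly one complex Schur function $S^{(k_r)}(\tar{r};\cdot)$, and the only other factors depending on $\tar{r}$ are $\Delta^2(\tar{r})$, the cross term $\Delta^{-1}(\tar{r},\tar{r+1})$, the Schur factor $s_{\lar{r+1}}[\tar{r+1}-\tar{r}]$, and the power $\prod_i(\tar{r}_i)^{\alpha_r-1}$ (the factor $(\tar{r}_i-1)^{\beta_r-1}$ being trivial for $r<n$, where $\beta_r=1$). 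The decisive observation is that this power can be absorbed into the complex Schur function: from $S^{(k_r)}(x;w)=\det_{i,j}(x_i^{w_j})/\Delta(x)$ one has $S^{(k_r)}(x;w)\prod_i x_i^{\alpha_r-1}=S^{(k_r)}(x;w+\alpha_r-1)$, a uniform shift of the exponent vector. After this shift the $\tar{r}$-integral is precisely of the form handled by Theorem~\ref{Thm_Schur} (with $x=\tar{r}$, $y=\tar{r+1}$, $k=k_r$, $\ell=k_{r+1}$, $\la=\lar{r+1}$). Before running the recursion I would deform each contour $C_r$ of \eqref{Eq_contour} to a sector contour $C_{\theta,r}$: the integrand is holomorphic off the origin and the divisors $\tar{r}_i=\tar{r+1}_j$, the nesting $C_{r+1}\subset C_r$ puts every $\tar{r+1}_j$ inside $C_r$ exactly as the hypothesis $y_j\in\interior(C_{\theta,r})$ demands, and the mild growth at the origin lets the indefinite integrals agree. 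One normalisation point to flag at the outset: the integral \eqref{Eq_I-def} carries no $1/k_r!$, whereas Theorem~\ref{Thm_Schur} does, so each of the first $n-1$ steps contributes an extra factor $k_r!$; Theorem~\ref{Thm_beta-Schur} is already normalised without $1/k_n!$ and matches the final step directly.

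Running the recursion for $r=1,\dots,n-1$, each application of Theorem~\ref{Thm_Schur} produces a sign $(-1)^{\binom{k_r}{2}}$ and replaces the complex Schur function on $\tar{r}$ by one on $\tar{r+1}$ whose exponent vector is the previous one (already shifted by $\alpha_r-1$) with the $k_{r+1}-k_r$ new staircase entries $\lar{r+1}_j+(k_{r+1}-k_r)-j$ appended. Starting from $S^{(k_1)}(\tar{1};z)$ with $z_i=\lar{1}_i+k_1-i$, the exponent vector of $S^{(k_{r+1})}(\tar{r+1};\cdot)$ therefore splits into consecutive blocks $t=1,\dots,r+1$ of lengths $k_t-k_{t-1}$, and tracking the accumulated $\alpha$- and $k$-shifts shows that by the end the entry in block $t$ at local index $j$ equals $\lar{t}_j+A_t-n-1-j$; this is exactly what the shorthand $A_r$ of \eqref{Eq_A-def} is tuned to produce. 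The hypotheses $\Re(z_i+\alpha_1+\dots+\alpha_s)>s-1$ and $\Re(\lar{r}_{k_r-k_{r-1}}+\alpha_r+\dots+\alpha_s)>s-r$ are precisely the requirement $\Re(\text{exponent})>-1$ of Theorem~\ref{Thm_Schur} applied block-by-block, so the recursion is legitimate throughout, and the vanishing clause of Theorem~\ref{Thm_Schur} reproduces the fact that the integral vanishes unless $l(\lar{r+1})\le k_{r+1}-k_r$.

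At the final step $r=n$ the weight $(\tar{n}_i-1)^{\beta-1}$ is nontrivial and $\tar{n+1}=1-\beta$, so the coupling Schur factor is $s_{\lar{n+1}}[1-\beta-\tar{n}]$; after absorbing $\prod_i(\tar{n}_i)^{\alpha_n-1}$ the $\tar{n}$-integral is exactly Theorem~\ref{Thm_beta-Schur} with $k=k_n$ and $\la=\lar{n+1}$. This produces the last sign $(-1)^{\binom{k_n}{2}}$, the specialisation $S^{(k_n)}([k_n];w)$ of the fully assembled exponent vector $w$, the factor $s_{\lar{n+1}}[1-\beta-k_n]=s_{\lar{n+1}}[k_{n+1}-k_n]$ (recall $k_{n+1}=1-\beta$), the factor $\prod_i i!/\Gamma(2-i-\beta)=\prod_i i!/\Gamma(k_{n+1}-i+1)$, and, via the beta-Schur quotient $\Gamma(w_i+1)/\Gamma(w_i+\beta+k_n)$ together with its $\prod_j$ ratio, the entire last line of \eqref{Eq_complex-An-AFLT}: substituting the block value $w_i=\lar{t}_j+A_t-n-1-j$ turns these into $\lar{t}_j+A_t-n-j$, $\lar{t}_j+A_{t,n+1}-j+1$ and $\lar{t}_j-\lar{n+1}_{j'}+A_{t,n+1}+j'-j$, exactly as required.

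The remaining and genuinely laborious task is the final reorganisation. Writing $S^{(k_n)}([k_n];w)=\prod_{a<b}(w_a-w_b)/(b-a)$ via \eqref{Eq_Sn-PS} and splitting the product over blocks, the within-block part of block $t$ collapses, by the principal specialisation \eqref{Eq_Schur-spec}, to $s_{\lar{t}}[k_t-k_{t-1}]$, so that $t=1$ gives the leading factor $S^{(k_1)}([k_1];z)=s_{\lar{1}}[k_1]$ and $t=2,\dots,n$ give the remaining Schur factors $\prod_{r=1}^{n-1}s_{\lar{r+1}}[k_{r+1}-k_r]$; the cross-block numerators for $t<u$ assemble into the second triple product $\prod(\lar{t}_i-\lar{u}_j+A_{t,u}+j-i)$, for which the definition of $A_{r,s}$ is again engineered to make the differences collapse. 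I expect the hard part to be disposing of the leftover cross-block denominators $\prod 1/(k_{u-1}-k_{t-1}+j-i)$: these must combine with the accumulated factorials $\prod_{r=1}^{n-1}k_r!$ to reproduce exactly the explicit $\prod_{r=1}^{n-1}\prod_i i!/\Gamma(k_{r+1}-i+1)$ on the right, a purely $\la$-free identity among products of factorials that already for $n=2$ hinges on a nontrivial rearrangement. Tracking the staircase offsets and the $k$-differences consistently through all $n$ blocks, and checking that every $(b-a)$ denominator and every $\Gamma$-ratio lands in its designated factor, is where the nontrivial cancellation flagged after Theorem~\ref{Thm_nplusone} is made explicit.
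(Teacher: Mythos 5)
Your proposal is correct and follows essentially the same route as the paper: the paper also deforms the contour to nested sector contours, absorbs $\prod_i(\tar{r}_i)^{\alpha_r-1}$ into the exponent vector of the complex Schur function, integrates out $\tar{1}$ via Theorem~\ref{Thm_Schur} (picking up the factor $k_1!(-1)^{\binom{k_1}{2}}$ and the appended staircase entries $\lar{2}_j+k_2-k_1-j$), closes with Theorem~\ref{Thm_beta-Schur} in the base case, and reassembles via \eqref{Eq_double-PS}. The only difference is organisational: the paper phrases the recursion as an induction on $n$ and performs the block-by-block reassembly of the specialised Schur function one step at a time through \eqref{Eq_double-PS}, rather than all at once at the end, which disposes of the leftover cross-block denominators you flag as the laborious part.
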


We note that 
\[
\prod_{r=1}^n\prod_{i=1}^{k_r} \frac{1}{\Gamma(k_{r+1}-i+1)}
\]
vanishes unless $k_{r+1}-k_r\geq 0$ for $1\leq r\leq n-1$, i.e.,
unless $k_1\leq k_2\leq\cdots\leq k_n$.
Assuming this holds it then follows from
\[
\prod_{r=1}^n s_{\lar{r+1}}[k_{r+1}-k_r]
\]
that the right-hand side vanishes unless
$l(\lar{r})\leq k_r-k_{r-1}$ for all $2\leq r\leq n$.

If we take $\lar{2}=\lar{n+1}=0$ in
Theorem~\ref{Thm_complex-An-AFLT} and also choose $z_i=k_1-i$ 
for all $1\leq i\leq k_1$ (so that on the right $\lar{1}=0$), 
we obtain
\begin{multline}\label{Eq_An-one}
I^{\A_n}_{k_1,\dots,k_n}(1;\alpha_1,\dots,\alpha_n,\beta) \\
=\prod_{r=1}^n \bigg( (-1)^{\binom{k_r}{2}} 
\prod_{i=1}^{k_r} \frac{i!}{\Gamma(k_{r+1}-i+1)} \bigg) 
\prod_{1\leq r<s\leq n+1} \prod_{i=1}^{k_r-k_{r-1}} 
\frac{\Gamma(A_{r,s}+k_s-k_{s-1}-i+1)}{\Gamma(A_{r,s}-i+1)} ,
\end{multline}
where the $\alpha_r$ must satisfy \eqref{Eq_conv}.
This is equivalent to \eqref{Eq_An-One} where it should be noted
that in the above $k_{n+1}$ is defined as $1-\beta$ whereas 
$k_{n+1}:=0$ in \eqref{Eq_An-One}. 

Let $\lar{1}\in\mathscr{P}_{k_1}$.
If we take \eqref{Eq_complex-An-AFLT} for arbitrary 
partitions $\lar{2},\dots,\lar{n+1}$, specialise
$z_i=\lar{1}_i+k_1-i$ for $1\leq i\leq k_1$
and divide the resulting identity by \eqref{Eq_An-one},
then Theorem~\ref{Thm_nplusone} follows with
$\ell_r$ given by $k_r-k_{r-1}$ for all $1\leq r\leq n$
and $\ell_{n+1}$ any integer such that $\ell_{n+1}\geq l(\lar{n+1})$.
Since the right-hand side of \eqref{Eq_nplusone} is independent
of the choice of $\ell_r$ provided that $\ell_r\geq l(\lar{r})$, 
this completes the proof of Theorem~\ref{Thm_nplusone}.

\begin{proof}[Proof of Theorem~\ref{Thm_complex-An-AFLT}]
We may deform the contour $C^{k_1,\dots,k_n}$ to
\[
C_{\theta_1,r_1}^{k_1}\times\dots\times C_{\theta_n,r_n}^{k_n},
\]
for arbitrary real $r_1,\dots,r_n$ and $\theta_1,\dots,\theta_n$ 
such that
$r_1>\dots>r_n>1$ and $\pi>\theta_1>\dots>\theta_n>0$, and in the
following we assume this new deformed contour.

We now proceed by induction on $n$.
The base case, $n=1$, corresponds to Theorem~\ref{Thm_beta-Schur} 
with $(k,\la,\beta)\mapsto (k_1,\lar{2},\beta_1=\beta)$
and $z_i\mapsto z_i+\alpha_1-1$ for all $1\leq i\leq k_1$.

To carry out the inductive step we denote the left-hand side of 
\eqref{Eq_complex-An-AFLT} by
\[
\mathscr{L}^{k_1,\dots,k_n}_{\lar{2},\dots,\lar{n+1}}
(z;\alpha_1,\dots,\alpha_n,\beta).
\]
Assuming $n\geq 2$, we integrate over $\tar{1}$ using 
Theorem~\ref{Thm_Schur} with 
$(k,\ell,x,y,\la)\mapsto (k_1,k_2,\tar{1},\tar{2},\lar{2})$ and
$z_i\mapsto z_i+\alpha_1-1$ for all $1\leq i\leq k_1$.
This yields
\[
\mathscr{L}^{k_1,\dots,k_n}_{\lar{2},\dots,\lar{n+1}}
(z;\alpha_1,\dots,\alpha_n,\beta)=
\begin{cases}
k_1! (-1)^{\binom{k_1}{2}} 
\mathscr{L}^{k_2,\dots,k_n}_{\lar{3},\dots,\lar{n+1}}
(z';\alpha_2,\dots,\alpha_n,\beta) & 
\text{if $l\big(\lar{2}\big)\leq k_2-k_1$},\\[3mm]
0 & \text{otherwise},
\end{cases}
\]
where
\[
z':=\big(z_1+\alpha_1-1,\dots,z_{k_1}+\alpha_1-1,
\lar{2}_1+k_2-k_1-1,\dots,\lar{2}_{k_2-k_1-1}+1,\lar{2}_{k_2-k_1}\big).
\]
By induction on $n$ this gives (in the 
$l\big(\lar{2}\big)\leq k_2-k_1$ case)
\begin{align*}
&\mathscr{L}^{k_1,\dots,k_n}_{\lar{2},\dots,\lar{n+1}}
(z;\alpha_1,\dots,\alpha_n,\beta) \\
&=S^{(k_2)}\big([k_2];z'\big) 
k_1!(-1)^{\binom{k_1}{2}} 
\prod_{r=2}^n \bigg( k_r!(-1)^{\binom{k_r}{2}} 
 s_{\lar{r+1}}[k_{r+1}-k_r]
\prod_{i=1}^{k_r} \frac{i!}{\Gamma(k_{r+1}-i+1)}\bigg) \\
& \quad \times \prod_{s=3}^n
\prod_{i=1}^{k_2} \prod_{j=1}^{k_s-k_{s-1}} 
\big(z'_i-\lar{s}_j+\alpha_2+\dots+\alpha_{s-1}
+k_{s-1}-k_s-s+j+2\big) \\
& \quad \times \prod_{3\leq r<s\leq n} 
\prod_{i=1}^{k_r-k_{r-1}} \prod_{j=1}^{k_s-k_{s-1}} 
\big(\lar{r}_i-\lar{s}_j+A_{r,s}+j-i\big) \\
& \quad \times
\prod_{i=1}^{k_2} \bigg(
\frac{\Gamma(z'_i+\alpha_2+\dots+\alpha_n-n+2)}
{\Gamma(z'_i+\alpha_2+\dots+\alpha_n+k_n-k_{n+1}-n+2)} \\
& \quad\qquad\qquad \times
\prod_{j\geq 1} 
\frac{z'_i-\lar{n+1}_j+\alpha_2+\dots+\alpha_n+k_n-k_{n+1}+j-n+1}
{z'_i+\alpha_2+\dots+\alpha_n+k_n-k_{n+1}+j-n+1}\bigg) \\
& \quad \times
\prod_{r=3}^n \prod_{i=1}^{k_r-k_{r-1}} \bigg(
\frac{\Gamma(\lar{r}_i+A_r-i-n)}{\Gamma(\lar{r}_i+A_{r,n+1}-i+1)} 
\prod_{j\geq 1} 
\frac{\lar{r}_i-\lar{n+1}_j+A_{r,n+1}+j-i}{\lar{r}_i+A_{r,n+1}+j-i}\bigg).
\end{align*}
By the definition of $z'$ and by \eqref{Eq_double-PS} with 
$(\ell,k,z,\la)\mapsto (k_2,k_1,z+\alpha_1-1,\lar{2})$, this finally yields
\begin{align*}
&\mathscr{L}^{k_1,\dots,k_n}_{\lar{2},\dots,\lar{n+1}}
(z;\alpha_1,\dots,\alpha_n,\beta) \\
&=S^{(k_1)}\big([k_1];z\big) 
\prod_{r=1}^n \bigg( (-1)^{\binom{k_r}{2}} s_{\lar{r+1}}[k_{r+1}-k_r] 
\prod_{i=1}^{k_r} \frac{i!}{\Gamma(k_{r+1}-i+1)} \bigg) \\
& \quad \times \prod_{1\leq r<s\leq n} 
\prod_{i=1}^{k_r-k_{r-1}} \prod_{j=1}^{k_s-k_{s-1}} 
\big(\lar{r}_i-\lar{s}_j+A_{r,s}+j-i\big) \\
& \quad \times
\prod_{r=1}^n \prod_{i=1}^{k_r-k_{r-1}} 
\bigg(\frac{\Gamma(\lar{r}_i+A_r+r-i-n)}{\Gamma(\lar{r}_i+A_{r,n+1}+r-i+1)} 
\prod_{j\geq 1} 
\frac{\lar{r}_i-\lar{n+1}_j+A_{r,n+1}+j-i}{\lar{r}_i+A_{r,n+1}+j-i}\bigg).
\end{align*}
As noted previously, the right-hand side vanishes if $k_2-k_1<0$, and
assuming $k_2-k_1\geq 0$ it then vanishes unless 
$l(\lar{2})\leq k_2-k_1$.
Hence the above holds regardless of the ordering of $k_1$ and $k_2$
or the length of $\lar{2}$.
\end{proof}

\section{An elliptic AFLT integral}\label{Sec_E-Selberg}

In this section we prove the elliptic AFLT integral of Theorem~\ref{Thm_eAFLT}.
Throughout it is assumed that $p,q\in\mathbb{C}$ such that $\abs{p},\abs{q}<1$. 

\subsection{The elliptic Selberg integral}

We begin with a brief review of the elliptic Selberg integral and its
relation to the ordinary Selberg integral \eqref{Eq_Selberg}. 

The elliptic Selberg integral\footnote{An elliptic Selberg integral of a 
very different type, which arises as a solution of the
Knizhnik--Zamolodchikov--Bernard heat equation, may be found in
\cite{TV19} and references therein.} 
was first conjectured by van Diejen and Spiridonov \cite{vDS00,vDS01} and
then proved by the second author in \cite{Rains10}. 
(See also \cite{IN17,Spiridonov07} for alternative proofs).
It corresponds to the $\bla=\bmu=\boldsymbol{0}$ case of 
Theorem~\ref{Thm_eAFLT}, viz.\
\begin{align}\label{Eq_E-Selberg}
&S_n(t_1,\dots,t_6;t;p,q) \\
&\qquad:=\kappa_n
\Int_{\mathbb{T}^n} \prod_{1\leq i<j\leq n}
\frac{\Gamma(tz_i^{\pm}z_j^{\pm};p,q)}
{\Gamma(z_i^{\pm}z_j^{\pm};p,q)}
\prod_{i=1}^n \frac{\prod_{r=1}^6 \Gamma(t_r z_i^{\pm};p,q)}
{\Gamma(z_i^{\pm 2};p,q)} \,
\frac{\dup z_1}{z_1}\cdots \frac{\dup z_n}{z_n} \notag \\
&\qquad\hphantom{:}=\prod_{i=1}^n 
\bigg(\Gamma(t^i;p,q)
\prod_{1\leq r<s\leq 6} \Gamma(t^{i-1}t_rt_s;p,q)\bigg), \notag 
\end{align}
where $t,t_1,\dots,t_6\in\mathbb{C}$ are such that
$\abs{t},\abs{t_1},\dots,\abs{t_6}<1$ and 
the balancing condition \eqref{Eq_balancing} holds.
For $n=1$ the elliptic Selberg integral corresponds to Spiridonov's 
elliptic analogue of the Euler beta integral \cite{Spiridonov01} and
for general $n$ it was shown in \cite{Rains09} that by taking an 
appropriate limit the full Selberg integral \eqref{Eq_Selberg} 
(with $k\mapsto n$) arises.
Specifically, setting $q=\exp(2\pi\iup\tau)$ for $\tau\in\mathbb{H}$
\begin{align*}
\lim_{p\to 0} \lim_{\tau\to 0} & \,
\prod_{i=1}^n \frac{\Gamma(q^{\alpha+\beta+(2n-i-1)\gamma};p,q)}
{\Gamma(q^{i\gamma};p,q)\Gamma(q^{\alpha+(i-1)\gamma};p,q)
\Gamma(q^{\beta+(i-1)\gamma};p,q)
\Gamma^4(-q^{(\alpha+\beta)/2+(i-1)\gamma};p,q)} \\
& \times S_n\big(q^{\alpha/2},q^{\alpha/2},-q^{\beta/2},-q^{\beta/2},
p^{1/2}q^{1/2-\zeta/2},p^{1/2}q^{1/2-\zeta/2};q^{\gamma};p,q\big) \\
&\qquad =\prod_{i=1}^n
\frac{\Gamma(\alpha+\beta+(2n-i-1)\gamma)\Gamma(1+\gamma)}
{\Gamma(\alpha+(i-1)\gamma)\Gamma(\beta+(i-1)\gamma)\Gamma(1+i\gamma)} \\
&\qquad \quad \times
\Int_{[0,1]^n} \prod_{i=1}^n t_i^{\alpha-1} (1-t_i)^{\beta-1} 
\prod_{1\leq i<j\leq n} \abs{t_i-t_j}^{2\gamma}\,
\dup t_1\cdots \dup t_n,
\end{align*}
where $\textrm{Re}(\alpha),\textrm{Re}(\beta),\textrm{Re}(\gamma)>0$,
$2(n-1)\gamma+\alpha+\beta-\zeta=0$, see \cite[Theorem 7.4]{Rains09}.
This same limit, taken on the right-hand side of \eqref{Eq_E-Selberg}
yields $1$, so that the Selberg integral follows, albeit for a slightly
more restricted range of the parameter $\gamma$ than strictly necessary.

\subsection{Elliptic interpolation functions}

To lift the elliptic Selberg integral \eqref{Eq_E-Selberg} to the
elliptic AFLT integral of Theorem~\ref{Thm_eAFLT} we require a number
of results from the theory of elliptic interpolation functions on root
systems.
For a detailed discussion of these functions we refer the reader to
\cite{CG06,Rains06,Rains10,Rains12,RW17}, and below we only state
some of the key results needed in the proof of Theorem~\ref{Thm_eAFLT}.

In order to describe the elliptic skew interpolation functions we 
need the $\mathrm{BC}_n$-symmetric elliptic interpolation functions 
(or well-poised Macdonald functions)
\[
R^{\ast}_{\mu}(x_1,\dots,x_n;a,b;q,t;p),
\]
where $\mu\in\mathscr{P}_n$.
For $\la\in\mathscr{P}_n$ such that $\mu\not\subseteq\la$ 
these functions satisfy the vanishing property
\[
R^{\ast}_{\mu}\big(a\spec{\la}_{n;q,t};a,b;q,t;p\big)=0.
\]
The elliptic interpolation function
$R^{\ast}_{\mu}(a,b;q,t;p)$ generalises Okounkov's 
$\mathrm{BC}_n$-symmetric Macdonald
interpolation polynomial $P^{\ast}_{\mu}(q,t,s)$
\cite{Okounkov98} which has the same
vanishing property and has the ordinary Macdonald polynomial $P_{\mu}(q,t)$
as top-homogeneous degree component.
When $t^nab=pq$, the elliptic interpolation functions are referred
to as being of Cauchy-type and factorise as
\begin{equation}\label{Eq_factorisation}
R^{\ast}_{\mu}(x_1,\dots,x_n;a,b;q,t;p)
=\Delta^0_{\mu}(t^{n-1}a/b\vert 
t^{n-1}ax_1^{\pm},\dots,t^{n-1}ax_n^{\pm};q,t;p),
\end{equation}
where $\Delta^0_{\mu}(a\vert\dots,x_i^{\pm},\dots;q,t;p):=
\Delta^0_{\mu}(a\vert\dots,x_i,x_i^{-1},\dots;q,t;p)$.
They also factor under principal specialisation:
\begin{equation}\label{Eq_R-PS}
R^{\ast}_{\mu}(v\spec{0}_{n;q,t};a,b;q,t;p)
=\Delta^0_{\mu}(t^{n-1}a/b\vert t^{n-1}av,a/v;q,t;p),
\end{equation}
and satisfy the evaluation symmetry
\begin{equation}\label{Eq_evaluation}
\frac{R^{\ast}_{\mu}(v\spec{\la}_{n;q,t}/a;a,b/a;q,t;p)}
{R^{\ast}_{\mu}(v\spec{0}_{n;q,t}/a;a,b/a;q,t;p)} =
\frac{R^{\ast}_{\la}(v\spec{\mu}_{n;q,t}/a';a',b/a';q,t;p)}
{R^{\ast}_{\la}(v\spec{0}_{n;q,t}/a';a',b/a';q,t;p)},
\end{equation}
where $\la,\mu\in\mathscr{P}_n$ and $aa'=v\sqrt{t^{n-1}b}$.

Using the elliptic interpolation functions one can define
elliptic binomial coefficients as
\begin{align*}
\binom{\la}{\mu}_{[a,b];q,t;p}&:=
\frac{(pqa;q,t;p)_{2\la^2}}
{C^{-}_{\la}(pq;q,t;p)C^{-}_{\la}(t;q,t;p)
C^{+}_{\la}(a;q,t;p)C^{+}_{\la}(pqa/t;q,t;p)} \\[1mm]
& \quad \times
\Delta_{\mu}^0(a/b\vert t^n,1/b;q,t;p) \,
R^{\ast}_{\mu}\big(t^{1-n}a^{1/2}
\spec{\la}_{n;q,t};t^{1-n}a^{1/2},ba^{-1/2};q,t;p\big),
\end{align*}
where $2\la^2$ is shorthand for the partition 
$(2\la_1,2\la_1,2\la_2,2\la_2,\dots)$ and $n$ is an integer such
that $n\geq l(\la),l(\mu)$.
This definition is independent of the choice of $a^{1/2}$ and $n$.
The elliptic binomial coefficients vanish unless $\mu\subseteq\la$
and trivialise to $1$ when $\mu=0$ (but not when $\mu=\la$). 
It will be convenient to also use the normalised elliptic binomial
coefficients 
\begin{equation}\label{Eq_ebinom}
\obinomE{\la}{\mu}_{[a,b](v_1,\dots,v_k);q,t;p}:=
\frac{\Delta^0_{\la}(a\vert b,v_1,\dots,v_k;q,t;p)}
{\Delta^0_{\mu}(a/b\vert 1/b,v_1,\dots,v_k;q,t;p)}
\binom{\la}{\mu}_{[a,b];q,t;p}.
\end{equation}
In terms of these, the elliptic $\mathrm{C}_n$ analogue of Jackson's
summation \cite[Theorem 4.1]{Rains06} 
(or \cite[Equation~(3.7)]{CG06}) is given by
\begin{equation}\label{Eq_Jackson}
\sum_{\mu} \Delta^0_{\mu}(a/b\vert d,e;q,t;p)\,
\obinomE{\la}{\mu}_{[a,b];q,t;p} \obinomE{\mu}{\nu}_{[a/b,c/b];q,t;p}
=\obinomE{\la}{\nu}_{[a,c](bd,be);q,t;p},
\end{equation}
where $bcde=apq$.

We can now define the elliptic skew interpolation functions as
\begin{align*}
R^{\ast}_{\la/\nu}([v_1,\dots,v_{2n}];a,b;q,t;p) 
&:=\sum_{\mu} \Delta_{\mu}^0(pq/b^2\vert pq/bv_1,\dots,pq/bv_{2n};q,t;p) \\ 
&\quad\qquad \times \obinomE{\la}{\mu}_{[a/b,ab/pq];q,t;p}
\obinomE{\mu}{\nu}_{[pq/b^2,pqV/ab];q,t;p},
\end{align*}
where $V:=v_1\cdots v_{2n}$.
Note that unlike the $\mathrm{BC}_n$-symmetric interpolation functions
the skew interpolation functions are $\Symm_{2n}$-symmetric. 
The rationale for using plethystic brackets around the $v_i$ is
that in the $p\to 0$ limit they relate to the variables $x_i$ of the 
$\mathrm{BC}_n$-symmetric interpolation functions via the plethystic 
substitution
\[
x_i+x_i^{-1}\mapsto 
\frac{t^{1/2}(v_{2i-1}^{-1}+v_{2i}^{-1}-v_{2i-1}-v_{2i})}{1-t}
\]
for all $1\leq i\leq n$.
In particular, for real $\alpha,\beta$ such that $0<\alpha<\beta<1$
and $\alpha+\beta<1$,
\begin{multline}\label{Eq_skewR-limit}
\lim_{p\to 0} p^{\alpha \abs{\la}}
R^{\ast}_{\la/0}
\Big(\big[t^{1/2}(p^{-\alpha}x_1)^{\pm},\dots,t^{1/2}(p^{-\alpha}x_n)^{\pm},
p^{-\alpha}t^{-1/2}c,p^{\alpha}t^{1/2}/d\big];
a,p^{\beta}b;q,t;p\Big)\\
=\big({-}at^{-1/2}\big)^{\abs{\la}}q^{n(\la')} t^{-2n(\la)} c_{\la}(q,t)
P_{\la}\Big(\Big[X+\frac{d-c}{1-t}\Big];q,t\Big),
\end{multline}
where $X:=x_1+\dots+x_n$.

By \eqref{Eq_Jackson} it immediately follows that
\begin{equation}\label{Eq_nisone}
R^{\ast}_{\la/\nu}([v_1,v_2];a,b;q,t;p) 
=\obinomE{\la}{\nu}_{[a/b,v_1v_2](a/v_1,a/v_2);q,t;p}.
\end{equation}
The skew interpolation functions further satisfy the branching rule
\begin{align}\label{Eq_branching}
&R^{\ast}_{\la/\nu}
\big([v_1,\dots,v_{2n},w_1,\dots,w_{2m}];a,b;q,t;p\big) \\
&\qquad \quad = \sum_{\mu} R^{\ast}_{\la/\mu}
\big([v_1,\dots,v_{2n}];a,b;q,t;p\big) 
R^{\ast}_{\mu/\nu}
\big([w_1,\dots,w_{2m}];a/V,b;q,t;p\big), \notag
\end{align}
and for $\nu=0$ generalise the $\mathrm{BC}_n$-symmetric interpolation
functions 
\begin{align}\label{Eq_interpolation-skew}
&R^{\ast}_{\la/0}([t^{1/2}x_1^{\pm},\dots,t^{1/2}x_n^{\pm}];
t^{n-1/2}a,t^{1/2}b;q,t;p) \\[1mm]
&\qquad\quad
=\begin{cases}
\Delta_{\la}^0(t^{n-1}a/b\vert t^n;q,t;p)\,
R^{\ast}_{\la}(x_1,\dots,x_n;a,b;q,t;p)
& \text{for $\la\in\mathscr{P}_n$}, \\[2mm]
0 & \text{otherwise},
\end{cases} 
\notag
\end{align}
where we recall the convention \eqref{Eq_Rastpm}.

The elliptic Selberg integral \eqref{Eq_E-Selberg} is $p,q$-symmetric.
In order to maintain this symmetry in the AFLT case, we need to generalise
all of the above elliptic functions.
Let $\bla=(\lar{1},\lar{2})$ and $\bmu=(\mur{1},\mur{2})$
be bipartitions.
Then
\begin{align*}
R^{\ast}_{\bmu}(a,b;t;p,q)&:=
R^{\ast}_{\mur{1}}(a,b;q,t;p)
R^{\ast}_{\mur{2}}(a,b;p,t;q) \\[2mm]
R^{\ast}_{\bla/\bmu}(a,b;t;p,q)&:=
R^{\ast}_{\lar{1}/\mur{1}}(a,b;q,t;p)
R^{\ast}_{\lar{2}/\mur{2}}(a,b;p,t;q).
\end{align*}
In much the same way we define $\Delta^0_{\bla}(a\vert b_1,\dots,b_k;t;p,q)$
and $\obinomE{\bla}{\bmu}_{[a,b](v_1,\dots,v_k);t;p,q}$.
Finally, we recall the definition of 
$\spec{\bla}_{n;t;p,q}$, the spectral vector indexed by a 
bipartition, from \eqref{Eq_e-spec}.

For the proof of the elliptic AFLT integral we require two key identities
for the interpolation functions.
The first of these is the following generalised elliptic Selberg integral,
see \cite[Theorem~9.2]{Rains10}.
For $\bla,\bmu\in\mathscr{P}_n^2$ and $t,t_1,t_2,t_3,t_4,t_5,t_6\in\mathbb{C}$ 
such that \eqref{Eq_balancing} holds 
\begin{align}\label{Eq_thm92}
&\kappa_n \Int_{C_{\bla\bmu}} 
R^{\ast}_{\bla}(z_1,\dots,z_n;t_1,t_2;t;p,q) 
R^{\ast}_{\bmu}(z_1,\dots,z_n;t_3,t_6;t;p,q) \\[-2mm]
& \quad\qquad \times
\prod_{1\leq i<j\leq n}
\frac{\Gamma(tz_i^{\pm}z_j^{\pm};p,q)}
{\Gamma(z_i^{\pm}z_j^{\pm};p,q)}
\prod_{i=1}^n \frac{\prod_{r=1}^6 \Gamma(t_r z_i^{\pm};p,q)}
{\Gamma(z_i^{\pm 2};p,q)}\,
\frac{\dup z_1}{z_1}\cdots\frac{\dup z_n}{z_n} \notag \\
&\quad=
\prod_{i=1}^n 
\bigg(\Gamma(t^i;p,q)
\prod_{1\leq r<s\leq 6} \Gamma(t^{i-1}t_rt_s;p,q)\bigg) \notag \\[1mm]
& \qquad \times
\Delta^0_{\bla}(t^{n-1}t_1/t_2\vert t^{n-1}t_1t_4,t^{n-1}t_1t_5;t;p,q) 
\Delta^0_{\bmu}(t^{n-1}t_3/t_6\vert t^{n-1}t_3t_4,t^{n-1}t_3t_5;t;p,q)
\notag \\[2mm]
& \qquad \times
R_{\bla}^{\ast}(t_3\spec{\boldsymbol{0}}_{n;t;p,q}/\zeta';
t_1 \zeta',t_2\zeta';t;p,q)
R_{\bmu}^{\ast}(t_1\spec{\bla}_{n;t;p,q}/\zeta;
t_3 \zeta,t_6 \zeta;t;p,q), \notag
\end{align}
where $\zeta:=\sqrt{t^{n-1}t_1t_2}$ and $\zeta':=\sqrt{t^{n-1}t_3t_6}$,
and where $C_{\bla\bmu}$ is a deformation of $\mathbb{T}^n$ separating
sequences of poles of the integrand tending to zero from sequences of
poles tending to infinity. 
Note that by the evaluation symmetry \eqref{Eq_evaluation}
this result is invariant under the simultaneous substitution
$(\bla,\bmu,t_1,t_2,t_3,t_6)\mapsto (\bmu,\bla,t_3,t_6,t_1,t_2)$.
We also note that for $\bmu=0$ the above integral may be viewed as
an elliptic analogue of Kadell's integral.

The second key result we need is the connection coefficient identity 
\cite[Corollary 4.14]{Rains06}
\begin{equation}\label{Eq_CC}
R^{\ast}_{\bla}(x_1,\dots,x_n;a,b;t;p,q)
=\sum_{\bmu}
\obinomE{\bla}{\bmu}_{[t^{n-1}a/b,a/a'](t^{n-1}aa');t;p,q}
R^{\ast}_{\bmu}(x_1,\dots,x_n;a',b;t;p,q),
\end{equation}
where $\bla\in\mathscr{P}_n^2$.

\subsection{Proof of Theorem~\ref{Thm_eAFLT}}

Denote the elliptic AFLT integral by $I_{\bla\bmu}$.
As a first step towards proving the theorem we apply the branching
rule \eqref{Eq_branching} to expand 
$R^{\ast}_{\bmu/\boldsymbol{0}}$ as
\begin{align*}
&R^{\ast}_{\bmu/\boldsymbol{0}}
\big([t^{1/2}z_1^{\pm},\dots,t^{1/2}z_n^{\pm},
t^{-1/2}t_4,t^{-1/2}t_5];t^{n-3/2}t_3t_4t_5,t^{1/2}t_6;t;p,q\big) \\[2mm]
&\qquad=\sum_{\bnu\subseteq\bmu}
R^{\ast}_{\bmu/\bnu}
\big([t^{-1/2}t_4,t^{-1/2}t_5];
t^{n-3/2}t_3t_4t_5,t^{1/2}t_6;t;p,q\big) \\
&\qquad\qquad\quad\times
R^{\ast}_{\bnu/\boldsymbol{0}}
\big([t^{1/2}z_1^{\pm},\dots,t^{1/2}z_n^{\pm}];
t^{n-1/2}t_3,t^{1/2}t_6;t;p,q\big).
\end{align*}
By \eqref{Eq_nisone} and \eqref{Eq_interpolation-skew} this may be 
further rewritten as
\begin{align*}
&R^{\ast}_{\bmu/\boldsymbol{0}}
\big([t^{1/2}z_1^{\pm},\dots,t^{1/2}z_n^{\pm},
t^{-1/2}t_4,t^{-1/2}t_5];t^{n-3/2}t_3t_4t_5,t^{1/2}t_6;t;p,q\big) \\[2mm]
&\qquad =\sum_{\bnu\in\mathscr{P}_n^2}
\Delta^0_{\bnu}(t^{n-1}t_3/t_6\vert t^n;t;p,q)
\obinomE{\bmu}{\bnu}_{[t^{n-2}t_3t_4t_5/t_6,t_4t_5/t]
(t^{n-1}t_3t_4,t^{n-1}t_3t_5);t;p,q} \\[2mm]
&\qquad\qquad\quad\times
R^{\ast}_{\bnu}(z_1,\dots,z_n;t_3,t_6;t;p,q).
\end{align*}
If we substitute this into the elliptic AFLT integral, 
rewrite $R^{\ast}_{\bla/\boldsymbol{0}}$
in terms of $R^{\ast}_{\bla}$ using 
\eqref{Eq_interpolation-skew} and then interchange
the order of the sum and integral, we obtain exactly the integral
\eqref{Eq_thm92} (with $\bmu\mapsto\bnu$) in the summand.
Hence
\begin{align*}
I_{\bla\bmu}&=S_n(t_1,t_2,t_3,t_4,t_5,t_6;t;p,q) \\[2mm]
&\quad\times
\Delta^0_{\bla}(t^{n-1}t_1/t_2\vert t^n,t^{n-1}t_1t_4,t^{n-1}t_1t_5;t;p,q) 
R_{\bla}^{\ast}(t_3\spec{\boldsymbol{0}}_{n;t;p,q}/\zeta';
t_1 \zeta',t_2\zeta';t;p,q) \notag \\[2mm]
&\quad\times \sum_{\bnu\in\mathscr{P}_n^2}
\Delta^0_{\bnu}(t^{n-1}t_3/t_6\vert t^n,t^{n-1}t_3t_4,t^{n-1}t_3t_5;t;p,q)\\
& \qquad\qquad \times
\obinomE{\bmu}{\bnu}_{[t^{n-2}t_3t_4t_5/t_6,t_4t_5/t]
(t^{n-1}t_3t_4,t^{n-1}t_3t_5);t;p,q} 
R_{\bnu}^{\ast}(t_1\spec{\bla}_{n;t;p,q}/\zeta;t_3 \zeta,
t_6 \zeta;t;p,q),
\end{align*}
where, as before, $\zeta:=\sqrt{t^{n-1}t_1t_2}$ and
$\zeta':=\sqrt{t^{n-1}t_3t_6}$.
By \eqref{Eq_balancing}, \eqref{Eq_Delta_sym} and \eqref{Eq_ebinom}
\begin{multline*}
\Delta^0_{\bnu}(t^{n-1}t_3/t_6\vert t^n,t^{n-1}t_3t_4,t^{n-1}t_3t_5;t;p,q)
\obinomE{\bmu}{\bnu}_{[t^{n-2}t_3t_4t_5/t_6,t_4t_5/t]
(t^{n-1}t_3t_4,t^{n-1}t_3t_5);t;p,q} \\[1mm]
=\Delta^0_{\bmu}(t^{n-2}t_3t_4t_5/t_6\vert t^{n-1}t_3t_4,t^{n-1}t_3t_5,
t^{n-1}t_4t_5;t;p,q) 
\obinomE{\bmu}{\bnu}_{[t^{n-2}t_3t_4t_5/t_6,t_4t_5/t](pqt_3/tt_6);t;p,q}.
\end{multline*}
As a result,
\begin{align*}
I_{\bla\bmu}&=S_n(t_1,t_2,t_3,t_4,t_5,t_6;t;p,q) \\[2mm]
&\quad\times
\Delta^0_{\bla}(t^{n-1}t_1/t_2\vert t^n,t^{n-1}t_1t_4,t^{n-1}t_1t_5;t;p,q) 
R_{\bla}^{\ast}(t_3\spec{\boldsymbol{0}}_{n;t;p,q}/\zeta';
t_1 \zeta',t_2\zeta';t;p,q)
\notag \\[2mm]
&\quad\times\Delta^0_{\bmu}(t^{n-2}t_3t_4t_5/t_6\vert 
t^{n-1}t_3t_4,t^{n-1}t_3t_5,t^{n-1}t_4t_5;t;p,q) \\
&\quad\times \sum_{\bnu\in\mathscr{P}_n^2}
\obinomE{\bmu}{\bnu}_{[t^{n-2}t_3t_4t_5/t_6,t_4t_5/t]
(pqt_3/tt_6);t;p,q} 
R_{\bnu}^{\ast}( t_1\spec{\bla}_{n;t;p,q}/\zeta;t_3 \zeta,
t_6 \zeta;t;p,q).
\end{align*}
The sum over $\bnu$ can be carried out by \eqref{Eq_CC} with
$(a,a',b)\mapsto (t_3t_4t_5\zeta/t,t_3\zeta,t_6\zeta)$, so that
\begin{align*}
I_{\bla\bmu}&=S_n(t_1,t_2,t_3,t_4,t_5,t_6;t;p,q) \\[2mm]
&\quad\times
\Delta^0_{\bla}(t^{n-1}t_1/t_2\vert t^n,t^{n-1}t_1t_4,t^{n-1}t_1t_5;t;p,q) 
R_{\bla}^{\ast}(t_3\spec{\boldsymbol{0}}_{n;t;p,q}/\zeta';
t_1 \zeta',t_2\zeta';t;p,q)
\notag \\[2mm]
&\quad\times\Delta^0_{\bmu}(t^{n-2}t_3t_4t_5/t_6\vert 
t^{n-1}t_3t_4,t^{n-1}t_3t_5,t^{n-1}t_4t_5;t;p,q) \\[2mm]
&\quad\times 
R_{\bmu}^{\ast}(t_1\spec{\bla}_{n;t;p,q}/\zeta;
t_3t_4t_5\zeta/t,t_6 \zeta;t;p,q).
\end{align*}
We finally observe that 
\[
t^n(t_3t_4t_5\zeta/t)(t_6 \zeta)=t^{2n-2}t_1t_2t_3t_4t_5t_6=pq
\]
so that the interpolation function in the last line is of Cauchy type
and hence factors by \eqref{Eq_factorisation}. 
Thus
\begin{align*}
I_{\bla\bmu}&=S_n(t_1,t_2,t_3,t_4,t_5,t_6;t;p,q) \\[2mm]
&\quad\times
\Delta^0_{\bla}(t^{n-1}t_1/t_2\vert t^n,t^{n-1}t_1t_4,t^{n-1}t_1t_5;t;p,q) 
R_{\bla}^{\ast}(t_3\spec{\boldsymbol{0}}_{n;t;p,q}/\zeta';
t_1 \zeta',t_2\zeta';t;p,q)
\notag \\[2mm]
&\quad\times\Delta^0_{\bmu}(t^{n-2}t_3t_4t_5/t_6\vert 
t^{n-1}t_3t_4,t^{n-1}t_3t_5,t^{n-1}t_4t_5;t;p,q) \\[2mm]
&\quad\times 
\frac{\Delta^0_{\bmu}(t^{n-2}t_3t_4t_5/t_6\vert
t^{n-2}t_1t_3t_4t_5 \spec{\boldsymbol{\bla}}_{n;t;p,q})}
{\Delta^0_{\bmu}(t^{n-2}t_3t_4t_5/t_6\vert
t^{n-1}t_1t_3t_4t_5 \spec{\boldsymbol{\bla}}_{n;t;p,q})},
\end{align*}
where we have again used \eqref{Eq_balancing} and
\eqref{Eq_Delta_sym}.
Using the principal specialisation formula \eqref{Eq_R-PS} completes the 
proof.

\subsection{Proof of Corollary~\ref{Cor_Mac-limit}}

Throughout the proof we used condensed notation for theta and elliptic
gamma functions:
\begin{align*}
\theta(z_1,\dots,z_k;p)&=\theta(z_1;p)\cdots\theta(z_k;p) \\
\Gamma(z_1,\dots,z_k;p,q)&=\Gamma(z_1;p,q)\cdots\Gamma(z_k;p,q).
\end{align*}

Denote the integral identity obtained from \eqref{Eq_eAFLT} by 
restricting $\bla$ and $\bmu$ to $\bla=(\la,0)$ and $\bmu=(\mu,0)$ as
\[
\mathscr{L}_{\la,\mu}(t_1,\dots,t_6;q,t;p)=
\mathscr{R}_{\la,\mu}(t_1,\dots,t_6;q,t;p),
\]
where $t^{2n-2}t_1t_2t_3t_4t_5t_6=pq$.
In order to take the $p\to 0$ limit we adopt the symmetry breaking procedure
of \cite{Rains09} and multiply the integrand of $\mathscr{L}_{\la,\mu}$ by
\cite[Corollary~1.2]{Rains06}
\[
1=\sum_{\sigma\in\{\pm 1\}^n} \prod_{i=1}^n
\frac{\theta(t_1 z_i^{\sigma_i},t_3 z_i^{\sigma_i},t_4 z_i^{\sigma_i},
t^{n-1}t_1t_3t_4z_i^{-\sigma_i};q)}
{\theta(z_i^{2\sigma_i},t^{i-1}t_1t_3,t^{i-1}t_1t_4,t^{i-1}t_3t_4;q)}
\prod_{1\leq i<j\leq n} \frac{\theta(t z_i^{\sigma_i}z_j^{\sigma_j};q)}
{\theta(z_i^{\sigma_i}z_j^{\sigma_j};q)}.
\]
By $\mathrm{BC}_n$ symmetry, all $2^n$ integrals that arise are 
identical.
Using $\Gamma(pz;p,q)=\theta(z;q)\Gamma(z;p,q)$ this yields
\begin{align*}
&\mathscr{L}_{\la,\mu}(t_1,\dots,t_6;q,t;p) \\
&\quad=2^n \kappa_n\int
R^{\ast}_{\la/0}\big([t^{1/2}z^{\pm}];
t^{n-1/2}t_1,t^{1/2}t_2;q,t;p\big) \\
& \quad\qquad\qquad\times
R^{\ast}_{\mu/0}
\big([t^{1/2}z^{\pm},
t^{-1/2}t_4,t^{-1/2}t_5];t^{n-3/2}t_3t_4t_5,t^{1/2}t_6;q,t;p\big) \\[1mm]
& \quad\qquad\qquad\times
\prod_{1\leq i<j\leq n} 
\frac{\Gamma(p t z_i z_j,t/z_iz_j,t(z_i/z_j)^{\pm 1};p,q)}
{\Gamma(p z_iz_j,1/z_iz_j,(z_i/z_j)^{\pm 1};p,q)} \\
& \quad\qquad\qquad\times
\prod_{i=1}^n \frac{\theta(t^{n-1}t_1t_3t_4/z_i;q)
\prod_{r\in I} \Gamma(p t_r z_i,t_r/z_i;p,q)
\prod_{r\in J} \Gamma(t_r z_i^{\pm};p,q)}
{\theta(t^{i-1}t_1t_3,t^{i-1}t_1t_4,t^{i-1}t_3t_4;q)
\Gamma(p z_i^2,z_i^{-2};p,q)} \, \frac{\dup z}{z},
\end{align*}
where $I:=\{1,3,4\}$, $J:=\{2,5,6\}$, $z=(z_1,\dots,z_n)$ and
$\dup z/z:=(\dup z_1/z_1)\cdots(\dup z_n/z_n)$.
We now scale the parameters as
\[
(t_1,t_2,t_3,t_4,t_5,t_6) \mapsto
(t_1,p^{1/2}t_2,t_3,p^{-1/4}t_4,p^{1/4}t_5,p^{1/2}t_6),
\]
resulting in the $p$-independent balancing condition 
$t^{2n-2}t_1t_2t_3t_4t_5t_6=q$.
We can now also scale the integration contour by a factor of $p^{-1/4}$
without passing over any poles.
After also replacing $z_i\mapsto p^{-1/4}z_i$, 
all elliptic gamma functions in the resulting integral 
have arguments scaling as $p^{\gamma}$ for $\gamma\in [0,1]$.
Then taking the $p\to 0$ limit using \eqref{Eq_skewR-limit} 
(with $\alpha=1/4$ and $\beta=1/2$) and finally
replacing $z\mapsto z/t_5$, we obtain
\begin{align*}
\lim_{p\to 0} & \bigg(p^{(\abs{\la}+\abs{\mu})/4}
\mathscr{L}_{\la,\mu}(t_1,p^{1/2}t_2,t_3,p^{-1/4}t_4,p^{1/4}t_5,p^{1/2}t_6) 
\prod_{i=1}^n \theta(p^{-1/4}t^{i-1}t_1t_4,p^{-1/4}t^{i-1}t_3t_4;q)\bigg) \\
&=(-t^{n-1}t_1/t_5)^{\abs{\la}} (-t^{n-2}t_3t_4)^{\abs{\mu}}
q^{n(\la')+n(\mu')} t^{-2n(\la)-2n(\mu)} \\
& \quad \times c_{\la}(q,t) c_{\mu}(q,t) \,
\frac{(q;q)_{\infty}^n}{(t;q)_{\infty}^n}
\prod_{i=1}^n \frac{1}{\theta(t^{i-1}t_1t_3;q)} \\
&\quad \times \frac{1}{n!(2\pi\iup)^n}
\int P_{\la}(z;q,t) P_{\mu}\Big(\Big[z+\frac{t-t_4t_5}{1-t}\Big];q,t\Big) \\
&\qquad\qquad\qquad\quad \times 
\prod_{i=1}^n \frac{\theta(t^{n-1}t_1t_3t_4t_5/z_i;q)}
{(t_4t_5/z_i,z_i;q)_{\infty}} 
\prod_{1\leq i<j\leq n} \frac{(z_i/z_j,z_j/z_i;q)_{\infty}}
{(tz_i/z_j,tz_j/z_i;q)_{\infty}} \, \frac{\dup z}{z},
\end{align*}
where the contour can be chosen as $\mathbb{T}^n$ provided
$\abs{t_4t_5}<1$.
Since the limit of the right-hand side of \eqref{Eq_eAFLT} is given by
\begin{align*}
\lim_{p\to 0} & \bigg(p^{(\abs{\la}+\abs{\mu})/4}
\mathscr{R}_{\la,\mu}(t_1,p^{1/2}t_2,t_3,p^{-1/4}t_4,p^{1/4}t_5,p^{1/2}t_6) 
\prod_{i=1}^n \theta(p^{-1/4}t^{i-1}t_1t_4,p^{-1/4}t^{i-1}t_3t_4;q)\bigg) \\
&=(-t^{n-1}t_1t_4)^{\abs{\la}} (-t^{n-1}t_3t_4)^{\abs{\mu}}
q^{n(\la')+n(\mu')} t^{-n(\la)-n(\mu)} \\
&\quad \times 
\frac{(t^n,t^{n-1}t_1t_3;q,t)_{\la}(t^{n-1}t_4t_5;q,t)_{\mu}}
{(t^{2n-m-2}t_1t_3t_4t_5;q,t)_{\la}}
\prod_{i=1}^n \frac{(t^{2n-i-1}t_1t_3t_4t_5;q)_{\infty}}
{(t^i,t^{i-1}t_1t_3,t^{i-1}t_4t_5;q)_{\infty}} \\
&\quad \times 
\prod_{i=1}^n \prod_{j=1}^m
\frac{(t^{2n-i-j-1}t_1t_3t_4t_5;q)_{\la_i+\mu_j}}
{(t^{2n-i-j}t_1t_3t_4t_5;q)_{\la_i+\mu_j}},
\end{align*}
the claim follows with $(a,b)=(t^{n-1}t_1t_3t_4t_5,t_4t_5)$.

\subsection{An equivalent form of Corollary~\ref{Cor_Mac-limit}}

For functions $f,g:\mathbb{C}^n\to\mathbb{C}$, symmetric in the 
$n$ variables, define the scalar product \cite[p.~372]{Macdonald95}
\begin{equation}\label{Eq_scalar-product}
\langle f,g\rangle'_n:=
\frac{1}{n!(2\pi\iup)^n} \Int_{\mathbb{T}^n} f(z) g(z^{-1})
\prod_{1\leq i<j\leq n} \frac{(z_i/z_j,z_j/z_i;q)_{\infty}}
{(tz_i/z_j,tz_j/z_i;q)_{\infty}} \, 
\frac{\dup z_1}{z_1}\cdots \frac{\dup z_n}{z_n}.
\end{equation}
Then, for $\la\in\mathscr{P}_n$, \cite[pp.~369\&370]{Macdonald95}
\begin{equation}\label{Eq_ortho}
\big\langle P_{\la}(q,t),Q_{\mu}(q,t)\big\rangle'_n=
\delta_{\la,\mu} \, \frac{(t^n;q,t)_{\la}}{(qt^{n-1};q,t)_{\la}}
\prod_{i=1}^n\frac{(t,qt^{i-1};q)_{\infty}}{(q,t^i;q)_{\infty}}.
\end{equation}
In terms of $\langle\cdot,\cdot\rangle'_n$
Corollary~\ref{Cor_Mac-limit} may be rewritten as follows.

\begin{corollary}
For $\la,\mu\in\mathscr{P}$ and $a,b,q,t\in\mathbb{C}$ such 
that $\abs{b},\abs{q},\abs{t}<1$,
\begin{align*}
&\bigg\langle 
P_{\la}(z;q,t) \prod_{i=1}^n \frac{(az_i;q)_{\infty}}{(bz_i;q)_{\infty}},
Q_{\mu}\bigg(\bigg[z+\frac{t-b}{1-t}\bigg];q,t\bigg)
\prod_{i=1}^n \frac{(qz_i/a;q)_{\infty}}{(z_i;q)_{\infty}}
\bigg\rangle'_n \\
&\quad= t^{(1-n)\abs{\mu}} 
P_{\la}\bigg(\bigg[\frac{1-t^n}{1-t}\bigg];q,t\bigg)
Q_{\mu}\bigg(\bigg[\frac{1-bt^{n-1}}{1-t}\bigg];q,t\bigg) \\[1mm]
&\quad\quad \times 
\frac{(qt^m/a;q,t)_{\la}}{(bqt^{n-1}/a;q,t)_{\la}}
\prod_{i=1}^n 
\frac{(t,at^{i-1},qt^{i-1}b/a;q)_{\infty}}
{(q,t^i,bt^{i-1};q)_{\infty}}
\prod_{i=1}^n \prod_{j=1}^m
\frac{(qt^{j-i}/a;q)_{\la_i-\mu_j}}{(qt^{j-i+1}/a;q)_{\la_i-\mu_j}}.
\end{align*}
where $m$ is an arbitrary integer such that $m\geq l(\mu)$.
\end{corollary}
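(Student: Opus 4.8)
The plan is to deduce this corollary directly from Corollary~\ref{Cor_Mac-limit}. First I would unfold the definition \eqref{Eq_scalar-product} of $\langle\cdot,\cdot\rangle'_n$: with $f(z)=P_{\la}(z;q,t)\prod_{i=1}^n(az_i;q)_{\infty}/(bz_i;q)_{\infty}$ and $g(z)=Q_{\mu}([z+(t-b)/(1-t)];q,t)\prod_{i=1}^n(qz_i/a;q)_{\infty}/(z_i;q)_{\infty}$, the scalar product is the torus integral of $f(z)\hspace{1pt}g(z^{-1})$ against the Vandermonde density. Writing $Q_{\mu}=b_{\mu}(q,t)P_{\mu}$ via \eqref{Eq_hook} pulls the scalar $b_{\mu}(q,t)$ out front, and the single-variable weight becomes $\prod_i (az_i;q)_{\infty}(qz_i^{-1}/a;q)_{\infty}/\big((bz_i;q)_{\infty}(z_i^{-1};q)_{\infty}\big)$, which is precisely the weight $\prod_i(a/z_i,qz_i/a;q)_{\infty}/(b/z_i,z_i;q)_{\infty}$ of Corollary~\ref{Cor_Mac-limit} after the reflection $z_i\mapsto z_i^{-1}$.

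Next I would carry out the substitution $z_i\mapsto z_i^{-1}$ on the whole integral. Since both the Vandermonde density $\prod_{i<j}(z_i/z_j,z_j/z_i;q)_{\infty}/(tz_i/z_j,tz_j/z_i;q)_{\infty}$ and the measure $\dup z_1/z_1\cdots\dup z_n/z_n$ are invariant under this reflection, the scalar product equals $b_{\mu}(q,t)$ times an integral whose integrand agrees factor-for-factor with that of Corollary~\ref{Cor_Mac-limit}, except that $P_{\la}(z)$ has turned into $P_{\la}(z^{-1})$ (the shifted polynomial $P_{\mu}[z+(t-b)/(1-t)]$ and the full weight now sit in the standard Corollary~\ref{Cor_Mac-limit} form). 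I checked on the $n=1$, $\la=(1)$ case that this reflection is genuine: the two integrals differ, so the identity is not a term-by-term rewriting but really an evaluation of the reflected integral.

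The crux is therefore to evaluate the Corollary~\ref{Cor_Mac-limit} integral with $P_{\la}(z)$ replaced by $P_{\la}(z^{-1})$. For this I would invoke the complementation symmetry of the Macdonald polynomials, $P_{\la}(z_1^{-1},\dots,z_n^{-1};q,t)=(z_1\cdots z_n)^{-N}P_{\widehat{\la}}(z_1,\dots,z_n;q,t)$ with $\widehat{\la}=(N-\la_n,\dots,N-\la_1)$ and $N\geq\la_1$, exactly the analogue of the Jack complementation used in the introduction. The monomial $(z_1\cdots z_n)^{-N}$ is then absorbed into the single-variable weight by re-indexing the infinite $q$-products (shifting $a$ and $b$ by powers of $q$), turning the integral into a second instance of Corollary~\ref{Cor_Mac-limit} for the partition $\widehat{\la}$ with shifted parameters.

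The main obstacle is not the reflection itself but the ensuing simplification: one must feed $\widehat{\la}$ and the shifted parameters through the closed form of Corollary~\ref{Cor_Mac-limit} and verify that all dependence on the auxiliary bound $N$ cancels, leaving exactly the stated right-hand side. This is a lengthy but routine manipulation of $q$-shifted factorials, in which the specialisation formulas \eqref{Eq_a-spec} and \eqref{Eq_princ-spec} convert the complemented value $P_{\widehat{\la}}[(1-t^n)/(1-t)]$ back into $P_{\la}[(1-t^n)/(1-t)]$ and the double products telescope into $\prod_{i,j}(qt^{j-i}/a;q)_{\la_i-\mu_j}/(qt^{j-i+1}/a;q)_{\la_i-\mu_j}$. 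As a consistency check, the $\la=0$ case needs no complementation and reduces at once to Corollary~\ref{Cor_Mac-limit}; alternatively, the whole identity can be obtained directly from the Macdonald $_1\psi_1$ summation together with the orthogonality \eqref{Eq_ortho}, just as indicated there for the $\la=\mu=0$ special case, which would sidestep the complementation bookkeeping at the cost of a Pieri-type expansion of $P_{\la}\cdot\prod_i(az_i;q)_{\infty}/(bz_i;q)_{\infty}$.
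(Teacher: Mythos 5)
Your proposal is correct and follows essentially the same route as the paper: the paper proves the corollary by taking Corollary~\ref{Cor_Mac-limit} with $\la$ replaced by its complement $\hat{\la}$ in $(N^n)$ and $a\mapsto aq^{-N}$, using $P_{\hat{\la}}(x;q,t)=(x_1\cdots x_n)^N P_{\la}(x^{-1};q,t)$ to produce the reflected integrand, verifying independence of $N$, and finally substituting $z\mapsto z^{-1}$. The one small correction is that the monomial $(z_1\cdots z_n)^{\pm N}$ is absorbed by shifting $a$ alone, via $(aq^{-N},q^{N+1}/a;q)_{\infty}=(-a)^N q^{-\binom{N+1}{2}}(a,q/a;q)_{\infty}$; shifting $b$ as well would introduce uncancelled factors since the $b$-dependent and $z_i$-only factors sit in the denominator.
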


For $b=t$ this is equivalent to \cite[Theorem 1.7]{Warnaar05},
which, as shown on page~261 of that paper, is a generalisation
of \eqref{Eq_ortho}.

\begin{proof}
Given a partition $\la\subseteq (N^n)$, we denote the complement of
$\la$ with respect to $(N^n)$ by $\hat{\la}$, i.e., 
$\hat{\la}=(N-\la_n,\dots,N-\la_1)$.

In Corollary~\ref{Cor_Mac-limit} we replace $\la$ by its complement
with respect to $(N^n)$, where $N$ is an arbitrary integer such that 
$\la_1\leq N$, and also scale $a\mapsto aq^{-N}$.
Using \cite[Equation (4.3)]{BF99}
\[
P_{\hat{\la}}(x;q,t)=(x_1\cdots x_n)^N P_{\la}(x^{-1};q,t)
\]
and
\[
(aq^{-N},q^{N+1}/a;q)_{\infty}=(-a)^N q^{-\binom{N+1}{2}}(a,q/a;q)_{\infty}
\]
this yields
\begin{align*}
&\frac{1}{n!(2\pi\iup)^n} \Int_{\mathbb{T}^n} P_{\la}\big(z^{-1};q,t\big) 
P_{\mu}\bigg(\bigg[z+\frac{t-b}{1-t}\bigg];q,t\bigg) \\[-1mm]
&\qquad\qquad\quad \times 
\prod_{i=1}^n \frac{(a/z_i,qz_i/a;q)_{\infty}}
{(b/z_i,z_i;q)_{\infty}} 
\prod_{1\leq i<j\leq n} \frac{(z_i/z_j,z_j/z_i;q)_{\infty}}
{(tz_i/z_j,tz_j/z_i;q)_{\infty}} \, 
\frac{\dup z_1}{z_1}\cdots \frac{\dup z_n}{z_n} \\[1mm]
&\quad=b^{-\abs{\la}} t^{\abs{\mu}-(n-1)\abs{\la}} 
P_{\la}\bigg(\bigg[\frac{1-t^n}{1-t}\bigg];q,t\bigg)
P_{\mu}\bigg(\bigg[\frac{1-bt^{n-1}}{1-t}\bigg];q,t\bigg) \\[1mm]
&\quad\quad \times 
\prod_{i=1}^n 
\frac{(t,at^{i-m-1}q^{-\la_i},at^{i-n}/b,qt^{i-1}b/a;q)_{\infty}}
{(q,t^i,bt^{i-1},at^{i-n}q^{-\la_i}/b;q)_{\infty}}
\prod_{i=1}^n \prod_{j=1}^m
\frac{(at^{i-j}q^{-\la_i+\mu_j};q)_{\infty}}
{(at^{i-j-1}q^{-\la_i+\mu_j};q)_{\infty}},
\end{align*}
which is independent of $N$.
By
\begin{multline*}
\prod_{i=1}^n \frac{(at^{i-m-1}q^{-\la_i};q)_{\infty}}
{(at^{i-n}q^{-\la_i}/b;q)_{\infty}}
\prod_{i=1}^n \prod_{j=1}^m
\frac{(at^{i-j}q^{-\la_i+\mu_j};q)_{\infty}}
{(at^{i-j-1}q^{-\la_i+\mu_j};q)_{\infty}} \\
=(bt^{n-1})^{\abs{\la}} t^{-n\abs{\mu}}
\frac{(qt^m/a;q,t)_{\la}}{(bqt^{n-1}/a;q,t)_{\la}}
\prod_{i=1}^n \frac{(at^{i-1};q)_{\infty}}
{(at^{i-n}/b;q)_{\infty}}
\prod_{i=1}^n \prod_{j=1}^m
\frac{(qt^{j-i}/a;q)_{\la_i-\mu_j}}
{(qt^{j-i+1}/a;q)_{\la_i-\mu_j}}
\end{multline*}
and the substitution $z\mapsto z^{-1}$ in the integral, the claim follows.
\end{proof}

\section{Open problems}\label{Sec_open}

To conclude the paper we will discuss a number of open problems.

\subsection{Generalising Theorem~\ref{Thm_nplusone} to 
$\boldsymbol{\gamma\neq 1}$}

The main open problem is to generalise Theorem~\ref{Thm_nplusone}
to the case of Jack polynomials.
Using \eqref{Eq_Schur-duality} this theorem can be rewritten as
\begin{align}\label{Eq_gamma-one}
& \bigg\langle 
\bigg(\prod_{r=1}^n 
s_{\lar{r}}\big[\tar{r}-\tar{r-1}\big] \bigg)
s_{\lar{n+1}}\big[\tar{n}+\beta-1\big]
\bigg\rangle_{\alpha_1,\dots,\alpha_n,\beta}^{k_1,\dots,k_n} \\[2mm]
&=\bigg(\prod_{r=1}^n s_{\lar{r}}[k_r-k_{r-1}] \bigg)
s_{\lar{n+1}}[k_n+\beta-1] 
\prod_{\substack{r,s=1\\[1pt] r\neq s}}^{n+1}
\prod_{i=1}^{\ell_r}
\frac{\big(\varepsilon_r (A_{r,s}-k_{s-1}+k_s)-i+1\big)_{\lar{r}_i}}
{\big(\varepsilon_r(A_{r,s}+\varepsilon_s \ell_s)-i+1\big)_{\lar{r}_i}} 
\notag \\
&\quad\times
\prod_{1\leq r<s\leq n+1}
\prod_{i=1}^{\ell_r}\prod_{j=1}^{\ell_s}
\frac{(A_{r,s}-i+\varepsilon_s j+1)_{\lar{r}_i-\varepsilon_s\lar{s}_j}}
{(A_{r,s}-i+\varepsilon_s (j-1)+1)_
{\lar{r}_i-\varepsilon_s\lar{s}_j}}, \notag
\end{align}
where $\ell_1,\dots,\ell_{n+1}$ are arbitrary integers
such that $\ell_r\geq l(\lar{r})$ for $1\leq r\leq n+1$,
$\varepsilon_1=\dots=\varepsilon_n=1$, $\varepsilon_{n+1}=-1$,
$k_0:=0$ and $k_{n+1}:=1-\beta$.
It is not difficult to define a function, say
\[
R^{k_1,\dots,k_n}_{\lar{1},\dots,\lar{n+1}}
(\alpha_1,\dots,\alpha_n,\beta;\gamma),
\]
such that for $\gamma=1$ it gives the right-hand side of
\eqref{Eq_gamma-one} and such that for $\lar{1}=\la$,
$\lar{2}=\dots=\lar{n}=0$ and $\lar{n+1}=\mu$ it yields the 
right-hand side of \eqref{Eq_An-AFLT}.
To describe this function, we generalise our earlier definition
\eqref{Eq_A-def} of $A_r$ and $A_{r,s}$ to include $\gamma$:
\[
A_r:=\alpha_r+\dots+\alpha_n+(k_r-k_{r-1}+r)\gamma
\quad\text{and}\quad
A_{r,s}:=A_r-A_s,
\]
for $1\leq r,s\leq n+1$.
Hence $A_{r,s}=-A_{r,s}$ and
\begin{equation}\label{Eq_Ars}
A_{r,s}=\alpha_r+\dots+\alpha_{s-1}+(k_r-k_{r-1}-k_s+k_{s-1}+r-s)\gamma
\end{equation}
for $1\leq r\leq s\leq n+1$.

\begin{lemma}\label{Lem_RHS}
Let $A_{r,s}$ be as in \eqref{Eq_Ars}, where
$0=k_0\leq k_1\leq k_2\leq\cdots\leq k_n$ are integers and 
$k_{n+1}:=1-\beta/\gamma$.
Set $\varepsilon_1=\dots=\varepsilon_n=1$ and $\varepsilon_{n+1}=-1$,
and define
\begin{align}\label{Eq_RHS}
&R^{k_1,\dots,k_n}_{\lar{1},\dots,\lar{n+1}}
(\alpha_1,\dots,\alpha_n,\beta;\gamma) \\
&\quad :=\bigg(\prod_{r=1}^n P^{(1/\gamma)}_{\lar{r}}[k_r-k_{r-1}]\bigg)
P^{(1/\gamma)}_{\lar{n+1}}[k_n+\beta/\gamma-1] \notag \\
&\qquad\times
\prod_{1\leq r<s\leq n+1}
\frac{(-\varepsilon_s A_{r,s}-
\varepsilon_s(k_{r-1}-k_r)\gamma;\gamma)_{\lar{s}}} 
{(-\varepsilon_s A_{r,s}+\varepsilon_s \ell_r\gamma;\gamma)_{\lar{s}}} \notag\\
&\qquad\times
\prod_{1\leq r<s\leq n}
\Bigg( \frac{(A_{r,s}-(k_{s-1}-k_s)\gamma;\gamma)_{\lar{r}}}
{(1+A_{r,s}+(\varepsilon_s\ell_s-1)\gamma;\gamma)_{\lar{r}}} 
\prod_{i=1}^{\ell_r} \prod_{j=1}^{\ell_s}
\frac{(1+A_{r,s}+(j-i)\gamma)_{\lar{r}_i-\lar{s}_j}}
{(1+A_{r,s}+(j-i-1)\gamma)_{\lar{r}_i-\lar{s}_j}} \Bigg) \notag \\
&\qquad\times
\prod_{r=1}^n \Bigg(
\frac{(A_{r,n+1}-(k_n-k_{n+1})\gamma;\gamma)_{\lar{r}}}
{(A_{r,n+1}-\ell_{n+1}\gamma;\gamma)_{\lar{r}}} 
\prod_{i=1}^{\ell_r} \prod_{j=1}^{\ell_{n+1}}
\frac{(A_{r,n+1}-(i+j-1)\gamma)_{\lar{r}_i+\lar{n+1}_j}}
{(A_{r,n+1}-(i+j-2)\gamma)_{\lar{r}_i+\lar{n+1}_j}} \Bigg), \notag
\end{align}
where $\ell_1,\dots,\ell_{n+1}$ are arbitrary integers such that
$\ell_r\geq l(\lar{r})$ and $\ell_1\leq k_1$.
Then 
\[
R^{k_1,\dots,k_n}_{\lar{1},\dots,\lar{n+1}}
(\alpha_1,\dots,\alpha_n,\beta;\gamma)
\]
is well-defined \textup{(}i.e., independent of the choice of 
the $\ell_r$\textup{)},
\[
R^{0,k_2,\dots,k_n}_{\lar{1},\dots,\lar{n+1}}
(\alpha_1,\dots,\alpha_n,\beta;\gamma)=
\begin{cases}
R^{k_2,\dots,k_n}_{\lar{2},\dots,\lar{n+1}}
(\alpha_2,\dots,\alpha_n,\beta;\gamma)
& \text{if $\lar{1}=0$}, \\[2mm]
0 & \text{otherwise}
\end{cases}
\]
and 
\[
R^{k_1,\dots,k_n}_{\la,\!\!\!\!
\underbrace{\scriptstyle{0,\dots,0}}_{n-1 \text{ times}}
\!\!\!\!,\mu}(\alpha_1,\dots,\alpha_n,\beta;\gamma)
\quad\text{and}\quad
R^{k_1,\dots,k_n}_{\lar{1},\dots,\lar{n+1}}
(\alpha_1,\dots,\alpha_n,\beta;1)
\]
agree with the right-hand side of the $\A_n$ AFLT integral \eqref{Eq_An-AFLT}
and the right-hand side of \eqref{Eq_gamma-one} respectively.
Moreover, when $l(\lar{1})<k_1$ and $k_1,\dots,k_n\geq 1$,
\begin{align}\label{Eq_recR}
& R^{k_1,\dots,k_n}_{\lar{1},\dots,\lar{n+1}}
(\alpha_1,\alpha_2,\dots,\alpha_n,\beta;\gamma) \\
&\quad = R^{k_1-1,\dots,k_n-1}_{\lar{1},\dots,\lar{n+1}}
(\alpha_1+\gamma,\alpha_2,\dots,\alpha_n,\beta+\gamma,\gamma) 
\prod_{s=1}^{n+1} \frac{(-\varepsilon_s A_{1,s}+\varepsilon_s k_1\gamma;
\gamma)_{\lar{s}}}
{(-\varepsilon_s A_{1,s}+\varepsilon_s (k_1-1)\gamma;\gamma)_{\lar{s}}}. 
\notag 
\end{align}
\end{lemma}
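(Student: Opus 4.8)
The plan is to verify all four assertions of Lemma~\ref{Lem_RHS} by direct computation from the explicit product \eqref{Eq_RHS}, using only the evaluation \eqref{Eq_Jack-z} of the Jack polynomial at a binomial element, the definition $(b;\gamma)_{\la}=\prod_{i\geq1}(b+(1-i)\gamma)_{\la_i}$, and elementary identities such as $(b)_{m+n}=(b)_m(b+m)_n$. No structural input is needed beyond \eqref{Eq_RHS}; the content lies entirely in the bookkeeping.

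\emph{Well-definedness.} First I would prove invariance of \eqref{Eq_RHS} under $\ell_r\mapsto\ell_r+1$ for each $r$, which amounts to adjoining a zero part $\lar{r}_{\ell_r+1}=0$. For $r=n+1$ this is a clean telescoping: the new $j=\ell_{n+1}+1$ term of the double product in the last line of \eqref{Eq_RHS} equals, via $(A_{r,n+1}-(i+\ell_{n+1})\gamma)_{\lar{r}_i}/(A_{r,n+1}-(i+\ell_{n+1}-1)\gamma)_{\lar{r}_i}$, the reciprocal of the change produced in the prefactor $1/(A_{r,n+1}-\ell_{n+1}\gamma;\gamma)_{\lar{r}}$ when that parameter is shifted. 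For $1\leq r\leq n$ the variable $\ell_r$ plays three roles simultaneously — as the parameter $\varepsilon_s\ell_r\gamma$ in the second line of \eqref{Eq_RHS}, as $\varepsilon_s\ell_s\gamma$ in the third line, and as an upper limit of several products — so the cancellation is a longer, though still finite and mechanical, telescoping across these lines. I expect this to be the main obstacle: tracking the $\varepsilon_s$ signs and the interlocking roles of $\ell_r$ is where an error is most likely.

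\emph{Rank reduction and the two specialisations.} When $k_1=0$ the constraint $\ell_1\leq k_1$ forces $\ell_1=0$; since $P^{(1/\gamma)}_{\lar{1}}[k_1]=P^{(1/\gamma)}_{\lar{1}}[0]$ vanishes for $\lar{1}\neq0$ (the factor $(0;\gamma)_{\lar{1}}$ contains $(0)_{\lar{1}_1}=0$), the right-hand side is $0$ unless $\lar{1}=0$. With $\lar{1}=0$ and $\ell_1=0$ every product indexed by $r=1$ collapses to $1$ — in particular the $r=1$ factors of the second line are $(-\varepsilon_s A_{1,s};\gamma)_{\lar{s}}/(-\varepsilon_s A_{1,s};\gamma)_{\lar{s}}=1$ because $k_0=k_1=0$ — and a shift $r\mapsto r-1$ together with $A^{(n)}_{r,s}=A^{(n-1)}_{r-1,s-1}$ identifies the remainder with $R^{k_2,\dots,k_n}_{\lar{2},\dots,\lar{n+1}}$. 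The two specialisations follow by matching factors: for $\lar{2}=\dots=\lar{n}=0$ one reads off the right-hand side of \eqref{Eq_An-AFLT}, while at $\gamma=1$ the Jack factors become the Schur specialisations $s_{\lar{r}}[k_r-k_{r-1}]$, and, since the $\gamma$-dependent $A_{r,s}$ of \eqref{Eq_Ars} reduces to \eqref{Eq_A-def}, a regrouping of the single-index products (using the $\varepsilon_s$ signs to pass between the $r<s$ and $r>s$ halves of the $r\neq s$ product) reproduces the right-hand side of \eqref{Eq_gamma-one}.

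\emph{The recursion.} The key observation is that under $(k_1,\dots,k_n)\mapsto(k_1-1,\dots,k_n-1)$, $\alpha_1\mapsto\alpha_1+\gamma$, $\beta\mapsto\beta+\gamma$ one has $k_{n+1}=1-\beta/\gamma\mapsto k_{n+1}-1$, so that every $k_j$ with $1\leq j\leq n+1$ drops by one while $k_0=0$ stays fixed; a direct check then gives $A_r\mapsto A_r$ for all $r$, whence every $A_{r,s}$ is invariant. Consequently the third and fourth lines of \eqref{Eq_RHS}, together with all Jack factors except $P^{(1/\gamma)}_{\lar{1}}[k_1]$, are unchanged, and the only surviving variations are $P^{(1/\gamma)}_{\lar{1}}[k_1]\mapsto P^{(1/\gamma)}_{\lar{1}}[k_1-1]$ and the $r=1$ terms of the second line, which change by $\prod_{s=2}^{n+1}(-\varepsilon_s A_{1,s}+\varepsilon_s(k_1-1)\gamma;\gamma)_{\lar{s}}/(-\varepsilon_s A_{1,s}+\varepsilon_s k_1\gamma;\gamma)_{\lar{s}}$ (because $k_0-k_1=-k_1\mapsto-(k_1-1)$). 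Finally \eqref{Eq_Jack-z} gives $P^{(1/\gamma)}_{\lar{1}}[k_1-1]/P^{(1/\gamma)}_{\lar{1}}[k_1]=((k_1-1)\gamma;\gamma)_{\lar{1}}/(k_1\gamma;\gamma)_{\lar{1}}$, which is exactly the missing $s=1$ factor since $\varepsilon_1=1$ and $A_{1,1}=0$; assembling these yields \eqref{Eq_recR}. The hypotheses $l(\lar{1})<k_1$ and $k_1,\dots,k_n\geq1$ guarantee, respectively, that $P^{(1/\gamma)}_{\lar{1}}[k_1-1]\neq0$ and that the shifted values $k_j-1$ remain admissible.
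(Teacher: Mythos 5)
Your proposal is correct and follows essentially the same route as the paper's own proof: well-definedness by a telescoping shift of $\ell_r$ against a trailing zero part (the paper decrements rather than increments, using $(a)_{-n}/(b)_{-n}=(1-b)_n/(1-a)_n$ for the negative-index terms), the recursion \eqref{Eq_recR} via the invariance of all $A_{r,s}$ and $k_{n+1}\mapsto k_{n+1}-1$ under the parameter shift together with \eqref{Eq_Jack-z} to identify the $s=1$ factor, and the remaining specialisations by direct matching (which the paper simply leaves to the reader). No discrepancies of substance.
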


We note that the assumption that $l(\lar{1})<k_1$ is not actually
a restriction since
\[
R^{k_1,\dots,k_n}_{\lar{1},\dots,\lar{n+1}}
(\alpha_1,\dots,\alpha_n,\beta;\gamma)
\]
depends on $\lar{1}+\alpha_1$ only, where, for $m$ a scalar,
$\la+m:=(\la_1+m,\la_2+m,\dots)$.
For $n=2$ and $\lar{2}=0$ the recursion \eqref{Eq_recR} agrees with 
\cite[Equation (A.15)]{FL12} (provided $[k_2\gamma]/[(k_2-1)\gamma]$
in the latter is corrected to $[k_2\gamma]_{\la}/[(k_2-1)\gamma]_{\la}$).
For $n>2$ and $\lar{2}=\dots=\lar{n}=0$, however, \eqref{Eq_recR} and
the recursion at the bottom of page~36 of \cite{FL12} are inconsistent.

\begin{proof}[Proof of Lemma~\ref{Lem_RHS}]
To see that the right-hand side of \eqref{Eq_RHS} is independent of the 
$\ell_r$, fix a $t$ such that $1\leq t\leq n+1$. 
Then, assuming that $\lar{t}_{\ell_t}=0$, it follows from elementary 
manipulations and the use of
\[
\frac{(a)_{-n}}{(b)_{-n}}=\frac{(1-b)_n}{(1-a)_n}
\]
that the right-hand side of \eqref{Eq_RHS} reduces to the same expression with
$\ell_t\mapsto \ell_t-1$.

For \eqref{Eq_recR}, write
\[
k_{n+1}=k_{n+1}(\beta;\gamma) \quad\text{and}\quad
A_{r,s}=A_{r,s}^{k_1,\dots,k_n}(\alpha_1,\dots,\alpha_n,\beta;\gamma).
\]
It is then readily checked that 
\begin{gather*}
k_{n+1}(\beta+\gamma;\gamma)=k_{n+1}(\beta;\gamma)-1 \\
A_{r,s}^{k_1-1,\dots,k_n-1}(\alpha_1+\gamma,\alpha_2,\dots,\alpha_n,
\beta+\gamma;\gamma)=
A_{r,s}^{k_1,\dots,k_n}(\alpha_1,\dots,\alpha_n,\beta;\gamma).
\end{gather*}
Hence, for $l(\lar{1})\leq k_1-1$,
\[
\frac{R^{k_1,\dots,k_n}_{\lar{1},\dots,\lar{n+1}}
(\alpha_1,\alpha_2,\dots,\alpha_n,\beta;\gamma)}
{R^{k_1-1,\dots,k_n-1}_{\lar{1},\dots,\lar{n+1}}
(\alpha_1+\gamma,\alpha_2,\dots,\alpha_n,\beta+\gamma,\gamma)}
=\frac{P_{\lar{1}}^{(1/\gamma)}[k_1]}{P_{\lar{1}}^{(1/\gamma)}[k_1-1]}
\prod_{s=2}^{n+1} \frac{(-\varepsilon_s A_{1,s}+\varepsilon_s k_1\gamma;
\gamma)_{\lar{s}}}
{(-\varepsilon_s A_{1,s}+\varepsilon_s (k_1-1)\gamma;\gamma)_{\lar{s}}}.
\]
By the specialisation formula \eqref{Eq_Jack-z} the recursion \eqref{Eq_recR}
follows.

The remaining claims of the lemma are immediate and left to the reader.
\end{proof}

An obvious guess would be that
\begin{align}\label{Eq_guess}
&\bigg\langle \bigg(\prod_{r=1}^n 
P^{(1/\gamma)}_{\lar{r}}\big[\tar{r}-\tar{r-1}\big] \bigg)
P^{(1/\gamma)}_{\lar{n+1}}\big[\tar{n}+\beta/\gamma-1\big]
\bigg\rangle_{\alpha_1,\dots,\alpha_n,\beta;\gamma}^{k_1,\dots,k_n} \\[1mm]
&\qquad\qquad =R^{k_1,\dots,k_n}_{\lar{1},\dots,\lar{n+1}}
(\alpha_1,\dots,\alpha_n,\beta;\gamma), \notag
\end{align}
but this is easily shown to be false unless $\lar{2}=\dots=\lar{n}=0$.
For example, by a direct computation using Theorem~\ref{Thm_An-AFLT} and
the Jack polynomial limit of \eqref{Eq_Prxminy} given by
\[
P_{(r)}^{(1/\gamma)}[x-y]=x^r \, {_2F_1}
\bigg(\genfrac{}{}{0pt}{}{-\gamma,-r}{1-\gamma-r};\frac{y}{x}\bigg),
\]
it follows that for $k_1=\dots=k_n=1$ and 
\[
\big(\lar{1},\dots,\lar{n},\lar{n+1}\big)=\big((u_1),\dots,(u_n),\mu\big)
\]
the left-hand side of \eqref{Eq_guess} evaluates as a product of $n-1$
terminating $_3F_2$ series. 
Specifically,
\begin{align*}
&\bigg\langle 
\bigg(\prod_{r=1}^n 
P^{(1/\gamma)}_{(u_r)}\big[t_r-t_{r-1}\big] \bigg)
P^{(1/\gamma)}_{\mu}\big[t_n+\beta/\gamma-1\big]
\bigg\rangle_{\alpha_1-u_1,\dots,\alpha_n-u_n,\beta;\gamma}^{1,\dots,1} \\
&\qquad =P_{\mu}^{(1/\gamma)}[\beta/\gamma]\,
\frac{(\alpha_1+\dots+\alpha_n+\beta-n\gamma;\gamma)_{\mu}}
{(\alpha_1+\dots+\alpha_n+\beta-(n-1)\gamma;\gamma)_{\mu}} \\
&\quad\qquad \times 
\prod_{r=1}^n
\frac{(1-\alpha_1-\dots-\alpha_r+(r-1)\gamma)_{u_1+\dots+u_r}}
{(1-\alpha_1-\dots-\alpha_r-\beta_r+(r-\delta_{r,n})\gamma)_{u_1+\dots+u_r}} \\
&\quad\qquad \times 
\prod_{r=1}^{n-1} {_3F_2}\bigg(\genfrac{}{}{0pt}{}
{-\gamma,\alpha_1+\dots+\alpha_r-(r-1)\gamma,-u_{r+1}}
{1-\gamma-u_{r+1},1+\alpha_1+\dots+\alpha_r-r\gamma};1\bigg),
\end{align*}
where $t_0:=0$ and $\beta_1=\dots=\beta_{n-1}:=1$.
For $\gamma=1$ the $r$th $_3F_2$ series simplifies
to $\delta_{u_{r+1},0}$ in accordance with the $k_1=\dots=k_n$
case of \eqref{Eq_gamma-one}.
We do not know how to modify the product of Jack polynomials
on the left of \eqref{Eq_guess} so that equality holds.

\subsection{Generalising Theorem~\ref{Thm_An-alt} for  
$\boldsymbol{\gamma=1}$}

Another open problem is to generalise the $\gamma=1$ case of
\eqref{Eq_An-AFLT-alt} to include a product of $n$ Schur functions.
For $\beta_{n-1}+\beta_n=2$, denote by
\[ 
\big\langle \mathscr{O}\big\rangle_
{\alpha_1,\dots,\alpha_n,\beta_{n-1},\beta_n}^{k_1,\dots,k_n}
\]
the $\gamma=1$ case of the $\A_n$ Selberg average
\eqref{Eq_average-alt} (this again requires a complex
integration contour).
Then the problem is to extend the method of Section~\ref{Sec_An-AFLT-Schur}
to prove that
\begin{align*}
& \bigg\langle 
\bigg(\prod_{r=1}^{n-1} 
s_{\lar{r}}\big[\tar{r}-\tar{r-1}\big] \bigg)
s_{\lar{n}}\big[\tar{n}\big]
\bigg\rangle_{\alpha_1,\dots,\alpha_n,\beta_{n-1},\beta_n}
^{k_1,\dots,k_n} \\[2mm]
&\quad\stackrel{?}{=} \prod_{r=1}^n
\prod_{1\leq i<j\leq \ell_r}\frac{\lar{r}_i-\lar{r}_j+j-i}{j-i}
\prod_{1\leq r<s\leq n-1} \prod_{i=1}^{\ell_r}\prod_{j=1}^{\ell_s}
\frac{\lar{r}_i-\lar{s}_j+A_{r,s}+j-i}{A_{r,s}+j-i} \\
&\qquad\times
\prod_{r=1}^{n-1} \prod_{i=1}^{\ell_r}\prod_{j=1}^{k_n}
\frac{A_{r,n+1}-i-j+1}{\lar{r}_i+\lar{n}_j+A_{r,n+1}-i-j+1} \\[1mm]
&\qquad\times
\prod_{r,s=1}^{n-1}\prod_{i=1}^{\ell_r}
\frac{(A_{r,s}-k_{s-1}+k_s-i+1)_{\lar{r}_i}}
{(A_{r,s}+\ell_s-i+1)_{\lar{r}_i}}
\prod_{r=1}^{n-1}\prod_{i=1}^{\ell_r} 
\frac{(A_{r,n}-k_{n-1}+k_n-i+1)_{\lar{r}_i}}
{(A_{r,n}+\beta_{n-1}-i)_{\lar{r}_i}} \\[1mm]
&\qquad\times
\prod_{r=2}^{n-1}\prod_{i=1}^{k_n}
\frac{(A_{r,n+1}+k_{r-1}-k_r-i+1)_{\lar{n}_i}}
{(A_{r,n+1}-\ell_r-i+1)_{\lar{n}_i}}
\prod_{i=1}^{k_n}
\frac{(A_{n,n+1}+k_{n-1}-k_n-i+1)_{\lar{n}_i}}
{(A_{n,n+1}+\beta_n-i)_{\lar{n}_i}},
\end{align*}
where $\tar{0}:=0$, $k_0=k_{n+1}:=0$, $\ell_1=k_1$, $\ell_n=k_n$,
$\ell_r$ for $2\leq r\leq n-1$ are arbitrary nonnegative integers
such that $\ell_r\geq l(\lar{r})$, and the $A_{r,s}$ are defined as in 
\eqref{Eq_A-def}.
The more general average
\[
\bigg\langle \bigg(\prod_{r=1}^n 
s_{\lar{r}}\big[\tar{r}-\tar{r-1}\big] \bigg)
s_{\lar{n}}\big[\tar{n}\big]
\bigg\rangle_{\alpha_1,\dots,\alpha_n,\beta_{n-1},\beta_n}^{k_1,\dots,k_n}
\]
appears not to have a similarly simple evaluation.
For example, for $n=2$ and $\beta_1+\beta_2=\gamma+1$ it follows that
\begin{align*}
&\bigg\langle 
P^{(1/\gamma)}_{(u_1)}\big[t_1\big]
P^{(1/\gamma)}_{(u_2)}\big[t_2-t_1\big]
P^{(1/\gamma)}_{(u_3)}\big[t_2\big]
\bigg\rangle_{\alpha_1,\alpha_2,\beta_1,\beta_2;\gamma}^{1,1} \\[2mm]
&\qquad =
\frac{(\alpha_1)_{u_1}(\alpha_2)_{u_2+u_3}
(\alpha_1+\alpha_2-\gamma)_{u_1+u_2+u_3}}
{(\alpha_1+\beta_1-\gamma)_{u_1}(\alpha_2+\beta_2-\gamma)_{u_2+u_3}
(\alpha_1+\alpha_2)_{u_1+u_2+u_3}} \\[2mm]
&\quad\qquad \times 
{_4F_3}\bigg(\genfrac{}{}{0pt}{}
{-\gamma,\alpha_1+u_1,-\alpha_2+\beta_1-u_2-u_3,-u_2}
{1-\gamma-u_2,\alpha_1+\beta_1-\gamma+u_1,1-\alpha_2-u_2-u_3};1\bigg).
\end{align*}
For $\gamma=1$ this does not vanish when $u_2=0$, but instead yields the
non-uniform expression
\begin{align*}
&\Big\langle s_{(u_1)}\big[t_1\big] s_{(u_2)}\big[t_2-t_1\big]
s_{(u_3)}\big[t_2\big] \Big\rangle_{\alpha_1,\alpha_2,\beta_1,\beta_2}^{1,1} 
\\[2mm]
&\qquad =
\begin{cases}
\displaystyle 
\frac{(\alpha_1)_{u_1}(\alpha_2)_{u_3}(\alpha_1+\alpha_2-1)}
{(\alpha_1+\beta_1-1)_{u_1}(\alpha_2+\beta_2-1)
(\alpha_1+\alpha_2-1+u_1+u_3)} & \text{if $u_2=0$}, \\[5mm]
\displaystyle 
\frac{(\alpha_1)_{u_1}(\alpha_2)_{u_2+u_3-1}
(\alpha_1+\alpha_2-1)(\beta_1-1)}
{(\alpha_1+\beta_1-1)_{u_1+1}(\alpha_2+\beta_2-1)_{u_2+u_3}} 
& \text{if $u_2\geq 1$}.
\end{cases}
\end{align*}

\subsection{An elliptic Selberg and AFLT integral for $\boldsymbol{\A_n}$}

Theorem~\ref{Thm_eAFLT} gives an elliptic generalisation of 
the AFLT integral \eqref{Eq_AFLT}. This integral includes the elliptic
Selberg integral as special case.
Two obvious open problems are to generalise the $\A_n$ Selberg 
integral \eqref{Eq_An-Selberg} to the elliptic level and to then further 
extend this to an elliptic analogue of the $\A_n$ AFLT integral
\eqref{Eq_An-AFLT}.
We intend to address these problems in a future paper \cite{ARW}.

\end{document}